\theoremstyle{definition}
\newtheorem{thm}{\textbf{Theorem}}
\newtheorem{cor}[thm]{\textbf{Corollary}}
\newtheorem{lem}[thm]{\textbf{Lemma}}
\newtheorem{prop}[thm]{\textbf{Proposition}}
\newcommand{\Capacity}{\mathcal{C}}
\newcommand{\C}{\mathcal{C}}
\newcommand{\Dcal}{D}
\newcommand{\Xcal}{\mathcal{X}}
\newcommand{\Zcal}{\mathcal{Z}}
\newcommand{\Fcal}{\mathcal{F}}
\newcommand{\Chi}{\Xcal}
\newcommand{\tL}{\Lambda}
\newcommand{\X}{X}
\newcommand{{\main}}{\hbox{\textsl{main}}}
\newcommand{\shift}{\hbox{shift}}
\newcommand{\E}{\mathbb{E}}
\newcommand{\PP}{\mathbb{P}}
\newcommand{\QQ}{\mathbb{Q}}
\newcommand{\intersect}{\cap}
\newcommand{\union}{\cup}
\newcommand{\eqs}[1]{\begin{equation*} #1 \end{equation*}}
\newcommand{\eqq}[1]{\begin{equation} #1 \end{equation}}
\newcommand{\alge}[1]{\begin{align} #1 \end{align}}
\newcommand{\algg}[1]{\begin{align*} #1 \end{align*}}
\newcommand{\cjrb}{C_{j,R}}
\newcommand{\cjrbh}{C_{j,R,h}}
\newcommand{\rr}{r}
\newcommand{\+}{\!+\!}
\newcommand{\rhalf}{r_0}
\newcommand{\sent}{sent}
\newcommand{\others}{other}
\newcommand{\trn}{\scriptscriptstyle\mathrm{T}}
\newcommand{\gcoef}{G^{coef}}
\newcommand{\M}{M}
\newcommand{\Rbb}{\mathbb{R}}
\newcommand{\cspc}{\mathcal{B}}
\newcommand{\fkup}[1]{\hat f^{up}_{#1}}
\newcommand{\aq}[1]{A_{#1}}
\newcommand{\af}[1]{B_{#1}}
\newcommand{\ah}[1]{D_{#1}}
\newcommand{\aee}[1]{E_{#1}}
\newcommand{\aeet}[1]{\tilde{E}_{#1}}
\newcommand{\hatdwght}{\hat{\delta}_{wght}}
\newcommand{\dwght}{\delta_{wght}}
\begin{document}

\title{Fast Sparse Superposition Codes have\\ Exponentially Small Error Probability for $R < \C$}


\author{Antony~Joseph,~\IEEEmembership{Student~Member,~IEEE,}
        and~Andrew~R~Barron,~\IEEEmembership{Senior~Member,~IEEE}
\thanks{Andrew R. Barron and Antony Joseph
are with the Department of Statistics, Yale University, New Haven, CT 06520 USA
e-mail:  \{andrew.barron,~antony.joseph\}@yale.edu.

Summary \cite{barron2010toward}, \cite{barron2011analysis} of this paper was presented at the IEEE International Symposium
on Information Theory, 2010, 2011 respectively.
 }


}


\maketitle

\begin{abstract}
For the additive white Gaussian noise channel with average codeword power constraint, sparse superposition codes are developed. These codes are based on the statistical high-dimensional regression framework. The paper [\textit{IEEE Trans. Inform. Theory} 55 (2012), 2541 -- 2557] investigated decoding using the optimal maximum-likelihood decoding scheme. Here a fast decoding algorithm, called  \textit{adaptive successive decoder}, is developed. 
For any rate $R$ less than the capacity $\C$ communication is shown to be reliable with exponentially small error probability.
\vspace {-.4cm}
\end{abstract}

\begin{IEEEkeywords}
gaussian channel, multiuser detection, successive cancelation decoding, error exponents, achieving channel capacity, subset selection, compressed sensing, greedy algorithms, orthogonal matching pursuit.
\end{IEEEkeywords}

\section{Introduction}
\label{sec:intro}
The additive white Gaussian noise channel is basic to Shannon theory and underlies practical communication models.
Sparse superposition codes for this channel was developed in \cite{barron2010joseph}, where reliability bounds for the  optimal maximum-likelihood decoding were given. The present work provides comparable bounds for our fast adaptive successive decoder. 

In the familiar communication setup, an encoder maps length $K$ input bit strings $u\!=\!(u_1,u_2,\ldots,u_K)$ into codewords, which are length $n$ strings of real numbers $c_1,c_2,\ldots,c_n$, with power $(1/n)\sum_{i=1}^n c_i^2$. After transmission through the Gaussian channel, the received string $Y = (Y_1,\, Y_2,\ldots, Y_n)$ is modeled by,
$$Y_i = c_i + \epsilon_i\quad\text{for $i = 1,\ldots,n$},$$
where the $\epsilon_i$ are i.i.d. $N(0,\sigma^2)$. The decoder produces an estimates $\hat u$ of the input string $u$, using knowledge of the received string $Y$ and the codebook. 
The decoder makes a block error if $\hat{ u} \,\neq\, u$. The reliability requirement is that, with sufficiently large $n$, that  the block error probability is small, when averaged over input strings $u$ as well as the distribution of $Y$. The communication rate $R = K/n$ 
is the ratio of the number of message bits to the number of uses of the channel required to communicate them.

The supremum of reliable rates of communication is the channel capacity $\Capacity \!=\! (1/2) \log_2 (1 \+ P/\sigma^2)$, by traditional information theory 
\cite{shannon2001mathematical},\cite{cover1991elements}. Here $P$ expresses a control on the codeword power.
For practical coding the challenge is to achieve arbitrary rates below the capacity, while guaranteeing reliable decoding in manageable computation time.

Solution to the Gaussian channel coding problem, when married to appropriate modulation schemes, is regarded as relevant to myriad settings involving transmission over wires or cables for internet, television, or telephone communications or in wireless radio, TV, phone, satellite or other space communications.

Previous standard approaches, as discussed in \cite{forney1998modulation}, entail a decomposition into separate problems of modulation, of shaping of a multivariate signal constellation, and of coding.  As they point out, though there are practical schemes with empirically good performance, theory for practical schemes achieving capacity is lacking.  In the next subsection we describe the framework of our codes. 
\subsection{Sparse Superposition Codes}\label{sparsesuper}
 The framework here is as introduced in \cite{barron2010joseph}, but for clarity we describe it again in brief.
  The story begins with  a list $\X_1,\X_2,\ldots,\X_N$ of vectors, each with $n$ coordinates, which can be thought of as organized into a \textit{design}, or \textit{dictionary}, matrix $X$, where,
  $$X_{n \times N} = [X_1 : X_2 : \ldots : X_N].$$
  The entries of $X$ are drawn i.i.d. $N(0,1)$.
 The codeword vectors take the form of particular linear combinations of columns of the design matrix.

  More specifically, we assume $N = LM$, with $L$ and $M$ positive integers, and the design matrix $X$ is split into $L$ sections, each of size $M$. The codewords are of the form $X\beta$, where each $\beta \in \Rbb^N$   belongs to the set
  \algg{\cspc =\{\beta :\,\, &\text{$\beta$ has exactly one non-zero in each section,}\\
                          &\text{with value in section $\ell$ equal to $\sqrt{P_{(\ell)}}$}\}.}
This is depicted in figure \ref{fig:matpic}. The values $P_{(\ell)}$, for $\ell = 1,\,\ldots, L$, chosen beforehand, are positive and satisfy
  \eqq{\sum_\ell^L P_{(\ell)}=P,\label{eq:powersum}}
  where recall that $P$ is the power for our code.

   The received vector is in accordance with the statistical linear model $Y=X\beta+ \varepsilon,$
   where $\varepsilon$ is the noise vector distributed N($0,\sigma^2 I$).

\begin{figure}
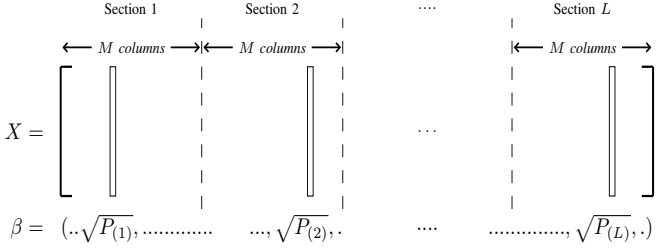

\begin{pgfpicture}{0cm}{0cm}{12cm}{3.5cm}
\begin{pgfmagnify}{.75}{.86}
\begin{pgftranslate}{\pgfpoint{.7cm}{0cm}}
\pgfputat{\pgfxy( -0.1 , 1.2 )}{\begin{pgfrotateby}{\pgfdegree{0}}\pgfbox[right,center]{$X = $}\end{pgfrotateby}}
\pgfsetlinewidth{1pt}
\scriptsize
\pgfline{\pgfxy( 0.2 , 0.2 )}{\pgfxy( 0.4 , 0.2 )}
\pgfline{\pgfxy( 0.2 , 0.2 )}{\pgfxy( 0.2 , 2.2 )}
\pgfline{\pgfxy( 0.2 , 2.2 )}{\pgfxy( 0.4 , 2.2 )}
\pgfsetlinewidth{0}
\pgfrect[stroke]{\pgfxy( 1.075 , 0.2 )}{\pgfpoint{ 0.1 cm}{ 2 cm}}
\pgfrect[stroke]{\pgfxy( 4.575 , 0.2 )}{\pgfpoint{ 0.1 cm}{ 2 cm}}
\pgfrect[stroke]{\pgfxy( 9.925 , 0.2 )}{\pgfpoint{ 0.1 cm}{ 2 cm}}
\pgfsetendarrow{\pgfarrowto}
\pgfsetendarrow{}
\pgfsetendarrow{}
\pgfputat{\pgfxy( 6.7 , 1.2 )}{\begin{pgfrotateby}{\pgfdegree{0}}\pgfbox[center,center]{$\ldots$}\end{pgfrotateby}}
\pgfsetlinewidth{1pt}

\pgfline{\pgfxy( 10.7 , 0.2 )}{\pgfxy( 10.7 , 2.2 )}
\pgfline{\pgfxy( 10.7 , 2.2 )}{\pgfxy( 10.5 , 2.2 )}
\pgfline{\pgfxy( 10.7 , 0.2 )}{\pgfxy( 10.5 , 0.2 )}
\pgfsetlinewidth{.8pt}
\pgfsetendarrow{\pgfarrowlargepointed{2pt}}
\pgfline{\pgfxy( 0.75 , 2.5 )}{\pgfxy( 0.25 , 2.5 )}
\pgfputat{\pgfxy( 1.45 , 2.5 )}{\begin{pgfrotateby}{\pgfdegree{0}}\pgfbox[center,center]{$\M$ \textit{columns}}\end{pgfrotateby}}
\pgfline{\pgfxy( 2.15 , 2.5 )}{\pgfxy( 2.65 , 2.5 )}
\pgfsetendarrow{}

\pgfsetendarrow{\pgfarrowlargepointed{2pt}}
\pgfline{\pgfxy( 3.25 , 2.5 )}{\pgfxy( 2.75 , 2.5 )}
\pgfputat{\pgfxy( 3.95 , 2.5 )}{\begin{pgfrotateby}{\pgfdegree{0}}\pgfbox[center,center]{$\M$ \textit{columns}}\end{pgfrotateby}}
\pgfline{\pgfxy( 4.65 , 2.5 )}{\pgfxy( 5.15 , 2.5 )}
\pgfsetendarrow{}

\pgfsetendarrow{\pgfarrowlargepointed{2pt}}
\pgfline{\pgfxy( 8.75 , 2.5 )}{\pgfxy( 8.25 , 2.5 )}
\pgfputat{\pgfxy( 9.45 , 2.5 )}{\begin{pgfrotateby}{\pgfdegree{0}}\pgfbox[center,center]{$\M$ \textit{columns}}\end{pgfrotateby}}
\pgfline{\pgfxy( 10.15 , 2.5 )}{\pgfxy( 10.65 , 2.5 )}
\pgfsetendarrow{}
\pgfputat{\pgfxy( 1.45 , 3.1 )}{\begin{pgfrotateby}{\pgfdegree{0}}\pgfbox[center,center]{Section 1}\end{pgfrotateby}}
\pgfputat{\pgfxy( 3.95 , 3.1 )}{\begin{pgfrotateby}{\pgfdegree{0}}\pgfbox[center,center]{Section 2}\end{pgfrotateby}}

\pgfputat{\pgfxy( 6.7 , 3.1 )}{\begin{pgfrotateby}{\pgfdegree{0}}\pgfbox[center,center]{$....$}\end{pgfrotateby}}
\pgfputat{\pgfxy( 9.45 , 3.1 )}{\begin{pgfrotateby}{\pgfdegree{0}}\pgfbox[center,center]{Section $L$} \end{pgfrotateby}}
\pgfsetlinewidth{0}
\pgfsetdash{{0.2cm}{0.2cm}}{0cm}
\pgfline{\pgfxy( 2.7 , 0 )}{\pgfxy( 2.7 , 2.95 )}
\pgfline{\pgfxy( 5.2 , 0 )}{\pgfxy( 5.2 , 2.95 )}
\pgfline{\pgfxy( 8.2 , 0 )}{\pgfxy( 8.2 , 2.95 )}
\normalsize

\color{black}
\pgfputat{\pgfxy( -0.1 , -0.3 )}{\begin{pgfrotateby}{\pgfdegree{0}}\pgfbox[right,center]{$\beta = $}\end{pgfrotateby}}
\pgfputat{\pgfxy( 0.2 , -0.3 )}{\begin{pgfrotateby}{\pgfdegree{0}}\pgfbox[left,center]{$(..\sqrt{P_{(1)}},.............$}\end{pgfrotateby}}
\pgfputat{\pgfxy( 5.2 , -0.3 )}{\begin{pgfrotateby}{\pgfdegree{0}}\pgfbox[right,center]{$...,\sqrt{P_{(2)}},.$}\end{pgfrotateby}}
\pgfputat{\pgfxy( 6.7 , -0.3 )}{\begin{pgfrotateby}{\pgfdegree{0}}\pgfbox[center,center]{$....$}\end{pgfrotateby}}
\pgfputat{\pgfxy( 10.7 , -0.3 )}{\begin{pgfrotateby}{\pgfdegree{0}}\pgfbox[right,center]{$..............,\sqrt{P_{(L)}},.)$}\end{pgfrotateby}}
\end{pgftranslate}
\end{pgfmagnify}
\end{pgfpicture}
\caption{Schematic rendering of the dictionary matrix $X$ and coefficient vector $\beta$. The vertical bars in the $X$ matrix indicate the selected columns from a section.}
\label{fig:matpic}
\end{figure}

Accordingly, with the $P_{(\ell)}$ chosen to satisfy \eqref{eq:powersum}, we have $\|\beta\|^2 = P$ and hence, $\E\|X\beta\|^2/n = P$, for each $\beta$ in $\cspc$. Here $\|. \|$ denotes the usual Euclidian norm. Thus the expected codeword power is controlled to be equal to $P$. Consequently, most of the codewords have power near $P$ and the average power across the $\M^L$ codewords, given by,
$$\frac{1}{\M^L}\sum_{\beta \in \cspc} \|X\beta\|^2/n$$
is concentrated at $P$.

Here, we study both the case of constant power allocation, where each $P_{(\ell)}$ is equal to $P/L$, and a variable power allocation where $P_{(\ell)}$ is proportional to $e^{-2\,{\Capacity}\,\ell/L}$. These variable power allocations are used in getting the rate up to capacity. This is a slight difference from the setup in \cite{barron2010joseph}, where the analysis was for the constant power allocation.

For ease in encoding, it is most convenient  that the section size $\M$ is a  power of two. Then
an input bit string $u$ of length $K\!=\! L \log_2 \M$ splits into $L$ substrings of size $\log_2 \M$ and the encoder becomes trivial. Each substring of $u$ gives the index (or memory address) of the term to be sent from the corresponding section.

As we have said, the rate of the code is $R=K/n$ input bits per channel uses and we arrange for arbitrary $R$ less than ${\Capacity}$.  For the partitioned superposition code this rate is $$R= \frac{L \log \M}{n}.$$ For specified $L$, $\M$ and $R$, the codelength $n=(L/R) \log \M$. Thus the block length $n$ and the subset size $L$ agree to within a log factor. 

Control of the dictionary size is critical to computationally advantageous coding and decoding. 
At one extreme, $L$ is a constant, and section size $\M\!=\!2^{nR/L}$. However, its size, which is exponential in $n$, is impractically large.  At the other extreme $L = nR$ and $\M = 2$. However, in this case the number of non-zeroes of $\beta$ proves to be too dense to permit reliable recovery at rates all the way up to capacity. This can be inferred from recent converse results on information-theoretic limits of subset recovery in regression (see for eg. \cite{wainwright2009information}, \cite{akçakaya2010shannon}).


Our codes lie in between these extremes. We allow $L$ to agree with the blocklength $n$ to within a $\log$ factor, with $\M$ arranged to be polynomial in $n$ or $L$. For example, we may let $\M = n$, in which case $L = nR/\log n$, or we may set $\M = L$, making $n = (L\log L)/R$.
For the decoder we develop here, at rates below capacity, the error probability is also shown to be exponentially small in $L$.


Optimal decoding for minimal average probability of error consists of finding the codeword $X\beta$ with coefficient vector $\beta \in \cspc$ that maximizes the posterior probability, conditioned on $X$ and $Y$.  This coincides, in the case of equal prior probabilities, with the maximum likelihood rule of seeking
$$\arg \min_{\beta \in \cspc} \|Y - X\beta\|.$$
 Performance bounds for such optimal, though computationally infeasible, decoding are developed in the companion paper \cite{barron2010joseph}.  Instead, here we develop fast algorithms for which we can still establish desired reliability and rate properties. We describe the intuition behind the algorithm in the next section. Section \ref{sec:modificabovealgo} describes the algorithm in full detail.

\subsection{Intuition behind the algorithm}\label{subsec:intuialgo}

From the received $Y$ and knowledge of the dictionary, we decode which terms were sent by an iterative algorithm. Denote as $$\sent = \{j : \beta_j \neq 0\} \quad\quad
 \text{and}\quad\quad \others = \{j : \beta_j = 0\}.$$
 The set $\sent$ consists of one term from each section, and denotes the set of correct terms, while $\others$ denotes the set of wrong terms.
 We now give a high-level description of the algorithm.

 The first step is as follows. For each term $X_j$ of the dictionary, compute the normalized inner product with the received string $Y$, given by,
 $$\Zcal_{1,j} = \frac{X_j^T Y}{\|Y\|},$$
 and see if it exceeds a positive threshold $\tau$. 


The idea of the threshold $\tau$ is that very few of the terms in $\others$ will be above threshold. Yet a positive fraction of the terms in $sent$ will be above threshold, and hence, will be correctly decoded on this first step. Denoting as $J = \{1,\,2,\ldots,\, N\}$,  take $$dec_1 = \{j \in J: \Zcal_{1,j} \geq \tau\}$$ as the set of terms detected in the first step. 


Denoting $P_j = P_{(\ell)}$ if $j$ is in section $\ell$, the
 output of the first step consists of the set of decoded terms $dec_1$ and the vector $$F_1 = \sum_{j \in dec_1} \sqrt{P_j} \, X_j, $$ which forms the first part of the fit. 
The set of terms investigated in step 1 is $J_1=J$, the set of all columns of the dictionary. Then the set $J_2=J_1 -dec_1$ remains for second step consideration.  In the extremely unlikely event that $dec_1$ is already at least $L$ there will be no need for the second step.

For the second step, compute the residual vector
$$R_2\,=\, Y\,- F_1.$$
For each of the remaining terms, that is terms in $J_2$, compute the  normalized inner product  $$\Zcal_{2,j}^{res}=\frac{X_j^T R_2 }{\|R_2\|},$$ which is compared to the same threshold $\tau$. Then $dec_2$, the  set of decoded terms for the second step, is chosen in a manner similar to that in the first step. In other words, 
we take $$dec_2 = \{j \in J_2 : \Zcal_{2,j}^{res} \geq \tau\}.$$
From the set $dec_2$, compute the fit $F_2 = \sum_{j \in dec_2}\sqrt{P_j} \, X_j$, for the second step.

The third and subsequent steps would proceed in the same manner as second step. For any step $k$, we are only interested in
$$J_k = J - dec_1\cup dec_2 \ldots\cup dec_{k-1},$$ that is, terms not decoded previously.
One first computes the residual vector $R_k = Y - (F_1 + \ldots + F_{k-1})$. Accordingly, for terms in $J_k$, we take $dec_k$ as the set of terms for which $\Zcal_{k,j}^{res} = X_j^T R_k \, /\|R_k\|$ is above $\tau$. 


We arrange the algorithm to continue until at most a pre-specified number of steps $m$, arranged to be of the order of $\log\M$. The algorithm could stop before $m$ steps if either there are no terms above threshold, or if $L$ terms have been detected. Also, if in the course of the algorithm, two terms are detected in a section, then we declare an error in that section.

Ideally, the decoder selects one term from each section, producing an output which is the index of the selected term. For a particular section, there are three possible ways a mistake could occur when the algorithm is completed. The first is an \textit{error}, in which the algorithm selects exactly one wrong term in that section. The second case is when two or more terms are selected, and the third is when no term is selected. We call the second and third cases \textit{erasures} since we know for sure that in these cases an error has occurred. Let $\hat{\delta}_{mis,error},\,\hat{\delta}_{mis,erasure}$ denote the fraction of sections with error, erasures respectively.
Denoting the section mistake rate,
\eqq{\hat{\delta}_{mis} =  2\,\hat{\delta}_{mis,error}\, + \,\hat{\delta}_{mis,erase}\label{eq:hatdeltamis},}
our analysis provides a good bound, denoted by $\delta_{mis}$, on $\hat{\delta}_{mis}$ that is satisfied with high probability.  

The algorithm we analyze, although very similar in spirit, is a modification of the above algorithm. The modifications are made so as to help characterize the distributions of $\Zcal_{k,j}^{res}$, for $k \geq 2$. These are described in section \ref{sec:modificabovealgo}. For ease of exposition, we first summarize the results using the modified algorithm.


\subsection{Performance of the algorithm}\label{subsec:performadapt}

With constant power allocation, that is with $P_{(\ell)} = P/L$ for each $\ell$, the decoder is shown to reliably achieve rates up to a threshold rate $R_{0} = (1/2)P/(P\+\sigma^2)$, which is less than capacity. This rate $R_{0}$  is seen to be close to the capacity when the signal-to-noise ratio $snr$ is low. However, since it is bounded by $1/2$, it is substantially less than the capacity for larger $snr$. To bring the rate higher, up to capacity, we use variable power allocation with power
\eqq{P_{(\ell)}\propto e^{-2\,\Capacity \ell/L},\label{eq:powerweuse}} for sections $\ell$ from $1$ to $L$.


As we shall review, such power allocation also would arise if one were attempting to successively decode one section at a time, with the signal contributions of as yet un-decoded sections treated as noise, in a way that splits the rate $\Capacity$ into $L$ pieces each of size ${\Capacity}/L$; however, such decoding would require the section sizes to be exponentially large to achieve desired reliability.  In contrast, in our adaptive scheme, many of the sections are considered each step.  

For rate near capacity, it helpful to use a modified power allocation, where
\eqq{P_{(\ell)} \propto  \max\{e^{-2\Capacity \frac{\ell-1}{L}}, u\},\label{eq:modpower}}
with a non-negative value of $u$. However, since its analysis is more involved we do not pursue this here. Interested readers may refer to documents \cite{barron2010ajoseph}, \cite{josephPhD} for a more thorough analysis including this power allocation.



\begin{figure}
\begin{center}
\includegraphics[width=3.00in]{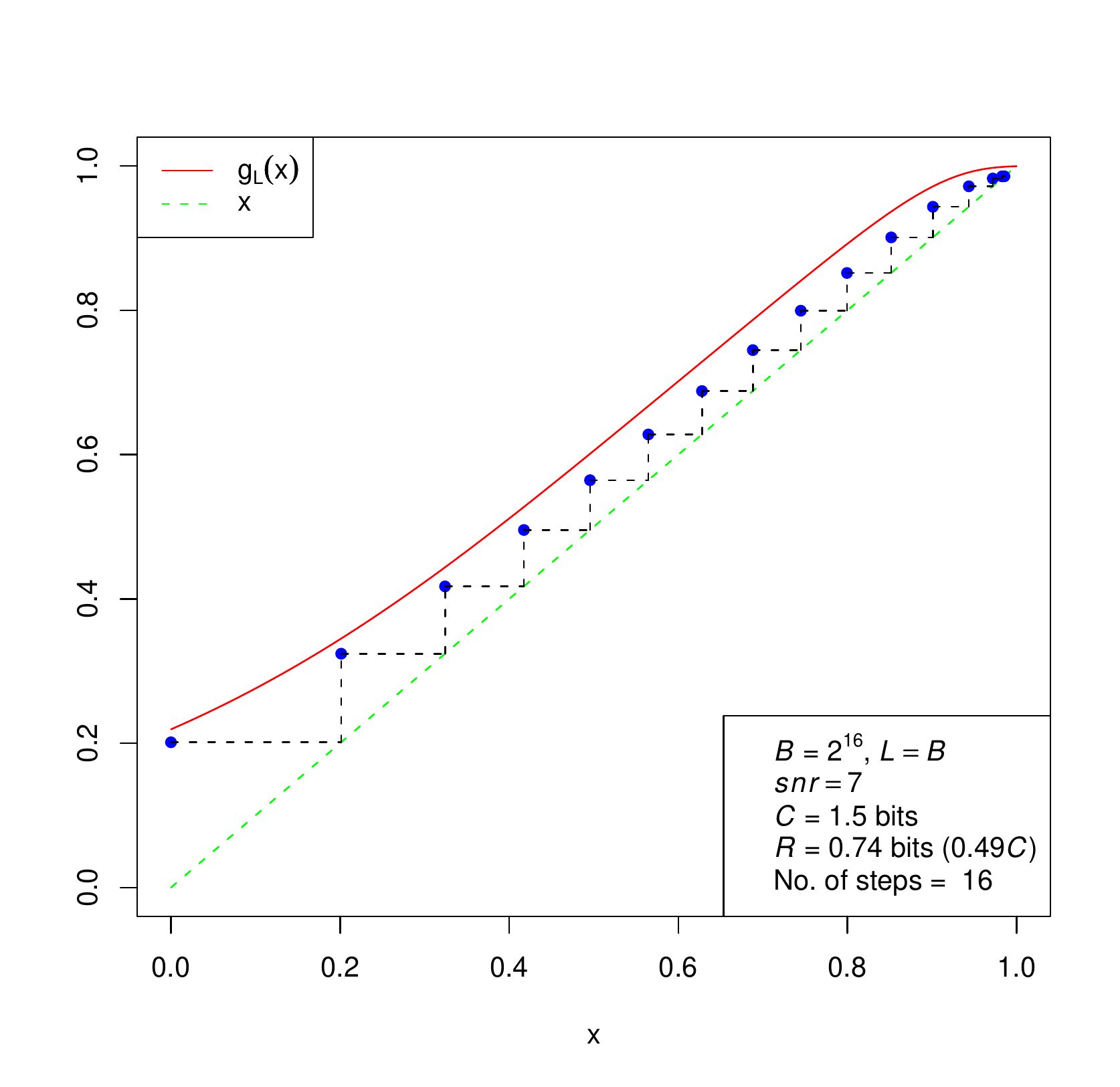}
\end{center}
\caption{Plot of the update function $g_L(x)$. The dots measure the proportion of sections correctly detected after a particular number of steps. Here $\M = 2^{16},\, snr = 7$, $R = 0.74$ and $L$ taken to be equal to $\M$. 
The height reached by the $g_L(x)$ curve at the final step corresponds to a $0.986$ proportion of section correctly detected, and a failed detection rate target of $0.013$.  The accumulated false alarm rate bound is $0.008$. 
The probability of mistake rates larger than these targets is bounded by $1.5 \times 10^{-3}$.
}
\label{fig:prog7}
\end{figure}
The analysis leads us to a function $g_L : [0,1] \rightarrow [0,1]$, which depends on the power allocation and the various parameters $L,\,\M,\,snr$ and $R$, that proves to be helpful in understanding the performance of successive steps of the algorithm. If $x$ is the previous success rate, then $g_L(x)$ quantifies the expected success rate after the next step. An example of the role of $g_L$ is shown in Fig \ref{fig:prog7}.


An outer Reed-Solomon codes completes the task of identifying the fraction of sections that have errors or erasures (see section VI of \citet{barron2010joseph} for details)   
so that we end up with a small block error probability. If $R_{outer} = 1 - \delta$ is the rate of an RS code, with  $0 < \delta < 1 $, then a section mistake rate $\hat{\delta}_{mis}$ less than $\delta_{mis}$ can be corrected, provided $\delta_{mis} < \delta$. Further, if $R$ is the rate associated with our inner (superposition) code, then the total rate after correcting for the remaining mistakes is given by $R_{tot} = R_{outer}R$. The end result, using our theory for the distribution of the fraction of mistakes of the superposition code, is that the block error probability is 
exponentially small. 
One may regard the composite code as a superposition code in which the subsets are forced to maintain at least a certain minimal separation, so that decoding to within a certain distance from the true subset implies exact decoding.


In the proposition below, we assume that the power allocation is given by \eqref{eq:powerweuse}. Further, we assume that the threshold $\tau$ is of the form,
\eqq{\tau = \sqrt{2\log \M} + a \label{eq:taudef},}
with a positive $a$ specified in subsection \ref{choicepara}.

We allow rate $R$ up to $\Capacity^*$, where $\Capacity^*$ can be written as $$\Capacity^* = \frac{\Capacity}{1 + drop^*}.$$
Here $drop^*$ is a positive quantity given explicitly later in this paper. It is near
\eqq{\delta_\M = \frac{1}{\sqrt{\pi \log \M}} \label{eq:deltam},}
ignoring terms of smaller order.

Thus $\Capacity^*$ is within order $1/\sqrt{\log \M}$ of capacity and tends to $\Capacity$ for large $\M$. With the modified power allocation \eqref{eq:modpower}, it is shown in \cite{barron2010ajoseph}, \cite{josephPhD} that one can make $\C^*$ of order $1/\log \M$ of capacity.


\begin{prop} \label{mainthm}
For any inner code rate $R < \Capacity^*$, express it in the form \eqq{R = \frac{\Capacity^*}{1 + \kappa/\log \M },\label{eq:ratedropcstar}}
with $\kappa \geq 0$. Then, for the partitioned superposition code,

\begin{enumerate}[I)]
\item The adaptive successive decoder admits fraction of section mistakes less than \eqq{\delta_{mis} = \frac{ 3\kappa + 5 }{8\Capacity \,\log \M} + \frac{\delta_\M}{2\C}  \label{eq:deltamisprac}}
except in a set of probability not more than
$$p_e = \kappa_1 e^{-\kappa_2L\,\min\left\{\kappa_3(\Delta^*)^2\,,\,\kappa_4(\Delta^*) \,\right\}},$$
where
$$\Delta^* = (\Capacity^* - R)/\Capacity^*.$$
Here  $\kappa_1$ is a constant to be specified later that is only polynomial in $\M$. Also, $\kappa_2,\, \kappa_3$ and $\kappa_4$ are constants that depend on the $snr$. See subsection \ref{proofmainthm} for details.  


\item After composition with an outer Reed Solomon code the decoder admits block error probability less than
$p_e$, with the composite rate being $R_{tot} = (1 - \delta_{mis})\,R$.

\end{enumerate}
\end{prop}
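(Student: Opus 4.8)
The plan is to track the decoder step by step through the distributions of the test statistics $\Zcal_{k,j}^{res}$, establish that the (power-weighted) success rate makes quantitative progress each step and concentrates around a deterministic prediction, and then pass from the resulting bound on the section mistake rate to the block error bound via standard Reed--Solomon decoding.

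\textbf{Step 1: Distribution of the test statistics.} First I would invoke the modifications of the algorithm from Section~\ref{sec:modificabovealgo} to pin down the conditional law of $\Zcal_{k,j}^{res}$ given the history of the first $k-1$ steps. For a wrong term $j\in\others$ not yet decoded, $\Zcal_{k,j}^{res}$ should be (conditionally) standard normal, or stochastically dominated by a standard normal, so that $\PP(\Zcal_{k,j}^{res}\ge\tau)\le\bar\Phi(\tau)$ with $\tau=\sqrt{2\log\M}+a$, a quantity of order $(1/\M)\,e^{-a\sqrt{2\log\M}}$. For a correct term $j\in\sent$ in section $\ell$, $\Zcal_{k,j}^{res}$ should be normal with a positive mean \emph{shift} that grows with $\sqrt{P_{(\ell)}}$ and with a signal-strength quantity that increases as more of the sent terms are decoded; the variable power allocation \eqref{eq:powerweuse} is precisely what makes these shifts line up so that each section eventually crosses threshold.

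\textbf{Step 2: The update function and the deterministic iteration.} Let $\hat q_k=(1/P)\sum_{\ell\in S_k}P_{(\ell)}$, where $S_k$ is the set of correctly decoded sections after step $k$, be the power-weighted success rate. Using Step~1, the conditional expectation of $\hat q_k$ given $\hat q_{k-1}=x$ is, up to lower-order corrections, $g_L(x)$, the update function of Fig.~\ref{fig:prog7}. I would show $g_L(x)>x$ with a quantitative gap on the relevant range whenever $R<\Capacity^*$, the gap being governed by $\Delta^*=(\Capacity^*-R)/\Capacity^*$, so that the deterministic recursion $x_0=0$, $x_k=g_L(x_{k-1})$ reaches a value $x^*$ with $1-x^*$ of the order of $\delta_{mis}$ after $m=O(\log\M)$ steps. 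In parallel I would bound the accumulated expected number of false alarms $\sum_{k\le m}|dec_k\cap\others|$ by $L$ times a small quantity, using the smallness of $\bar\Phi(\tau)$ together with $N=L\M$.

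\textbf{Step 3: Concentration.} The key probabilistic step is to show that the random $\hat q_k$ stays close to the deterministic iterate and the random false-alarm counts stay close to their means, with failure probability exponentially small in $L$. Since $\hat q_k$ is an average over the $L$ sections of bounded weighted indicators that are, conditionally on the history, independent (or nearly so) across sections, Bernstein-type bounds give deviations of order $e^{-cL\min\{\epsilon^2,\epsilon\}}$; a union bound over the $m=O(\log\M)$ steps, absorbing the $\log\M$ factor into the polynomial prefactor $\kappa_1$, yields $p_e=\kappa_1 e^{-\kappa_2 L\min\{\kappa_3(\Delta^*)^2,\kappa_4\Delta^*\}}$, the two terms of the minimum reflecting the two regimes of the Bernstein inequality. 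On the complementary event, combining the weighted-success bound with the false-alarm bound and converting the power-weighted fraction into the counting fraction of correctly detected sections (using the geometry of the allocation \eqref{eq:powerweuse}) gives $\hat{\delta}_{mis}\le\delta_{mis}$ with $\delta_{mis}$ as in \eqref{eq:deltamisprac}; here sections with no detected term contribute erasures while stray false alarms contribute at most the controlled error fraction. Controlling the cumulative effect of the data-dependent residuals over these $\log\M$ adaptive steps without the bounds degrading, and getting the exponent's dependence on $\Delta^*$ exactly right, is the part I expect to be the main obstacle — it is the technical heart of the argument.

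\textbf{Step 4: Reed--Solomon composition.} For part II, I would take an outer Reed--Solomon code over an alphabet of size at least $\M$ (identifying each section's selected index with one symbol), of block length $L$ and rate $R_{outer}=1-\delta_{mis}$, so its minimum-distance fraction is $\delta_{mis}$. Standard RS decoding corrects every pattern with $2\,\hat{\delta}_{mis,error}+\hat{\delta}_{mis,erase}=\hat{\delta}_{mis}<\delta_{mis}$, an event whose complement has probability at most $p_e$ by part~I; hence the composite block error probability is at most $p_e$, and the composite rate is $R_{tot}=R_{outer}R=(1-\delta_{mis})R$, as claimed.
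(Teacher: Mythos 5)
Your overall architecture --- conditional distributional analysis of the per-step statistics, a deterministic update function $g_L$ with a gap governed by $\Delta^*$, exponential concentration of the weighted detection and false-alarm fractions, and Reed--Solomon composition --- is the same as the paper's, and your Steps~2 and~4 line up with Lemmas~\ref{progress} and \ref{posgdiff} and the companion paper's RS argument. The genuine gap is in Step~1, and it is exactly the point you flag at the end of Step~3 as ``the main obstacle'': as written, the step would fail. Conditionally on the history of the first $k-1$ steps, the residual statistic $\Zcal_{k,j}^{res}=X_j^{\trn}R_k/\|R_k\|$ is \emph{not} Gaussian, and is not obviously stochastically dominated by a standard normal for $j\in\others$, because $R_k=Y-\sum_{j\in dec_{1,k-1}}\sqrt{P_j}\,X_j$ involves the selected columns and the selection event depends on the $X_j$'s themselves. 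The paper's resolution consists of three devices you do not supply: (i) replace $\Zcal_{k,j}^{res}$ by the combined statistic $\Zcal_{k,j}^{comb}=\sum_{k'\le k}\lambda_{k'}\Zcal_{k',j}$ built from the Gram--Schmidt components $\Zcal_{k,j}=X_j^{\trn}G_k/\|G_k\|$, whose conditional law given $\Fcal_{k-1}$ \emph{is} an explicit shifted normal (Lemmas~\ref{lem:firststepdist} and \ref{laterstepdist}); (ii) the ``pacing'' modification, which forces the weighted size of the decoded set after each step into a prescribed window $(q_{1,k}-1/L_\pi,\, q_{1,k}]$ so that the combination weights $\lambda_{k',k}$ can be taken deterministic --- otherwise they would be random functions of the history and the distributional lemma would not apply (this is the content of Lemma~\ref{lem:paccritlem}); and (iii) the nearby-measure Lemma~\ref{measurebdd2}, needed because even the cleaned-up noise terms $Z_{k,j}$ are correlated across $j$ (covariance $I-\delta_k\delta_k^{\trn}/P$), which costs a factor $e^{mc_0}$; this, not merely the union bound over $m$ steps, is the source of the polynomial-in-$\M$ prefactor $\kappa_1$.

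A second, smaller discrepancy: your concentration step invokes Bernstein bounds directly, whereas the paper bounds the weighted Bernoulli sums by a Sanov-type relative-entropy inequality (Lemma~\ref{bernoullisums}), then uses Pinsker for the detection term (giving the $2\eta^2$ exponent) and a Poisson lower bound on $D$ for the false-alarm term (giving $f^*\Dcal(\rho)$). The $\min\{(\Delta^*)^2,\Delta^*\}$ structure arises from how $f^*\Dcal(\rho)$ scales with $\kappa$ in the two regimes $\kappa\le 2/\omega$ and $\kappa>2/\omega$, rather than from the two regimes of Bernstein's inequality per se; the routes are morally equivalent, but the bookkeeping that produces the specific constants $\kappa_2,\kappa_3,\kappa_4$ of the statement goes through the entropy bounds together with the explicit parameter choices $\eta=(1/2)\kappa/(snr\log\M)$, $\rho=(1+\kappa\omega/2)^2$, and $h=\kappa/(2\log\M)^{3/2}$, which your sketch leaves unspecified.
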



The proof of the above proposition is given in subsection \ref{proofmainthm}. The following is an immediate consequence of the above proposition. 


\begin{cor}\label{cor:cor}
  For any fixed total  communication rate $R_{tot} < \C$, there exists a dictionary $X$ with size $N = L\M$ that is polynomial in block length $n$, for which the sparse superposition code, with \textit{adaptive successive decoding} (and outer Reed-Solomon code), admits block error probability $p_e$ that is exponentially small in $L$. In particular,
    $$\lim_{n \rightarrow \infty} \frac{1}{L}\log(1/p_e) \geq const \min\{\Delta,\, \Delta^2\},$$
    where here $\Delta = (\C - R_{tot})/\C$ and $const$ denotes a positive constant.
\end{cor}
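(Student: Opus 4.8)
The plan is to derive the corollary by choosing the free parameters $\M$, $L$, and $\kappa$ in Proposition \ref{mainthm} appropriately, given a fixed target total rate $R_{tot} < \C$. First I would fix a small gap parameter and note that, since $\C^* = \C/(1+drop^*)$ with $drop^*$ of order $\delta_\M = 1/\sqrt{\pi\log\M}$, and since the composite rate is $R_{tot} = (1-\delta_{mis})R$ with $\delta_{mis}$ of order $1/\log\M$ (from \eqref{eq:deltamisprac}), all the ``losses'' separating $R_{tot}$ from $\C$ are of order $1/\sqrt{\log\M}$ or smaller. Therefore, by taking $\M$ large enough (e.g.\ $\M = n$ or $\M = L$, keeping the dictionary size $N = L\M$ polynomial in $n$ since $n = (L/R)\log\M$), one can guarantee $\C^* > R_{tot}/(1-\delta_{mis})$ and hence express the inner rate $R = R_{tot}/(1-\delta_{mis})$ in the form \eqref{eq:ratedropcstar} with a legitimate $\kappa \geq 0$. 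The key point is that as $\M \to \infty$ with $R_{tot}$ fixed, we get $\C^* \to \C$, $\delta_{mis} \to 0$, and $R \to R_{tot}$, so the relative gap $\Delta^* = (\C^* - R)/\C^*$ converges to $\Delta = (\C - R_{tot})/\C$; in particular, for $\M$ sufficiently large, $\Delta^* \geq \Delta/2$, say.

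Next I would feed these choices into the probability bound of Proposition \ref{mainthm}, Part II): the block error probability after composition with the outer Reed--Solomon code is at most $p_e = \kappa_1 e^{-\kappa_2 L \min\{\kappa_3 (\Delta^*)^2, \kappa_4 \Delta^*\}}$, where $\kappa_1$ is only polynomial in $\M$ and $\kappa_2, \kappa_3, \kappa_4$ depend only on $snr$. Taking logarithms and dividing by $L$,
\[
\frac{1}{L}\log(1/p_e) \;\geq\; \kappa_2 \min\{\kappa_3 (\Delta^*)^2, \kappa_4 \Delta^*\} \;-\; \frac{\log \kappa_1}{L}.
\]
Since $\kappa_1$ is polynomial in $\M$ and $\M$ is polynomial in $n$ (hence in $L$, up to log factors), $\log\kappa_1 = O(\log L)$, so $(\log\kappa_1)/L \to 0$ as $n \to \infty$. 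Using $\Delta^* \geq \Delta/2$ and absorbing all $snr$-dependent constants into a single positive constant, we obtain $\liminf_{n\to\infty} \frac{1}{L}\log(1/p_e) \geq const \cdot \min\{\Delta, \Delta^2\}$, which is the claimed bound. One also must check that the outer RS code is applicable, i.e.\ that $\delta_{mis} < \delta$ for an admissible RS rate $R_{outer} = 1-\delta$; this is immediate because $\delta_{mis} \to 0$, so any fixed small $\delta$ (chosen consistently with the relation $R_{tot} = R_{outer} R$) works once $\M$ is large.

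The main obstacle — really the only delicate point — is the bookkeeping that ties together the three distinct $O(1/\sqrt{\log\M})$ or $O(1/\log\M)$ losses (the $drop^*$ defining $\C^*$, the mistake fraction $\delta_{mis}$ in the composite rate, and the implicit slack $\kappa/\log\M$ defining $R$ from $\C^*$) and confirms that they can be made simultaneously small while keeping $\kappa \geq 0$ and keeping $N = L\M$ polynomial in $n$. I would handle this by working backward: fix $R_{tot}$, pick $\M$ (as a polynomial in $L$) large enough that $R_{tot}/(1-\delta_{mis}(\M)) < \C^*(\M)$, which is possible since the left side $\to R_{tot}$ and the right side $\to \C$; this determines a valid $\kappa = \kappa(\M)$, and then let $L \to \infty$ (equivalently $n \to \infty$). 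The remaining manipulation — bounding $\Delta^*$ from below by a constant multiple of $\Delta$ and noting $\log\kappa_1 = o(L)$ — is routine.
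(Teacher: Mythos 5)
Your overall strategy is the same as the paper's: deduce the corollary from Proposition \ref{mainthm} by taking $\M$ polynomial in $L$ and large enough, solving for an admissible $\kappa$, and translating the exponent $\min\{\kappa_3(\Delta^*)^2,\kappa_4\Delta^*\}$ into $\min\{\Delta^2,\Delta\}$; your observation that $\log\kappa_1=O(\log\M)=o(L)$ is also correct. The problem is in the step you yourself flag as the only delicate one. You claim that, with $R_{tot}$ fixed, $\delta_{mis}\to 0$ and $R\to R_{tot}$ as $\M\to\infty$, hence $\Delta^*\to\Delta$ and in particular $\Delta^*\ge \Delta/2$ eventually. This is false. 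By \eqref{eq:deltamisprac}, $\delta_{mis}$ contains the term $3\kappa/(8\C\log\M)$, and the consistency requirement $R_{tot}=(1-\delta_{mis})\,\C^*/(1+\kappa/\log\M)$ forces $x=\kappa/\log\M$ to converge not to $0$ but to the positive root of $(1-\Delta)(1+x)=1-3x/(8\C)$, namely $x_\infty=\Delta/(1-\Delta+3/(8\C))$. Consequently $\delta_{mis}\to 3x_\infty/(8\C)>0$, and $\Delta^*=x/(1+x)$ converges to $\Delta/(1+3/(8\C))$, which is strictly below $\Delta$ and falls below $\Delta/2$ whenever $\C<3/8$. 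In short, $\kappa/\log\M$ is not one of the vanishing losses: it is precisely the slack that must absorb the fixed gap $\Delta$, so it cannot tend to zero.

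The damage is contained, because all that is needed is $\Delta^*\ge c(snr)\,\Delta$ for all large $\M$, with $c(snr)>0$ allowed to depend on the signal-to-noise ratio --- the corollary's $const$ already absorbs the $snr$-dependent $\kappa_2,\kappa_3,\kappa_4$. That is exactly how the paper argues: it writes $\Delta=\zeta_1\delta_\M+\zeta_2\kappa/\log\M$, chooses $\kappa$ so that the second term dominates the first, and concludes that $\Delta$ and $\kappa/\log\M$ (hence $\Delta^*$) agree to within constant factors, rather than asserting that they converge to one another. Replacing your ``$\Delta^*\ge\Delta/2$'' with this constant-factor comparison repairs the argument.
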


The above corollary follows since  if  $\kappa$ is of order $\sqrt{\log \M}$, then $\Delta^*$ is of order $1/\sqrt{\log \M}$. Further, as $\C^*$ is of order $1/\sqrt{\log M}$ below the capacity $\C$, we also get that $\Delta_{inner} = (\C - R)/\C$ is also $1/\sqrt{\log \M}$ below capacity.
From the expression for $\delta_{mis}$ in \eqref{eq:deltamisprac}, one sees that the same holds for the total rate drop, that is $\Delta = (\C - R_{tot})/\C$. A more rigorous proof is given in subsection \ref{proofmainthm}.

\subsection{Comparison with Least Squares estimator}\label{subsec:lstcomparison}

Here we compare the rate achieved here by our practical decoder with what is achieved with the theoretically optimal, but possibly impractical, least squares decoding of these sparse superposition codes shown in the companion paper \cite{barron2010joseph}.


Let $\Delta = (\Capacity \!-\!R)/\Capacity$ be the rate drop from capacity, with $R$ not more than $\Capacity$. The rate drop $\Delta$ takes values between 0 and 1. With power allocated equally across sections, that is with $P_{(\ell)} = P/L$, it was shown in \cite{barron2010joseph} that for any $\delta_{mis} \in [0,1)$, the probability of more than a fraction $\delta_{mis}$ of mistakes, with least squares decoding, is less than $$\exp\{-nc_1\min\{\Delta^2,\delta_{mis}\}\},$$ for any positive rate drop $\Delta$ and any size $n$. The error exponent for the above is correct, in terms of orders of magnitude, to the theoretically best possible error probability for any decoding scheme, as established by Shannon and Gallager, and reviewed for instance in \cite{polyanskiy2010channel}.

The bound obtained for the least squares decoder is better than that obtained for our practical decoder in its freedom of any choice of mistake fraction, rate drop and size of the dictionary matrix $X$.
Here, we allow for rate drop $\Delta$ to be of order $1/\sqrt{\log \M}$. Further, from the expression \eqref{eq:deltamisprac}, we have $\delta_{mis}$ is of order $1/\sqrt{\log \M}$, when $\kappa$ is taken to be of $O(\sqrt{\log\M})$. Consequently, we compare the error exponents obtained here with that of the least squares estimator of \cite{barron2010joseph}, when both $\Delta$ and $\delta_{mis}$ are of order $1/\sqrt{\log \M}$.

Using the expression given above for the least squares decoder one sees that the exponent is of order $n/(\log \M)$, or equivalently $L$, using $n = (L\log \M)/R$.
For our decoder, the error probability bound is seen to be exponentially small in $L/(\log \M)$ using the expression given in Proposition \ref{mainthm}.  This bound is within a $(\log \M)$ factor of what we obtained for the optimal least squares decoding of sparse superposition codes.

\subsection{Related work in Coding}

We point out several directions of past work that connect to what is developed here.
Modern day communication schemes, for example LDPC \cite{gallager1962low} and Turbo Codes \cite{berrou1993near}, have been demonstrated to have empirically good performance. However, a mathematical theory for their reliability is restricted only to certain special cases, for example erasure channels \cite{luby2001efficient}.

These LDPC and Turbo codes use message passing algorithms for their decoding. Interestingly, there has been recent work by \citet{bayati2010lasso} that has extended the use of these algorithms for estimation in the general high-dimensional regression setup with Gaussian $X$ matrices. Unlike our adaptive successive decoder, where we decide whether or not to select a particular term in a step, these iterative algorithms make soft decisions in each step.
However, analysis addressing rates of communication have not been given these works.
Subsequent to the present work, an alternative algorithm with soft-decision decoding for our partitioned superposition codes is proposed and analyzed by \citet{chobarron}. 

A different approach to reliable and computationally-feasible decoding, with restriction to binary signaling, is in the work on {\em channel polarization} of \cite{arikan2009rate}, \cite{arikan2009channel}. These polar codes have been adapted to the Gaussian case as in \cite{abbe2011polar}, however, the error probability is exponentially small in $\sqrt{n}$, rather than $n$.

The ideas of {\em superposition codes}, {\em rate splitting}, and {\em successive decoding} for Gaussian noise channels began with \citet{cover1972broadcast} in the context of multiple-user channels. There, each section corresponds to the codebook for a particular user, and what is sent is a sum of codewords, one from each user.  Here we are putting that idea to use for the original Shannon single-user problem, with the difference that we allow the number of sections to grow with blocklength $n$, allowing for manageable dictionaries.

Other developments on broadcast channels by \citet{cover1972broadcast}, that we use, is that for such Gaussian channels, the power allocation can be arranged as in \eqref{eq:powerweuse} such that messages can be peeled off one at a time by successive decoding. However, such successive decoding applied to our setting would not result in the exponentially small error probability that we seek for manageable dictionaries. It is for this reason that instead of selecting the terms one at a time, we select multiple terms in a step adaptively, depending upon whether their correlation is high or not.

A variant of our regression setup was proposed by \citet{tropp2006just} for communication in the single user setup. However, his approach does not lead to communication at positive rates, as discussed in the next subsection.

There have been recent works that have used our partitioned coding setup for providing a practical solution to the Gaussian source coding problem, as in \citet{konton} and \citet{RVGaussianRD12}. A successive decoding algorithm for this problem is being analyzed by \citet{Venkatinprep}. An intriguing aspect of the analysis in \cite{Venkatinprep} is that the source coding proceeds successively, without the need for adaptation across multiple sections as needed here.

\subsection{Relationships to sparse signal recovery} \label{sec:sparseiter}

Here we comment on the relationships to high-dimensional regression. A very common assumption is that the coefficient vector is \textit{sparse}, meaning that it has only a few, in our case $L$, non-zeroes, with $L$ typically much smaller than the dimension $N$.   
Note, unlike our communication setting, it is not assumed that the magnitude of the non-zeroes be known. Most relevant to our setting are works on \textit{support recovery}, or the recovery of the non-zeroes of $\beta$, when  $\beta$ is typically allowed to belong to a set with $L$ non-zeroes, with the magnitude of the non-zeroes being at least a certain positive value.



Popular techniques for such problems involve relaxation with an $\ell_1$-penalty on the coefficient vector, for example in the basis pursuit \cite{chen2001atomic} and Lasso \cite{tibshirani1996regression} algorithms. 
An alternative is to perform a smaller number of iterations, such as we do here, aimed at determining the target subset. Such works on sparse approximation and term selection concerns a class of iterative procedures which may be called relaxed greedy algorithms (including orthogonal matching pursuit or OMP) as studied in  \cite{jones1992simple}, \cite{barron1993universal},  \cite{pati1993orthogonal}, \cite{lee1996efficient},  \cite{barron2008approximation}, \cite{huang2008risk}, \cite{tropp2007signal}, \cite{zhang2008adaptive}, \cite{josephrandomorthogonal}. In essence, each step of these algorithms finds, for a given set of vectors, the one which maximizes the inner product with the residuals from the previous iteration and then uses it to update the linear combination. Our adaptive successive decoder is similar in spirit to these algorithms.


Results on support recovery can broadly be divided into two categories. The first involves giving, for a given $X$ matrix, uniform guarantees for support recovery. In other words, it guarantees, for any $\beta$ in the allowed set of coefficient vectors, that the probability of recovery is high. 
The second category of research involves results where the probability of recovery is obtained after certain averaging, where the averaging is over a distribution of the $X$ matrix. 

For the first approach, a common condition on the $X$ matrix is the \textit{mutual incoherence condition}, which assumes that the correlation between any two distinct columns be small.  In particular, assuming that $\|X_j\|^2 = n$, for each $j = 1,\ldots, N$, it is assumed that,
\eqq{\frac{1}{n}\max_{j \neq j'}\left|X_j^{\trn}X_{j'}\right| \quad \mbox{\text{is $O(1/L)$}}. \label{eq:incoherence}}
Another related criterion is the \textit{irrepresentable criterion} \cite{tropp2004greed}, \cite{zhao2006model}. 
  However, the above conditions are too stringent for our purpose of communicating at rates up to capacity. Indeed, for i.i.d $N(0,1)$ designs, $n$ needs to be  $\Omega(L^2\log \M)$ for these conditions to be satisfied. Here $n = \Omega(L^2\log \M)$ denotes that $n \geq const L^2\log \M$, for a  positive $const$ not depending upon $L$ or $\M$.  In other words, the rate $R$ is of order $1/L$, which goes to 0 for large $L$. Correspondingly, 
  results from these works cannot be directly applied to our communication problem.

  As mentioned earlier, the idea of adapting techniques in compressed sensing to solve the communication problem began with \citet{tropp2006just}. However, since he used a condition similar to the irrepresentable condition discussed above, his results do not demonstrate communication at  positive rates.

 We also remark that conditions such as \eqref{eq:incoherence}  are required by algorithms such as Lasso and OMP for providing uniform guarantees on support recovery. However, there are algorithms which provided guarantees with much weaker conditions on $X$. Examples include the iterative forward-backward algorithm \cite{zhang2008adaptive} and least squares minimization using concave penalties \cite{zhang2010nearly}. Even though these results, when translated to our setting, do imply communication at positive rates is possible, a demonstration that rates up to capacity can be achieved has been lacking.

 The second approach, as discussed above, is to assign a distribution for the $X$ matrix and analyze performance after averaging over this distribution.  \citet{wainwright2009sharp} considers $X$ matrices with rows i.i.d. $N(0, \Sigma)$, where $\Sigma$ satisfies certain conditions, and shows that recovery is possible with the Lasso with $n$ that is $\Omega(L\log \M)$. In particular his results hold for the i.i.d. Gaussian ensembles that we consider here. Analogous results for the OMP was shown by \citet{josephrandomorthogonal}. Another result in the same spirit of average case analysis is done by \citet{candès2009near} for the Lasso, where the authors assign a prior distribution to $\beta$ and study the performance after averaging over this distribution. The $X$ matrix is assumed to satisfy a weaker form of the incoherence condition that holds with high-probability for i.i.d Gaussian designs, with $n$ again of the right order.

A caveat in these discussions is that the aim of much (though not all) of the work on sparse signal recovery, compressed sensing, and term selection in linear statistical models  is distinct from the purpose of communication alone. In particular rather than the non-zero coefficients being fixed according to a particular power allocation, the aim is to allow a class of coefficients vectors, such as that described above, and still recover their support and estimate the coefficient values. The main distinction from us being that our coefficient vectors belong to a finite set, of $\M^L$ elements, whereas in the above literature the class of coefficients vectors is almost always infinite. This additional flexibility is one of the reasons why an exact characterization of achieved rate has not been done in these works.

Another point of distinction is that majority of these works focus on exact recovery of the support of the true of coefficient vector $\beta$. As mentioned before, as our non-zeroes are quite small (of the order of $1/\sqrt{L}$), one cannot get exponentially small error probabilities for exact support recovery. Correspondingly, it is essential to relax the stipulation of exact support recovery and allow for a certain small fraction of mistakes (both false alarms and failed detection). To the best of our knowledge,  there is still a need in the sparse signal recovery literature to provide proper controls on these mistakes rates to get significantly lower error probabilities.

Section \ref{sec:modificabovealgo} describes our adaptive successive decoder in full detail. Section \ref{sec:compres} describes the computational resource required for the algorithm. Section \ref{sec:analysisofalgo} presents the tools for the theoretical analysis of the algorithm, while section \ref{sec:reliabmulti} presents the theorem for reliability of the algorithm. Computational illustrations are included in section \ref{sec:compillus}. Section \ref{commcapac} proves results for the function $g_L$ of figure \ref{fig:prog7}, required for the demonstrating that one can indeed achieve rates up to capacity.
The appendix collects some auxiliary matters.


\section{The Decoder}\label{sec:modificabovealgo}

The algorithm we analyze is a modification of the algorithm described in subsection \ref{subsec:intuialgo}. The main reason for the modification is due to the difficulty in analyzing the statistics $\Zcal_{k,j}^{res}$, for $j \in J_k$ and for steps $k \geq 2$.


The distribution of the statistic $\Zcal_{1,j}$, used in the first step, is easy, as will be seen below. This is because of the fact that the random variables
$$\{X_j,\, j \in J\} \quad \mbox{and} \quad Y$$
are jointly multivariate normal. However, this fails to hold for the random variables,
$$\{X_j,\, j \in J_k\} \quad \mbox{and} \quad R_k$$
used in forming $\Zcal_{k,j}^{res}$.

It is not hard to see why this joint Gaussianity fails. Recall that $R_k$ may be expressed as,
$$ R_k = Y - \sum_{j \in dec_{1,k-1}}\sqrt{P_j} X_j.$$
Correspondingly, since  the event $dec_{1,k-1}$ is not independent of the $X_j$'s, the quantities $R_k$, for $k \geq 2$, 
are no longer normal random vectors. It is for this reason the we introduce the following two modifications.

\subsection{The first modification: Using a combined statistic}

We overcome the above difficulty in the following manner. Recall that each
\eqq{R_k = Y - F_1 + \ldots - F_{k-1}\label{eq:rkform},}
is a sum of $Y$ and $-F_1,\, \ldots,\, -F_{k-1}$. Let $G_1 = Y$ and denote $G_k$, for $k \geq 2$, as the part of $-F_{k-1}$ that is orthogonal to the previous $G_k$'s.  In other words, perform Grahm-Schmidt orthogonalization on the vectors $Y,\, -F_1,\,\ldots,\, -F_{k-1}$, to get $G_{k'}$, with $k' = 1,\ldots,\,k$. Then, from \eqref{eq:rkform},
$$\frac{R_k}{\|R_k\|} = weight_{1}\, \frac{G_1}{\|G_1\|} + weight_2\, \frac{G_2}{\|G_2\|} + \ldots + weight_{k}\, \frac{G_{k}}{\|G_{k}\|},$$
for some weights, denoted by $weight_{k'} = weight_{k',k}$, for $k' = 1,\, \ldots,\, k$. More specifically,
$$weight_{k'} = \frac{R_k^{\trn}G_{k'}}{\|R_k\|\|G_{k'}\|},$$
and,
$$weight_1^2 + \ldots + weight_k^2 = 1.$$
Correspondingly, the statistic $\Zcal_{k,j}^{res} = X_j^{\trn} R_k/\|R_k\|$, which we want to use for $k$ th step detection, may be expressed as,
$$ \Zcal_{k,j}^{res} = weight_1\, \Zcal_{1,j} + weight_2\, \Zcal_{2,j} + \ldots + weight_{k-1}\, \Zcal_{k,j},$$
where,
\eqq{\Zcal_{k,j}= X_j^T G_k/\|G_k\|\label{eq:zcaldef}.}
Instead of using the statistic $\Zcal_{k,j}^{res}$, for $k \geq 2$, we find it more convenient to use statistics of the form,
\eqq{\Zcal_{k,j}^{comb} = \lambda_{1,k} \,\Zcal_{1,j} + \lambda_{2,k} \,\Zcal_{2,j} + \ldots +\lambda_{k,k} \,\Zcal_{k,j},\label{eq:zcombdef}}
where $\lambda_{k',k}$, for $k' = 1,\, \ldots,k$ are positive weights satisfying,
$$\sum_{k' = 1}^k \lambda_{k',k}^2  =1.$$

For convenience, unless there is some ambiguity, we suppress the dependence on $k$ and denote $\lambda_{k',k}$ as simply $\lambda_{k'}$.  Essentially, we choose $\lambda_{1}$ so that it is a deterministic proxy for $weight_1$ given above. Similarly, $\lambda_{k'}$ is a proxy for $weight_{k'}$ for $k' \geq 2$.
The important modification we make, of replacing the random $weight_{k}$'s by proxy weights, enables us to give an explicit characterization of the distribution of the statistic $\Zcal_{k,j}^{comb}$, which we use as a proxy for $\Zcal_{k,j}^{res}$ for detection of additional terms in successive iterations.

We now describe the algorithm after incorporating the above modification. For the time-being assume that for each $k$ we have a vector of deterministic weights,
$$(\lambda_{k',k} : k' = 1,\,\ldots,\, k),$$
satisfying $\sum_{k' = 1}^k \lambda_{k'}^2 = 1$, where recall that for convenience we denote $\lambda_{k',k}$ as $\lambda_{k'}$. Recall $G_1 =Y$.

For step $k = 1$, do the following
\begin{itemize}
\item For $j \in J$, compute
$$\Zcal_{1,j} = X_j^{\trn} G_1/\|G_1\|.$$
To provide consistency with the notation used below, we also denote $\Zcal_{1,j}$ as
$\Zcal_{1,j}^{comb}$.
\item Update
\eqq{dec_{1} =\{j \in J: \Zcal_{1,j}^{comb} \geq \tau\}, \label{eq:dec1}}
which corresponds to the set of decoded terms for the first step. Also let $dec_{1,1} = dec_{1}.$ Update
$$F_1 = \sum_{j \in dec_1}\sqrt{P_j}X_j.$$
\end{itemize}
This completes the actions of the first step. Next, perform the following steps for $k \geq 2$, with 
the number of steps $k$ to be at most a pre-define value $m$.
\begin{itemize}
\item Define $G_k$ as the part of $-F_{k-1}$ orthogonal to $G_1,\,\ldots,\, G_{k-1}$.
\item For $j \in J_k = J - dec_{1,k-1}$, calculate \eqq{ \Zcal_{k,j} = \frac{X_j^{\trn} G_{k}}{\|G_{k}\|} \label{zcalcomp}}
\item For $j \in J_k$, compute the combined statistic using the above $\Zcal_{k,j}$ and $\Zcal_{k',j},\, 0\leq k' \leq k-1$, given by,
$$\Zcal_{k,j}^{comb} = \lambda_{1}\,\Zcal_{1,j} +\lambda_{2}\,\Zcal_{2,j} +\ldots +\lambda_{k}\,\Zcal_{k,j},$$
where the weights $\lambda_{k'} = \lambda_{k,k'}$, which we specify later,  are positive and have sum of squares equal to 1.
\item Update \eqq{dec_{k} =\{j \in J_k: \Zcal_{k,j}^{comb} \geq \tau\},\label{abvthresh} }
which corresponds to the set of decoded terms for the $k$ th step. Also let $dec_{1,k} = dec_{1,k-1}\cup dec_{k},$ which is the set of terms detected after $k$ steps.

\item This completes the $k$ th step. Stop if either $L$ terms have been decoded, or if no terms are above threshold, or if $k =m$. Otherwise increase $k$ by 1 and repeat.
\end{itemize}

As mentioned earlier, part of what makes the above work is our ability to assign deterministic weights $(\lambda_{k,k'} : k' = 1,\ldots, k)$, for each step $k = 1,\ldots, m$. To be able to do so, we need good control on the (weigthed) sizes of the set of decoded terms $dec_{1,k}$ after step $k$, for each $k$. In particular, defining for each $j$, the quantity $\pi_j = P_j/P$, we define the size of the set $dec_{1,k}$ as $size_{1,k}$, where
\eqq{size_{1,k} = \sum_{j \in dec_{1,k}} \pi_j.\label{eq:sizeonek}}
Notice that $size_{1,k}$ is increasing in $k$, and is a random quantity which depends on the number of correct detections and false alarms in each step. As we shall see, we need to provide good upper and lower bounds for the $size_{1,1},\ldots,size_{1,k-1}$ that are satisfied with high probability, to be able to provide deterministic weights of combination, $\lambda_{k',k}$, for $k' = 1,\ldots,k$, for the $k$th step.

It turns out that the existing algorithm does not provide the means to give good controls on the $size_{1,k}$'s. To be able to do so, we need to further modify our algorithm.

\subsection{The second modification: Pacing the steps}\label{subsec:modpacing}

As mentioned above, we need to get good controls on the quantity $size_{1,k}$, for each $k$, where $size_{1,k}$ is defined as above. For this we modify the algorithm even further.

Assume that we have certain pre-specified positive values, which we call $q_{1,k}$, for $k = 1, \ldots,\,m$ .  Explicit expressions $q_{1,k}$, which are taken to be strictly increasing in $k$, will be specified later on. The weights of combination, $$(\lambda_{k',k} : k' = 1,\,\ldots,\, k),$$ for $k = 1, \ldots,\,m$, will be functions of these values.

For each $k$, denote
$$thresh_k = \{j: \Zcal_{k,j}^{comb} \geq \tau\}.$$
For the algorithm described in the previous subsection, $dec_k$, the set of decoded terms for the $k$ th step, was taken to be equal to $thresh_k$. We make the following modification:

For each $k$, instead of making $dec_k$ to be equal to $thresh_k$, take $dec_k$ to be a subset of $thresh_k$ so that the total size of the of the decoded set after $k$ steps, given by $size_{1,k}$ is near $q_{1,k}$.  The set $dec_k$ is chosen by selecting terms in $thresh_k$, in decreasing order of their $\Zcal_{k,j}^{comb}$ values, until $size_{1,k}$ nearly equal $q_{1,k}$.

In particular given $size_{1,k-1}$, one continues to add terms in $dec_k$, if possible, until
\eqq{q_{1,k} -1/L_\pi < size_{1,k} \leq q_{1,k}.\label{eq:paccriterion}}
Here  $1/L_\pi = \min_{\ell} \pi_{(\ell)}$, is the minimum non-zero weights over all sections. It is a small term of order $1/L$ for the power allocations we consider.

Of course the set of terms $thresh_k$ might not be large enough to arrange for $dec_k$ satisfying \eqref{eq:paccriterion}. Nevertheless, it is satisfied, provided
$$size_{1,k-1} + \sum_{j \in thresh_k} \pi_j \geq q_{1,k},$$
or equivalently,
\eqq{\sum_{j \in dec_{1,k-1}} \pi_j + \sum_{j \in J - dec_{1,k-1}} \pi_j 1_{\{\Zcal_{k,j}^{comb} \geq \tau\}} \geq q_{1,k}. \label{eq:pacingworkcond}}
Here we use the fact that $J_k = J - dec_{1,k-1}$.

Our analysis demonstrates that we can arrange for an increasing sequence of $q_{1,k}$, with $q_{1,m}$ near 1, such that condition \eqref{eq:pacingworkcond} is satisfied for $k = 1,\ldots, m$, with high probability.  Correspondingly, $size_{1,k}$ is near $q_{1,k}$ for each $k$ with high probability. In particular, $size_{1,m}$, the weighted size of the decoded set after the final step, is near $q_{1,m}$, which is near 1.

We remark that in \cite{barron2010ajoseph}, an alternative technique for analyzing the distributions of $\Zcal_{k,j}^{comb}$, for $j \in J_k$, is pursued, which does away with the above approach of pacing the steps. The technique in \cite{barron2010ajoseph} provides uniform bounds on the performance for collection of random variables indexed by the vectors of weights of combination. However, since the pacing approach leads to cleaner analysis, we pursue it here.

\section{Computational resource} \label{sec:compres}


For the decoder described in section \ref{sec:modificabovealgo}, the vectors $G_k$ can be computed efficiently using the Grahm-Schmidt procedure. Further, as will be seen, the weights of combination are chosen so that, for each $k$,
$$\Zcal_{k,j}^{comb}  = \sqrt{1 - \lambda_{k,k}^2}\, \Zcal_{k-1,j}^{comb} +  \lambda_{k,k} \Zcal_{k,j}.$$
This allows us to computed the statistic $\Zcal_{k,j}^{comb}$ easily from the previous combined statistic. Correspondingly, for simplicity we describe here the computational time of the algorithm in subsection \ref{subsec:intuialgo}, in which one works with the residuals and accepts each term above threshold. Similar results hold for the decoder in section \ref{sec:modificabovealgo}.

 The inner products requires order $nL\M$ multiply and adds each step, yielding a total computation of order $nL\M m$ for $m$ steps. As we shall see the ideal number of steps $m$ according to our bounds is of order $\log \M$.

When there is a stream of strings $Y$ arriving in succession at the decoder, it is natural to organize the computations in a parallel and pipelined fashion as follows. One allocates $m$ signal processing chips, each configured nearly identically, to do the inner products. One such chip does the inner products with $Y$, a second chip does the inner products with the residuals from the preceding received string, and so on, up
to chip $m$ which is working on the final decoding step from the string received several steps before.
After an initial delay of $m$ received strings, all $m$ chips are working simultaneously.



If each of the signal processing chips keeps a local copy of the dictionary $X$, alleviating the challenge of numerous simultaneous memory calls, the total computational space (memory positions) involved in the decoder is $nL\M m$, along with space for $L\M m$ multiplier-accumulators, to achieve constant order computation time per received symbol. Naturally, there is the alternative of increased computation time with less space; indeed, decoding by serial computation would have runtime of order $nL\M m$. Substituting $L = nR/ \log \M$ and $m$ of order $\log \M$, we may reexpress $nL\M m$ as $n^2\M$ . This is the total computational resource required (either space or time) for the sparse superposition decoder. 

\section{Analysis}\label{sec:analysisofalgo}



Recall that we need to give controls on the random quantity $\hat \delta_{mis}$ given by \eqref{eq:hatdeltamis}. Our analysis leads to controls on the following weighted measures of correct detections and false alarms for a step.
Let $\pi_j = P_j/P$, where recall that $P_j = P_{(\ell)}$ for any $j$ in section $\ell$. The $\pi_j$  sums to $1$ across $j$ in $\sent$, and sums to $\M\!-\!1$ across $j$ in $\others$.  Define in general \eqq{\hat q_k = \sum_{j \in \sent \intersect dec_k} \pi_j\label{eq:hatq},} which provides a weighted measure for the number of correct detections in step $k$,  and \eqq{\hat f_k = \sum_{ j\in other \intersect dec_k} \pi_j\label{eq:hatf}}
for the false alarms in step $k$. Bounds on $\hat \delta_{mis}$ can be obtained from the quantities $\hat q_k$ and $\hat f_k$  as we now describe.

Denote
\eqq{\hatdwght = (1 - \sum_{k = 1}^m \hat q_k) + \sum_{k = 1}^m \hat f_k. \label{eq:hatdeltawght}}
An equivalent way of expressing $\hatdwght$ is the sum of $\ell$ from $1$ to $L$ of,
$$\displaystyle\sum_{j \in section\,\, \ell} \pi_j 1_{\{j\, \in\, \sent\cap dec_{1,m}^c\}}+ \pi_j1_{\{j\, \in\, \others\cap dec_{1,m}\}}.$$
In the equal power allocation case, where $\pi_j = 1/L$, one has $\hat{\delta}_{mis} \leq \hatdwght$. This can be seen by examining the contribution from a section to $\hat{\delta}_{mis}$ and $\hatdwght$. We consider the three possible cases. In the case when the section has neither an error or an erasure, its contribution to both $\hat{\delta}_{mis}$ and $\hatdwght$ would be zero. Next, when a section has an error, its contribution would be $2/L$ to $\hat{\delta}_{mis}$ from \eqref{eq:hatdeltamis}. Its contribution to $\hatdwght$ would also be $2/L$, with a $1/L$ contribution from the correct term (since it is in $\sent\cap dec_{1,m}^c$), and another $1/L$ from the wrong term. Lastly, for a section with an erasure, its contribution to $\hat{\delta}_{mis}$ would be $1/L$, while its contribution to $\hatdwght$ would be at least $1/L$. The contribution in the latter case would be greater than $1/L$ if there are multiple terms in $\others$ that were selected.

For the power allocation \eqref{eq:powerweuse} that we consider, bounds on $\hat{\delta}_{mis}$ are obtained by multiplying $\hatdwght$ by the factor $snr/(2\Capacity).$
To see this, notice that for a given weighted fraction, the maximum possible un-weighted fraction would be if we assume that all the failed detection or false alarms came from the section with the smallest weight. This would correspond to the section with weight $\pi_{(L)}$, where it is seen that $\pi_{(L)} = 2\Capacity/(L\,snr).$
Accordingly, if $\dwght$ were an upper bound on $\hatdwght$ that is satisfied with high probability, we take
\eqq{\delta_{mis} = \frac{snr}{2\Capacity}\, \dwght\label{eq:deltamisdeltawghtrel},}
so that $\hat \delta_{mis} \leq \delta_{mis}$ with high probability as well.

Next, we characterize, for $k \geq 1$, the distribution of $\Zcal_{k,j}$, for $j \in J_k$. As we mentioned earlier, the distribution of $\Zcal_{1,j}$ is easy to characterize. Correspondingly, we do this separately in the next subsection. In subsection \ref{subsec:distanal} we provide the analysis for the distribution of $\Zcal_{k,j}^{comb}$, for $k \geq 2$.

\subsection{Analysis of the first step} \label{subsec:firststepanalysis}
In Lemma \ref{lem:firststepdist} below we derive the distributional properties of $(\Zcal_{1,j} : j \in J)$.  Lemma \ref{laterstepdist},
in the next subsection, characterizes the distribution of $(\Zcal_{k,j} : j \in J_k)$ for steps $k \geq 2$ .

Define \eqq{\cjrb = n \,\pi_j \,\nu\label{eq:cjrb},}
where $\nu = P/(P + \sigma^2)$.
For the constant power allocation case, $\pi_j$ equals $1/L$. In this case $\cjrb = (R_{0}/R) \, 2\log \M$ is the same for all $j$.

For the power allocation \eqref{eq:powerweuse}, we have $$\pi_j = e^{-2{\Capacity}(\ell-1)/L} (1\!-\!e^{-2{\Capacity}/L})/(1\!-\!e^{-2\Capacity}),$$ for each $j$ in section $\ell$.  Let \eqq{\tilde {\Capacity}= (L/2)[1-e^{-2{\Capacity}/L}] \label{eq:ctilda},}
which is essentially identical to $\Capacity$ when $L$ is large.  Then for $j$ in section $\ell$, we have
\eqq{\cjrb \, = (\tilde \Capacity/R)\, e^{-2{\Capacity}(\ell-1)/L}(2\log\M).\label{eq:cjrbexpo}}

We now are in a position to give the lemma for the distribution of $\Zcal_{1,j}$, for $j \in J$. The lemma below shows that each $\Zcal_{1,j}$ is distributed as a shifted normal, where the shift is approximately equal to $\sqrt{\cjrb}$ for any $j$ in $\sent$, and is zero for $j$ in $\others$. Accordingly, for a particular section, the maximum of the $\Zcal_{1,j}$, for $j \in \others$, is seen to be approximately $\sqrt{2\log\M}$, since it is the maximum of $\M - 1$ independent standard normal random variables. Consequently, one would like $\sqrt{\cjrb}$ to be at least $\sqrt{2\log \M}$ for the correct term in that section to be detected.

\vspace{0.3cm}
\begin{lem}\label{lem:firststepdist} For each $j \in J$, the statistic $\Zcal_{1,j}$ can be represented as
$$ \sqrt{\cjrb}\, (\Chi_n/\sqrt{n}) 1_{\{j \: \sent\}} + Z_{1,j},$$
where $Z_1 = (Z_{1,j} : j \in J_1)$ is multivariate normal $N(0,\Sigma_{1})$,
with $\Sigma_1 = I - \delta_1 \delta_1^{\trn}/P$, where $\delta_1 = \nu \beta$.

 Also,
$$\Chi_n^2 = \frac{\|Y\|^2}{\sigma_Y^2}$$
is a Chi-square $n$ random variable that is independent of $Z_1 = (Z_{1,j} : j \in J)$. Here $\sigma_Y = \sqrt{P + \sigma^2}$ is the standard deviation of each coordinate of $Y$.
\end{lem}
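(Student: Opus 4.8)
The plan is to exploit the joint Gaussianity of $\{X_j : j \in J\}$ and $Y$, as flagged in Section \ref{sec:modificabovealgo}, to decompose each $X_j$ into its conditional mean given $Y$ plus an independent Gaussian residual. Concretely, write $\sigma_Y^2 = P+\sigma^2$ and recall $Y = X\beta + \varepsilon$. For $j \in J$, the pair $(X_j, Y)$ is jointly Gaussian, so I would compute $\E[X_{i,j} Y_i]$ coordinatewise. If $j \in \sent$, say $\beta_j = \sqrt{P_j}$, then $\E[X_{i,j}Y_i] = \sqrt{P_j}$; if $j \in \others$, then $\E[X_{i,j} Y_i] = 0$. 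Since $\mathrm{Var}(Y_i) = \sigma_Y^2$, the regression of $X_j$ on $Y$ gives $X_j = (\sqrt{P_j}/\sigma_Y^2)\, Y\, 1_{\{j \in \sent\}} + U_j$, where $U_j$ is Gaussian, mean zero, and independent of $Y$ (Gaussian orthogonality). Dividing by $\|Y\|$ and recognizing $\|Y\|/\sigma_Y = \Chi_n$ yields the first term $\sqrt{P_j}\,\Chi_n/(\sigma_Y\|Y\|/\|Y\|)$... more carefully, $X_j^{\trn}Y/\|Y\| = (\sqrt{P_j}/\sigma_Y^2)\|Y\| 1_{\{j\in\sent\}} + U_j^{\trn}Y/\|Y\|$, and $(\sqrt{P_j}/\sigma_Y^2)\|Y\| = \sqrt{P_j}\,\Chi_n/\sigma_Y = \sqrt{n\pi_j \nu}\,\Chi_n/\sqrt n$ after substituting $P_j = \pi_j P$ and $\nu = P/\sigma_Y^2$, matching $\sqrt{\cjrb}\,(\Chi_n/\sqrt n)$.

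Next I would identify the noise term $Z_{1,j} := U_j^{\trn}Y/\|Y\|$ and establish its claimed law. Since each $U_j$ is independent of $Y$, conditionally on $Y$ the vector $(U_j^{\trn} Y/\|Y\| : j \in J)$ is Gaussian with a covariance that does not depend on $Y$ (because $Y/\|Y\|$ is a fixed unit vector and the $U_j$ are i.i.d.\ across coordinates with a covariance structure I will compute across $j$); hence it is unconditionally Gaussian and independent of $\Chi_n = \|Y\|/\sigma_Y$. To get $\Sigma_1 = I - \delta_1\delta_1^{\trn}/P$ with $\delta_1 = \nu\beta$, I would compute $\E[Z_{1,j}Z_{1,j'}]$. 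Using $U_j = X_j - (\sqrt{P_j}/\sigma_Y^2)Y 1_{\{j \in \sent\}}$, the cross-covariance $\E[U_{i,j}U_{i,j'}]$ equals $\delta_{jj'} - \sqrt{P_j P_{j'}}/\sigma_Y^2$ when both $j,j'\in\sent$ (this is $\delta_{jj'} - \nu^2 \beta_j\beta_{j'}/P$ since $\sqrt{P_jP_{j'}} = \beta_j\beta_{j'}$ and $1/\sigma_Y^2 = \nu/P$), equals $\delta_{jj'}$ when $j,j'\in\others$, and the cross terms $j\in\sent$, $j'\in\others$ vanish. Packaging this as a matrix over all $j\in J$ gives exactly $I - (\nu\beta)(\nu\beta)^{\trn}/P = I - \delta_1\delta_1^{\trn}/P$, since $\beta$ is supported on $\sent$. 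Finally, projecting onto the unit vector $Y/\|Y\|$ preserves this covariance: $\E[Z_{1,j}Z_{1,j'} \mid Y] = (Y/\|Y\|)^{\trn}\E[U_j U_{j'}^{\trn}](Y/\|Y\|) = \E[U_{i,j}U_{i,j'}]$ since $\E[U_j U_{j'}^{\trn}] = \E[U_{i,j}U_{i,j'}] I_n$.

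The main obstacle — really the only subtle point — is rigorously justifying the unconditional Gaussianity and independence of $Z_1$ from $\Chi_n$: one must argue that although $Z_{1,j}$ is built from $Y$, the combination $U_j^{\trn}(Y/\|Y\|)$ has a law that is genuinely free of $\|Y\|$, which follows from the independence of $U_j$ and $Y$ together with spherical symmetry of the conditional Gaussian law of $U_j$ given $Y$ (its conditional covariance is a scalar multiple of the identity, so it is invariant under the rotation aligning $Y$ with a coordinate axis). I would state this as: condition on $Y$; $Z_1 \mid Y \sim N(0,\Sigma_1)$ with $\Sigma_1$ not depending on $Y$; hence $(Z_1, \|Y\|)$ are independent and $Z_1 \sim N(0,\Sigma_1)$ marginally. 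The remaining computations — evaluating the covariances above and the substitution $\cjrb = n\pi_j\nu$ — are routine and I would present them compactly.
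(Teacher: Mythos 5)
Your proposal is essentially the paper's own proof: the same conditional decomposition $X_j = \beta_j Y/\sigma_Y^2 + U_j$ with $U_j$ independent of $Y$, the same identification $Z_{1,j} = U_j^{\trn}Y/\|Y\|$, and the same observation that the conditional law of $Z_1$ given $Y$ is a fixed $N(0,\Sigma_1)$, which yields both the unconditional normality and the independence from $\Chi_n = \|Y\|/\sigma_Y$. One algebra slip to fix: $\sqrt{P_jP_{j'}}/\sigma_Y^2 = \nu\,\beta_j\beta_{j'}/P$, not $\nu^2\beta_j\beta_{j'}/P$, so the covariance you actually compute is $I - \nu\,\beta\beta^{\trn}/P$; this matches the paper's proof ($\Sigma_1 = I - bb^{\trn}$ with $b = \beta/\sigma_Y$) and the $k=1$ case of Lemma \ref{laterstepdist} (where $\delta_1 = \sqrt{\nu_1}\,\beta$), and shows that the lemma statement's ``$\delta_1 = \nu\beta$'' should read $\delta_1 = \sqrt{\nu}\,\beta$ rather than being something your derivation needs to reproduce.
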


\begin{proof}
 Recall that the $X_j$, for $j$ in $J$, are independent $N(0,I)$ random vectors and that $Y = \sum_j \beta_j X_j \, + \, \varepsilon$, where the sum of squares of the $\beta_j$ is equal to $P$

The conditional distribution of each $X_j$ given $Y$ may be expressed as,
\eqq{X_j = \beta_j \, Y/\sigma_Y^2 \, + \, U_j,\label{eq:xjrep}}
where $U_j$ is a vector in $\Rbb^N$ having a multivariate normal distribution. Denote $b = \beta/\sigma_Y$. It is seen that $$U_j \sim N_n\left(0, (1 - b_j^2) I\right),$$ where $b_j$ is the $j$ th coordinate of $b$.

Further, letting $U = [U_1 : \ldots: U_N]$, it follows from the fact that the rows of $[X : \epsilon/\sigma]$ are i.i.d, that the rows of the matrix $U$ are i.i.d.



 Further, for row $i$ of $U$, the random variables $U_{i,j}$ and $U_{i,j'}$ have mean zero and expected product $$1_{\{j=j'\}} - b_{j} b_{j'}.$$ 
 In general, the covariance matrix of the $i$th row of $U$ is given by $\Sigma_1$.

For any constant vector $\alpha\ne 0$, consider $U_j^T \alpha/\|\alpha\|$.
Its joint normal distribution across terms $j$ is the same for any such $\alpha$. Specifically, it is a normal $N(0,\Sigma_1)$, with mean zero and the indicated
covariances.

Likewise define 
$Z_{1,j} = U_j^T Y/\|Y\|$. 
Conditional on $Y$, one has that jointly across $j$, these $Z_{1,j}$ have the normal $N(0,\Sigma)$ distribution.
Correspondingly, $Z_1 = (Z_{1,j} : j \in J)$ is independent of $Y$, and has a $N(0, \Sigma_1)$ distribution unconditionally.

Where this gets us is revealed via the representation of the inner product $\Zcal_{1,j} = X_j^T Y/\|Y\|$, which using \eqref{eq:xjrep}, is given by,
$$\Zcal_{1,j} =  \, \beta_j \, \frac{\|Y\|}{\sigma_Y^2}
\, +  Z_{1,j}.$$
The proof is completed by noticing that for $j \in \sent$, one has $\sqrt{\cjrb} = \beta_j\sqrt{n}/\sigma_Y$.
\end{proof}

\subsection{Analysis of steps $k \geq 2$} \label{subsec:distanal}


We need the characterize the distribution of the statistic $\Zcal_{k,j}^{comb},\, j\in J_k$, used in decoding additional terms for the $k$th step.

The statistic $\Zcal_{k,j}^{comb},\, j\in J_k$, can be expressed more clearly in the following manner. For each $k \geq 1$, denote,
$$\Zcal_{k} = X^{\trn} \frac{G_k}{\|G_k\|}.$$

Further, define $$\Zcal_{1,k} = [\Zcal_{1} : \Zcal_{2} : \ldots : \Zcal_{k}]$$ and let $\Lambda_{k} = (\lambda_{k,1},\lambda_{k,2}, \ldots,\,\lambda_{k,k})^{\trn}$ be the deterministic vector of weights of combinations used for the statistics $\Zcal_{k,j}^{comb}$. Then $\Zcal_{k,j}^{comb}$ is simply the $j$ th element of the vector $$\Zcal_{k}^{comb} = \Zcal_{1,k}\Lambda_k.$$ We remind that for step $k$ we are only interested in elements $j \in J_k$, that is, those that were not decoded in previous steps.

Below we characterize the distribution of $\Zcal_{k}^{comb}$ conditioned on the what occurred on previous steps in the algorithm. More explicitly, we define $\Fcal_{k-1}$ as
\eqq{\Fcal_{k-1} = (G_1,\, G_2,\ldots, G_{k-1}, \Zcal_1,\ldots, \Zcal_{k-1}),\label{eq:fkmone}}
or the associated $\sigma$-field of random variables.
This represents the variables computed up to step $k-1$. Notice that from the knowledge of $\Zcal_{k'}$, for $k' = 1,...,k-1$, one can compute $\Zcal_{k'}^{comb}$, for $k' < k$. Correspondingly, the set of decoded terms $dec_{k'}$, till step $k-1$, is completely specified from knowledge of $\Fcal_{k-1}$.

Next, note that in $\Zcal_{1,k}$, only the vector $\Zcal_{k}$ does not belong to $\Fcal_{k-1}$. Correspondingly, the conditional distribution of $\Zcal_{k}^{comb}$ given $\Fcal_{k-1}$, is described completely by finding the distribution of $\Zcal_{k}$ given $\Fcal_{k-1}$. Accordingly,  we only need to characterize the conditional distribution of $\Zcal_k$ given $\Fcal_{k-1}$.






 Initializing with the distribution of $\Zcal_1$ derived in Lemma \ref{lem:firststepdist}, we provide the conditional distributions $$\Zcal_{k, J_k}= (\Zcal_{k,j} : j \in J_k),$$ for $k = 2,\ldots,\,n$.  As in the first step, we show that the distribution of $\Zcal_{k,J_k}$ can be expressed as the sum of a mean vector and a multivariate normal noise vector $Z_{k,J_k} = (Z_{k,j} : j \in J_k)$.
 The algorithm will be arranged to stop long before $n$, so we will only need these up to some much smaller final $k=m$. Note that $J_{k}$ is never empty because we decode at most $L$, so there must always be at least $(\M\!-\!1)L$ remaining.

 The following measure of correct detections in step, adjusted for false alarms, plays an important role in characterizing the distributions of the statistics involved in an iteration.
 Denote
\eqq{\hat q_k^{adj} = \frac{\hat q_k}{1 + \hat f_k/\hat q_k} \label{eq:qkadj},}
where $\hat q_k$ and $\hat f_k$ are given by \eqref{eq:hatq} and \eqref{eq:hatf}.

 In the lemma below we denote $N_{J_k} (0,\Sigma)$ to be multivariate normal distribution with dimension $|J_k|$, having mean zero and covariance matrix $\Sigma$, where $\Sigma$ is an $|J_k| \times |J_k|$ dimensional matrix. Further, we denote $\beta_{J_k}$ to be the sub-vector of $\beta$ consisting of terms with indices in $J_k$.

\begin{lem}\label{laterstepdist} For each $k \geq 2$, the conditional distribution of $\Zcal_{k,j}$ , for $j\in J_k$, given $\Fcal_{k-1}$ has the representation \eqq{\, \sqrt{\hat w_k \, \cjrb} \,(\Chi_{d_k}/\sqrt{n})\,1_{\{j\, \in\, \sent\}} \, +\, Z_{k,j}. \label{mainrel}}
Recall that $\cjrb = n\pi_j\nu$. Further, $\hat w_k$
$= \hat s_k\!-\!\hat s_{k-1}$, which are increments of a series with total
$$\hat w_1 +\hat w_2 + \ldots + \hat w_k \, = \,\hat s_k = \frac{1}{1-\hat q_{k-1}^{adj,tot} \, \nu},$$ where \eqq{\hat q_{k}^{adj,tot} = \hat q_1^{adj} + \ldots + \hat q_k^{adj}.\label{eq:qhatadjtot}}  Here $\hat w_1 = \hat s_1=1$.   The quantities $\hat q_k^{adj}$ is given by \eqref{eq:qkadj}.

The conditional distribution $\PP_{Z_{k,J_k}|\Fcal_{k-1}}$  
is normal $N_{J_k}(0,\Sigma_{k})$, where the covariance $\Sigma_k$ has the representation $$\Sigma_k = I- \delta_k \delta_k^T/P,\quad\text{where $\delta_{k} = \sqrt{\nu_k}\, \beta_{J_k}$}.$$  
Here $\nu_k = \hat s_k \nu$.

Define 
$\sigma_k^2 = \hat s_{k-1}/\hat s_k$. The $\Chi_{d_k}$ term appearing in \eqref{mainrel} is given by $$\Chi_{d_k}^2 = \frac{\|G_k\|^2}{\sigma_k^2}.$$
Also, the distribution of $\Chi_{d_k}^2$ given $\Fcal_{k-1}$, is chi-square with
$d_k = n-k + 1$ degrees of freedom, and further, it is independent of $Z_{k,J_k}$.


\end{lem}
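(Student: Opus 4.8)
The plan is to proceed by induction on $k$, using Lemma \ref{lem:firststepdist} as the base case and, at the inductive step, conditioning on $\Fcal_{k-1}$ to reveal the Gaussian structure one step at a time. The key observation is that, conditional on $\Fcal_{k-1}$, the only new randomness entering $\Zcal_k = X^{\trn}G_k/\|G_k\|$ beyond what is already captured comes from the components of the $X_j$'s orthogonal to the span of $G_1,\dots,G_{k-1}$. First I would write the conditional representation of each $X_j$ given $Y$ as in \eqref{eq:xjrep}, namely $X_j = \beta_j Y/\sigma_Y^2 + U_j$ with the rows of $U = [U_1:\cdots:U_N]$ i.i.d.\ and covariance $\Sigma_1 = I - \delta_1\delta_1^{\trn}/P$ across the $j$ index. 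Because $G_2,\dots,G_{k-1}$ are (by the Gram--Schmidt construction) deterministic linear functions of $Y,F_1,\dots,F_{k-2}$, and each $F_{k'}$ is a $\sqrt{P_j}$-weighted sum of columns $X_j$ over the decoded set $dec_{k'}$, conditioning on $\Fcal_{k-1}$ amounts to conditioning the Gaussian matrix $U$ on a finite collection of its linear functionals (projections onto $G_1,\dots,G_{k-1}$ and the associated $\Zcal_1,\dots,\Zcal_{k-1}$). The conditional law of the residual part of $U$ is again Gaussian, and projecting onto the unit vector $G_k/\|G_k\|$ — which is orthogonal to $G_1,\dots,G_{k-1}$ by construction — gives a mean-zero Gaussian vector $Z_{k,J_k}$ whose covariance inherits the same rank-one-corrected form $I - \delta_k\delta_k^{\trn}/P$ with an updated scalar $\nu_k = \hat s_k\nu$; the chi-square factor $\Chi_{d_k}^2 = \|G_k\|^2/\sigma_k^2$ with $d_k = n-k+1$ degrees of freedom and independence from $Z_{k,J_k}$ follows from the standard fact that, after conditioning on $k-1$ directions, the squared length of the next Gram--Schmidt residual of an appropriately scaled Gaussian vector is chi-square with $n-(k-1)$ degrees of freedom, independent of its direction.

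The heart of the argument is the bookkeeping that identifies the mean shift $\sqrt{\hat w_k \cjrb}\,(\Chi_{d_k}/\sqrt n)$ for $j \in \sent$ and, simultaneously, pins down the weights $\hat w_k$ and variances $\hat s_k$. Here I would track how the decoded vector $F_{k-1}$ decomposes relative to the true signal $X\beta$. Writing $F_{k-1} = \sum_{j \in dec_{k-1}}\sqrt{P_j}X_j$ and using the conditional representation of the $X_j$, the component of $-F_{k-1}$ that is orthogonal to $G_1,\dots,G_{k-1}$ — this is $G_k$ — has a length (squared, normalized by $\sigma^2_Y$ appropriately) that, after the adjustment for false alarms built into $\hat q_k^{adj} = \hat q_k/(1+\hat f_k/\hat q_k)$, contributes an increment to the running variance parameter. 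The recursion $\hat s_k = 1/(1 - \hat q_{k-1}^{adj,tot}\nu)$ with $\hat s_1 = 1$ should emerge from the identity relating the variance of the projection of $Y$ (contaminated by the as-yet-undecoded signal) onto the new direction to the fraction $1 - \hat q_{k-1}^{adj,tot}\nu$ of signal power not yet peeled off; the weight $\hat w_k = \hat s_k - \hat s_{k-1}$ is then the per-step increment, and $\sigma_k^2 = \hat s_{k-1}/\hat s_k$ records the proportion of the previous residual's energy retained. The mean shift for a sent term $j$ picks up $\sqrt{\hat w_k}$ because the signal coefficient $\beta_j$ correlates with the new direction $G_k$ precisely through the increment $\hat w_k$, while for $j \in \others$ there is no such correlation and the shift vanishes.

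I expect the main obstacle to be the careful handling of the conditioning on $\Fcal_{k-1}$, since $dec_{k-1}$ — and hence $F_{k-1}$ and $G_k$ — depends on the $X_j$'s in a data-dependent way, so one cannot naively treat $G_k$ as a fixed direction. The resolution, which is the whole point of the two modifications introduced in Section \ref{sec:modificabovealgo}, is that the pacing criterion \eqref{eq:paccriterion} and the use of deterministic combining weights $\lambda_{k',k}$ make the relevant $\sigma$-field $\Fcal_{k-1}$ exactly the one generated by $(G_1,\dots,G_{k-1},\Zcal_1,\dots,\Zcal_{k-1})$, so that conditioning on it is conditioning on a finite set of linear functionals of a Gaussian array — a situation in which the conditional law is explicitly Gaussian. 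Thus the argument reduces to: (i) verify that $dec_{k'}$ for $k' \le k-1$ is $\Fcal_{k-1}$-measurable (noted already in the text); (ii) compute the conditional mean and covariance of $U^{\trn}G_k/\|G_k\|$ given $\Fcal_{k-1}$ via the Gaussian conditioning formula; (iii) separate out the chi-square length factor $\Chi_{d_k}$ and establish its conditional distribution and independence; and (iv) do the algebra that yields the $\hat s_k$, $\hat w_k$, $\nu_k$, $\sigma_k^2$, $d_k$ expressions. Steps (i)–(iii) are structural and mirror the first-step proof; step (iv) is where the recursive identities must be checked, but these are routine once the orthogonal decomposition of $-F_{k-1}$ is set up correctly.
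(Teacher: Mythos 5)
Your overall architecture --- induction on $k$, decomposition of $X$ into its components along $G_1,\dots,G_{k-1}$ plus a residual in the orthogonal complement, Gaussian conditioning to extract the mean shift and the rank-one-corrected covariance, and a separate chi-square factor for $\|G_k\|$ --- matches the paper's. But there is a genuine gap at the central step. You assert that conditioning on $\Fcal_{k-1}$ ``amounts to conditioning the Gaussian matrix $U$ on a finite collection of its linear functionals.'' That is not true: $G_2,\dots,G_{k-1}$ are built from $F_1,\dots,F_{k-2}$, which are sums of columns $X_j$ over the decoded sets $dec_{k'}$, and those sets are obtained by thresholding the $\Zcal$'s. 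The functionals you propose to condition on therefore have random, data-dependent coefficients; the selection effect makes them nonlinear functions of the Gaussian array, and the joint-Gaussian conditioning formula cannot be applied to $\Fcal_{k-1}$ directly. The paper's way around this is to make the induction hypothesis itself distributional: conditional on $\Fcal_{k-1}$, the coefficient matrix $V_k$ of $X$ in an orthonormal basis $\xi_k$ of the complement of the span of $G_1,\dots,G_{k-1}$ has i.i.d.\ rows with law $N(0,\Sigma_{k-1})$ (restricted to $j \in J_{k-1}$). Granting this, $dec_{k-1}$ is $\Fcal_{k-1}$-measurable, so $\gcoef_k = -\sum_{j\in dec_{k-1}}\sqrt{P_j}\,V_{k,j}$ is a linear combination of the conditionally Gaussian $V_{k,j}$ with measurable coefficients, and only then does conditioning $V_{k,j}$ further on $\gcoef_k$ become a legitimate joint-Gaussian computation yielding $b_{k-1,j}$, $\sigma_k^2$, and $\Sigma_k$.

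Relatedly, your outline omits the step that actually closes the induction: one must show the hypothesis propagates, i.e.\ that conditional on $\Fcal_k = \sigma\{\Fcal_{k-1}, G_k, \Zcal_k\}$ the rows of $V_{k+1}$ restricted to $J_k$ are again i.i.d.\ $N_{J_k}(0,\Sigma_k)$. In the paper this requires writing $V_{k+1} = (\xi_k^{coef})^{\trn}V_k = (\xi_k^{coef})^{\trn}U_k$, using that the columns of $\xi_k^{coef}$ are orthogonal to $\gcoef_k$, and deducing that $V_{k+1}$ is conditionally independent of $Z_k = U_k^{\trn}\gcoef_k/\|\gcoef_k\|$ given $\sigma\{\Fcal_{k-1},G_k\}$. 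Without this, the argument stalls at $k+1$, because after observing $G_k$ and $\Zcal_k$ one has once more conditioned on selection-dependent quantities. Your bookkeeping for $\hat s_k$, $\hat w_k$, $\nu_k$, and $\sigma_k^2$ is directionally right --- the false-alarm adjustment $\hat q^{adj}_{k-1} = \hat q_{k-1}/(1+\hat f_{k-1}/\hat q_{k-1})$ does enter exactly through the conditional variance of $\gcoef_{k}$ and the update $\nu_k = \nu_{k-1}/(1-\hat q_{k-1}^{adj}\nu_{k-1})$ --- but the two points above are where the proof actually lives, and they are not supplied by the reasoning you give.
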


The proof of the above lemma is considerably more involved. It is given in Appendix \ref{sec:disanalzcalk}. From the above lemma one gets that $\Zcal_{k,j}$ is the sum of two terms - the `shift' term and the `noise' term $Z_{k,j}$. The lemma also provided that the noise term is normal with a certain covariance matrix $\Sigma_k$.

Notice that Lemma \ref{laterstepdist} applies to the case $k = 1$ as well, with $\Fcal_{0}$ defined as  empty, since $\hat w_k = \hat s_k = 1$. The definition of $\Sigma_1$ using the above lemma is the same as that given in Lemma \ref{lem:firststepdist}. Also note that the conditional distribution of $(\Zcal_{k,j} : j \in J_k)$, as given in Lemma \ref{laterstepdist}, depends on $\Fcal_{k-1}$ only through the $\|G_1\|,\, \ldots,\, \|G_{k-1}\|$ and $(\Zcal_{k',j} : j \in J_{k'})$ for $k' < k$.

In the next subsection, we demonstrate that $Z_{k,j}$, for $j \in J_k$, are very close to being independent and identically distributed (i.i.d.).

\subsection{The nearby distribution} \label{subsec:nearbydist}
Recall that since the algorithm operates only on terms not detected previously, for the $k$ step we are only interested in terms in $J_k$. The previous two lemmas specified conditional distributions of $\Zcal_{k,j}$, for $j \in J_k$. However, for analysis purposes we find it helpful to assign distributions to the $\Zcal_{k,j}$, for $j \in J - J_k$ as well. In particular, conditional on $\Fcal_{k-1}$, write
\eqs{\Zcal_{k,j} = \sqrt{\hat w_k \, \cjrb} \,\left(\frac{\Chi_{d_k}}{\sqrt{n}}\right)1_{\{j\in \sent\}}  + Z_{k,j} \quad\text{for $j \in J$}.}

Fill out of specification of the distribution assigned by $\PP$, via a sequence of conditionals $\PP_{Z_{k}|\Fcal_{k-1}}$ for $Z_{k} = (Z_{k,j} : j \in J)$, which is for all $j$ in $J$, not just for $j$ in $J_k$. 
For the variables $Z_{k,J_k}$ that we actually use, the conditional distribution is that of $\PP_{Z_{k,J_k}|\Fcal_{k-1}}$ as specified in Lemmas \ref{lem:firststepdist} and \ref{laterstepdist}.  Whereas for the $Z_{k,j}$ with $j\in J-J_k$, given $\Fcal_{k-1}$, we conveniently arrange them to be independent standard normal. This definition is contrary to the  true conditional distribution of $\Zcal_{k,j}$ for $j \in J - J_k$, given $\Fcal_{k-1}$. However, it is a simple extension of the conditional distribution that shares the same marginalization to the true distribution of $(Z_{k,j} : j \in J_k)$ given $\Fcal_{k-1}$.

Further a simpler approximating distribution $\QQ$ is defined. Define $\QQ_{Z_{k}|\Fcal_{k-1}}$ to be independent standard normal. Also, like $\PP$, the measure $\QQ$ makes the $\Chi_{d_k}^2$ appearing in $\Zcal_{k,j}$, Chi-square$(n\!-\!k\+1)$ random variables independent of $Z_{k}$, conditional on $\Fcal_{k-1}$.



In the following lemma we appeal to a sense of closeness of the distribution $\PP$ to $\QQ$, such that events exponentially unlikely under $\QQ$ remain exponentially unlikely under the governing measure $\PP$.

\vspace{0.3cm}
\begin{lem}\label{measurebdd2} For any event $A$ that is determined by the random variables,
\eqq{\text{$\|G_1\|,\, \ldots,\, \|G_{k}\|$ and $(\Zcal_{k',j} : j \in J_{k'})$, for $k' \leq k$},\label{eq:randdet}}
one has
$$\PP[A] \le \QQ[A] e^{k c_0},$$
where $c_0 = (1/2)\log (1 + P/\sigma^2).$ 
\end{lem}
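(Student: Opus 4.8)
The plan is to bound the likelihood ratio $d\PP/d\QQ$ for the variables listed in \eqref{eq:randdet}, step by step, and show each factor in the telescoping product is at most $e^{c_0}$. Since both $\PP$ and $\QQ$ build up the joint law of $(\|G_1\|,\dots,\|G_k\|,\Zcal_1,\dots,\Zcal_k)$ through the same sequence of conditionals on $\Fcal_{k'-1}$ for $k'=1,\dots,k$, and since the $\Chi_{d_{k'}}^2 = \|G_{k'}\|^2/\sigma_{k'}^2$ factors have the \emph{same} chi-square$(n\!-\!k'\!+\!1)$ conditional law under both measures (and are conditionally independent of $Z_{k'}$ under both), the only discrepancy at step $k'$ is between $\PP_{Z_{k'}|\Fcal_{k'-1}}$, which is $N(0,\Sigma_{k'})$ with $\Sigma_{k'} = I - \delta_{k'}\delta_{k'}^{\trn}/P$ on the coordinates $j\in J_{k'}$ (and standard normal off $J_{k'}$ by our convention), versus $\QQ_{Z_{k'}|\Fcal_{k'-1}}$, which is standard normal on all of $J$. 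First I would write
\[
\frac{d\PP}{d\QQ}\Big|_{(\ref{eq:randdet})} \;=\; \prod_{k'=1}^{k} \frac{d\,N_{J_{k'}}(0,\Sigma_{k'})}{d\,N_{J_{k'}}(0,I)}(Z_{k',J_{k'}}),
\]
the $\Chi_{d_{k'}}$ factors cancelling exactly.

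Next I would estimate a single factor. For the Gaussian density ratio with covariance $\Sigma_{k'} = I - \delta_{k'}\delta_{k'}^{\trn}/P$, a rank-one perturbation of the identity, the ratio of densities at a point $z$ is $(\det \Sigma_{k'})^{-1/2}\exp\{-\tfrac12 z^{\trn}(\Sigma_{k'}^{-1}-I)z\}$. Using $\|\delta_{k'}\|^2 = \nu_{k'}\|\beta_{J_{k'}}\|^2 \le \nu_{k'}P = \hat s_{k'}\nu P$ and the Sherman--Morrison formula, $\Sigma_{k'}^{-1} - I = \delta_{k'}\delta_{k'}^{\trn}/(P - \|\delta_{k'}\|^2)$ is positive semidefinite, so the exponential factor is $\le 1$ pointwise; it remains to bound $(\det\Sigma_{k'})^{-1/2} = (1 - \|\delta_{k'}\|^2/P)^{-1/2}$. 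The worst case over the algorithm is the final step, where $\hat s_{k'}$ is largest; but $\hat s_{k'}\nu < 1$ is forced (since $\hat q^{adj,tot}_{k'-1}\nu < 1$ keeps $\hat s_{k'}$ finite), so a crude bound is not immediate and one must be a little careful. The cleaner route, which I expect the authors take, is to note $1 - \|\delta_{k'}\|^2/P \ge 1 - \nu_{k'} = 1 - \hat s_{k'}\nu$ and that since $Z_{k',j}$ for $j\in J-J_{k'}$ were \emph{defined} under $\PP$ to be standard normal (matching $\QQ$), those coordinates contribute a factor $1$; hence the full density ratio on $J$ equals the density ratio on $J_{k'}$, and one bounds $(\det \Sigma_{k'})^{-1/2}\le (1-\nu)^{-|J_{k'}|/2}$... no --- this grows with $|J_{k'}|$.

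So the correct accounting must exploit that $\Sigma_{k'}$ is a rank-one update, making $\det\Sigma_{k'} = 1 - \|\delta_{k'}\|^2/P$ a \emph{scalar}, independent of dimension. Then $(\det\Sigma_{k'})^{-1/2} = (1 - \|\delta_{k'}\|^2/P)^{-1/2} \le (1-\nu_k)^{-1/2}$ at worst over $k'\le k$, but we want a bound of $e^{c_0} = \sqrt{1+P/\sigma^2} = (1-\nu)^{-1/2}$ \emph{per step}. Since $\nu_{k'}=\hat s_{k'}\nu \ge \nu$, the naive per-step bound $(1-\nu_{k'})^{-1/2}$ is \emph{larger} than $e^{c_0}$, so the argument cannot be purely term-by-term with this constant; instead the product over $k'=1,\dots,k$ of $(1-\hat s_{k'}\nu)^{-1/2}$ must be shown to be at most $(1-\nu)^{-k/2}$, equivalently $\prod_{k'}(1-\hat s_{k'}\nu)\ge (1-\nu)^{k}$. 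This is where the structure $\hat s_{k'} = 1/(1-\hat q^{adj,tot}_{k'-1}\nu)$ enters: one has $1 - \hat s_{k'}\nu = (1 - \hat q^{adj,tot}_{k'-1}\nu - \nu)/(1-\hat q^{adj,tot}_{k'-1}\nu)$, and the telescoping/monotonicity of $\hat q^{adj,tot}_{k'}$ in $k'$ should make the product collapse. \textbf{The main obstacle} is precisely this: verifying the product inequality $\prod_{k'=1}^k (1 - \hat s_{k'}\nu) \ge (1-\nu)^k$ (equivalently the telescoping identity $\hat s_{k'}(1-\hat q^{adj}_{k'}\nu/\hat s_{k'}\cdots)$-style recursion among the $\hat s$'s) using only $0\le \hat q^{adj}_{k'}$ and $\hat q^{adj,tot}_{k'-1}\nu<1$; once that algebraic fact is in hand, combining it with the pointwise bound $\exp\{-\tfrac12 z^{\trn}(\Sigma_{k'}^{-1}-I)z\}\le 1$ (valid since the rank-one correction is PSD) gives $d\PP/d\QQ \le \prod_{k'}(\det\Sigma_{k'})^{-1/2} \le (1-\nu)^{-k/2} = e^{kc_0}$, and hence $\PP[A] = \E_{\QQ}[\,1_A\, d\PP/d\QQ\,] \le e^{kc_0}\QQ[A]$, as claimed.
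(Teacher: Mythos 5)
Your overall architecture is the paper's: factor the likelihood ratio through the conditionals given $\Fcal_{k'-1}$, note the chi-square parts agree under $\PP$ and $\QQ$, reduce to the per-step Gaussian density ratio for a rank-one perturbation $I-\delta_{k'}\delta_{k'}^{\trn}/P$, observe the exponential factor is $\le 1$ because the correction is PSD, and be left with $(\det\Sigma_{k'})^{-1/2}=(1-\|\delta_{k'}\|^2/P)^{-1/2}$. The paper's proof of Lemma \ref{measurebdd2} does exactly this via an induction on the iterated expectation, with the one-dimensional reduction packaged in its auxiliary Lemma \ref{renyi}.

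However, there is a genuine gap at the decisive estimate, and the escape route you propose does not work. You bound $\|\delta_{k'}\|^2 = \nu_{k'}\|\beta_{J_{k'}}\|^2$ by $\nu_{k'}P=\hat s_{k'}\nu P$, which is too crude, and you then try to rescue the argument by proving $\prod_{k'=1}^k(1-\hat s_{k'}\nu)\ge (1-\nu)^k$. That inequality is \emph{false}: since $\hat s_{k'}\ge 1$ one has $1-\hat s_{k'}\nu\le 1-\nu$ for every $k'$, so the product is $\le(1-\nu)^k$, with strict inequality as soon as any detection has occurred; worse, for large $snr$ the quantity $1-\hat s_{k'}\nu$ can be negative, so no telescoping identity among the $\hat s_{k'}$ can save a bound of this form. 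The correct observation, which the paper uses, is that $\|\beta_{J_{k'}}\|^2$ is not $P$ but $P$ times the weight of the \emph{sent terms still remaining} in $J_{k'}$, namely $P\,(1-\hat q_1-\cdots-\hat q_{k'-1})$. Hence
$$\frac{\|\delta_{k'}\|^2}{P} \;=\; \hat s_{k'}\,\nu\,\bigl(1-\hat q_1-\cdots-\hat q_{k'-1}\bigr)
\;=\;\nu\,\frac{1-\sum_{k''<k'}\hat q_{k''}}{1-\nu\sum_{k''<k'}\hat q_{k''}^{adj}}\;\le\;\nu,$$
because $\nu\,\hat q_{k''}^{adj}\le \hat q_{k''}$. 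The shrinkage of the index set $J_{k'}$ exactly compensates the growth of $\nu_{k'}$, so each step contributes at most $(1-\nu)^{-1/2}=e^{c_0}$ term by term, and no product inequality across steps is needed.
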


For ease of exposition we give the proof in Appendix \ref{apnearby}. Notice that the set $A$ is $\Fcal_k$ measurable, since the random variables that $A$ depends on are $\Fcal_k$ measurable.

\subsection{Separation analysis} \label{subsec:alterapproach}
Our analysis demonstrates that we can give good lower bounds for $\hat q_k$, the weighted proportion of correct detection in each step, and good upper bounds on $\hat f_k$, which is the proportion of false alarms in each steps.

 Denote the exception events $$\aq{k} = \{\hat q_k < q_k \} \quad\text{and}\quad\af{k} = \{\hat f_k > f_k\}.$$
Here the $q_k$ and $f_k$ are deterministic bounds for the proportion of correct detections and false alarms respectively, for each $k$. These will be specified in the subsequent subsection. 

Assuming that we have got good controls on these quantities up to step $k - 1$, we now describe our characterization of $\Zcal_{k,j}^{comb}$, for $j \in J_k$, used in detection for the $k$th step. Define the exception sets $$\aq{1,k-1} =  \displaystyle\cup_{k' = 1}^{k-1} \aq{k'} \quad \text{and} \quad \af{1,k-1} =
\cup_{k' = 1}^{k-1}\af{k'}.$$



The manner in which the quantities $\hat q_1, \ldots, \hat q_k$ and $\hat f_1,\ldots \hat f_k$ arise in the distributional analysis of Lemma \ref{laterstepdist} is through the sum
$$\hat q_k^{adj,tot}=\hat q_1^{adj} + \ldots + \hat q_k^{adj}$$
of the adjusted values $\hat q_k^{adj} = \hat q_k/(1+\hat f_k/\hat q_k)$. Outside of $\aq{1,k-1} \cup \af{1,k-1}$, one has \eqq{\hat q_{k'}^{adj} \geq q_{k'}^{adj}\quad\text{for $k' = 1, \ldots, k-1 $},\label{eq:adjrel}}
where, for each $k$,
$$q_k^{adj} = q_k/(1 + f_k/q_k).$$
Recall that from Lemma \ref{laterstepdist} that,
$$\hat w_k =\frac{1}{1- \hat q_{k-1}^{adj,tot} \nu} - \frac{1}{1- \hat q_{k-2}^{adj,tot} \nu}.$$
From relation \eqref{eq:adjrel}, one has $\hat w_{k'} \geq w_{k'}$, for $k' = 1,\ldots, k$, where $w_1 = 1$, and for $k >1$,
$$w_k = \frac{1}{1- q_{k-1}^{adj,tot} \nu} - \frac{1}{1- q_{k-2}^{adj,tot} \nu}.$$
Here, for each $k$, we take $q_{k}^{adj,tot} = q_1^{adj} + \ldots + q_k^{adj}$.

Using this $w_k$ we define the corresponding vector of weights $(\lambda_{k',k}: k' = 1,\ldots,k)$, used in forming the statistics $\Zcal_{k,j}^{comb}$, as
$$\lambda_{k',k}=\sqrt{\frac{w_{k'}}{w_1+w_2+\ldots+w_k}}.$$

Given that the algorithm has run for $k - 1$ steps, we now proceed to describe how we characterize the distribution of $\Zcal_{k,j}^{comb}$ for the $k$th step. Define the additional exception event $$\ah{1,k-1} = \union_{k'=1}^{k-1} \ah{k'}, \quad \text{with}\quad \ah{k} = \{\Chi_{d_{k}}^2/n \le 1\!-\!h\},$$
where $0 < h <1$. Here the term $\Chi_{d_k}^2$ is as given in Lemma \ref{laterstepdist}. It follows a Chi-square distribution with $d_k = n-k+1$ degrees of freedom. Define $$\aee{k-1} = \aq{1,k-1} \cup \af{1,k-1} \cup \ah{1,k-1}.$$
Notice that we have for $j \in \sent$ that
$$\Zcal_{k',j} = \sqrt{\hat w_{k'} \cjrb} \, (\Chi_{d_{k'}}/\sqrt{n}) + Z_{k',j}$$
and for $j \in \others$, we have $$\Zcal_{k',j} = Z_{k',j},$$
for $k' = 1,\ldots, k$. Further, denote $\cjrbh = \cjrb(1 - h)$. Then
on the set $\aee{k-1}^c\cap \ah{k}^c$, we have for $k' = 1,\ldots,k$ that
$$\Zcal_{k',j} \geq \sqrt{ w_{k'}} \, \sqrt{\cjrbh} \,\, \,+\,\, Z_{k',j}\quad\text{for $j \in \sent$}.$$
Recall that, $$\Zcal_{k,j}^{comb}=\lambda_1\, \Zcal_{1,j}\, +\, \lambda_2\, \Zcal_{2,j} \,+\, \ldots\, +\, \lambda_k \,\Zcal_{k,j},$$
where for convenience we denote $\lambda_{k',k}$ as simply $\lambda_{k'}$.
Define for each $k$ and $j \in J$, the combination of the noise terms by
$$Z_{k,j}^{comb}=\lambda_1 Z_{1,j} + \lambda_2 Z_{2,j} + \ldots + \lambda_k Z_{k,j}.$$

From the above one sees that, for $j \in \others$ the $\Zcal_{k,j}^{comb}$ equals $Z_{k,j}^{comb}$, and for $j \in \sent$, on the set $\aee{k-1}^c\cap \ah{k}^c$, the statistic $\Zcal_{k,j}^{comb}$ exceeds
$$\left[ \lambda_1 \sqrt{w_1} +   \ldots \+ \lambda_k\sqrt{ w_k}\right]\sqrt{\cjrbh}  + \,Z_{k,j}^{comb},$$
which is equal to
$$\sqrt{\frac{\cjrbh}{1-q_{k-1}^{adj, tot}\,\nu}} \,  \,+\, Z_{k,j}^{comb}.$$

Summarizing, $$\Zcal_{k,j}^{comb} = Z_{k,j}^{comb} \quad\quad\mbox{for $j \in \others$}$$
and, on the set $\aee{k-1}^c\cap \ah{k}^c$,
$$\Zcal_{k,j}^{comb} \geq \shift_{k,j} \,+\, Z_{k,j}^{comb},\quad\quad\mbox{for $j \in \sent$},$$
where  $$\shift_{k,j}=\sqrt{\frac{\cjrbh}{1-x_{k-1}\,\nu}},$$ with $x_0 = 0$ and $x_{k-1}=q_{k-1}^{adj,tot}$, for $k \geq 2$. Since the $x_k$'s are increasing, the $\shift_{k,j}$'s increases with $k$. It is this increase in the mean shifts that helps in additional detections.

For each $j \in J$, set $H_{k,j}$ to be the event,
 \eqq{
 H_{k,j} = \left\{\shift_{k,j} 1_{\{j\, \in\, \sent\}} \, + \, Z_{k,j} \geq \tau \right\} .\label{eq:hkj}
 }
Notice that \eqq{H_{k,j} = \{ \Zcal_{k,j}^{comb} \geq \tau\}\quad\quad\mbox{for $j \in \others$}.\label{eq:hkjotherlow}} On the set $\aee{k-1}^c\cap \ah{k}^c$, defined above, one has \eqq{H_{k,j} \subseteq \{ \Zcal_{k,j}^{comb} \geq \tau\}\quad\quad\mbox{for $j \in \sent$}.\label{eq:hkjsentlow}}
Using the above characterization of $\Zcal_{k,j}^{comb}$ we specify in the next subsection the values for $q_{1,k},\, f_k$ and $q_k$. Recall that the quantity $q_{1,k}$, which was defined is subsection \ref{subsec:modpacing}, gave  controls on $size_{1,k}$, the size of the decoded set $dec_{1,k}$ after the $k$ step.

\subsection{Specification of $f_k,\, q_{1,k}$, and $q_k$, for $k = 1,\ldots,m$} \label{subsec:targetfalse}

 Recall from subsection \ref{subsec:nearbydist} that under the $\QQ$ measure that $Z_{k,j}$, for $j \in J$, are i.i.d. standard normal random variables. Define the random variable \eqq{ \fkup{k}= \sum_{j \in \others} \pi_j 1_{H_{k,j}} \label{eq:fkhatup}.}
 Notice that $\hat f_k \leq \fkup{k}$ since 
 \alge{\hat f_k &= \displaystyle\sum_{j \in dec_k \cap \others} \pi_j\nonumber\\
            &\leq \sum_{j \in \others} \pi_j 1_{\{\Zcal_{k,j}^{comb} \geq \tau\}}.\label{eq:tempsum}
       }
The above inequality follows since $dec_k$ is a subset of $thresh_k =  \{j:\Zcal_{k,j}^{comb} \geq \tau\}$ by construction. Further \eqref{eq:tempsum} is equal to $\fkup{k}$ using \eqref{eq:hkjotherlow}.

The expectation of $\fkup{k}$ under the $\QQ$-measure is given by,
$$\E_{\QQ}\left(\fkup{k}\right) = (\M\!-\!1)\bar \Phi(\tau),$$
where $\bar \Phi(\tau)$ is the upper tail probability of a  standard normal at $\tau$.  Here we use the fact that the $H_{k,j}$, for $j \in \others$, are i.i.d Bernoulli $\bar \Phi(\tau)$ under the $\QQ$-measure and that $\sum_{j \in \others} \pi_j$ is equal to $(M-1)$.

 Define $f^* = (\M\!-\!1)\bar \Phi(\tau),$ which is the expectation of $\fkup{k}$ from above. One sees that
\eqq{f^* \leq \frac{\exp\big\{-a \sqrt{2 \log \M} -(1/2)a^2\big\}}{(\sqrt{2 \log \M} + a)\sqrt{2\pi}} ,\label{eq:fstarbound}}
using the form for the threshold $\tau$ in \eqref{eq:taudef}. We also use that
 $\bar\Phi(x) \leq \phi(x)/x$ for positive $x$, with $\phi$ being the standard normal density. We take $f_k = f $ to be a value greater than $f^*$. We express it in the form $$f= \rho f^*,$$ with a constant factor $\rho > 1$. This completes the specification of the $f_k$.

Next, we specify the $q_{1,k}$ used in pacing the steps. Denote the random variable,
\eqq{\hat q_{1,k} = \sum_{j \: \sent} \pi_j 1_{H_{k,j}} \label{eq:qhatonek}.}
Likewise, define $q_{1,k}^*$ as the expectation of $\hat q_{1,k}$ under the $\QQ$ measure. Using \eqref{eq:hkj}, one has
$$q_{1,k}^* = \sum_{j \: \sent} \pi_j \bar \Phi(-\mu_{k,j}),$$
where $\mu_{k,j} = \shift_{k,j}-\tau$. Like before, we take $q_{1,k}$ to be a value less than $q_{1,k}^*$. More specifically, we take
\eqq{q_{1,k} = q_{1,k}^* - \eta\label{eq:qonekdef}}
for a positive $\eta$. 

This specification of $q_{1,k}^*$, and the related $q_{1,k}$, is a recursive definition. In particular, denoting
$$\mu_j(x) =\sqrt{1/(1-x\nu)}\,\, \sqrt{\cjrbh} \,-\, \tau.$$ Then $q_{1,k}^*$ equals the function
\eqq{g_L(x) = \sum_{j \: \sent} \pi_j \bar \Phi(-\mu_{j}(x))\label{eq:gdef}}
evaluated at $x_{k-1} = q_{k-1}^{adj,tot}$, with $x_0 = 0$.

For instance, in the constant power allocation case $\cjrbh =  (R_{0}(1- h)/R) \, 2\log \M$, is the same for all $j$.  This makes  $\shift_{k,j}$ the same for each $j$.  Consequently, $\mu_j(x) = \mu(x)$, where $\mu(x) = \sqrt{1/(1-x\nu)}\, \sqrt{\cjrbh} - \tau$. Then one has $q_{1,k}^*=\bar \Phi(-\mu(x_{k-1})).$
It obeys the recursion $q_{1,k}^* = g_L(x)$ evaluated at $x_{k-1} = q_{k-1}^{adj,tot}$, 
with $g_L(x)=\bar \Phi(-\mu(x))$.


Further, we define the target detection rate for the $k$ th step, given by $q_k$, as
\eqq{q_k = q_{1,k} - q_{1,k-1} - 1/L_{\pi} - f\label{eq:qk},}
with $q_{1,0}$ taken to zero.  Thus the $q_k$ are specified from the $q_{1,k}$ and $f$. Also, $1/L_\pi = \min_{\ell = 1,\ldots, L} \pi_{(\ell)}$ is a quantity of order $1/L$. For the power allocation \eqref{eq:powerweuse}, one sees that $L_\pi = L\nu/(2\C)$.

\subsection{Building Up the Total Detection Rate} \label{subsec:builddetect}

The previous section demonstrated the importance of the function $g_L(x)$, given by \eqref{eq:gdef}. This function is defined on $[0,1]$ and take values in the interval $(0,1)$. Recall from subsection \ref{subsec:modpacing}, on pacing the steps, that the quantities $q_{1,k}$ are closely related to the proportion of sections correctly detected after $k$ steps, if we ignore false alarm effects. Consequently, to ensure sufficient correct detections one would like the $q_{1,k}$ to increase with $k$ to a value near 1. Through the recursive definition of $q_{1,k}$, this amounts to ensuring that the function $g_L(x)$ is greater than $x$ for an interval $[0,x_r]$, with $x_r$ preferably near 1.




\vspace{0.3cm}
\noindent
{\bf {Definition:}} A function $g(x)$ is said to be {\em accumulative} for $0\le x \le x_r$ with a positive $gap$, if $$g(x)-x \ge gap$$
for all $0\le x \le x_r$.  Moreover, an adaptive successive decoder is {\em accumulative} with a given rate and power allocation if corresponding function $g_L(x)$  satisfies this property for given $x_r$ and positive $gap$.
\vspace{0.3cm}

To detail the progression of the $q_{1,k}$ consider the following lemma.

\vspace{0.3cm}
\begin{lem}\label{progress} Assume $g(x)$ is accumulative on $[0,x_r]$ with a positive $gap$, and $\eta$ is chosen so that $gap - \eta$ is positive. Further, assume
\eqq{f \leq (gap - \eta)^2/8 - 1/(2L_\pi)\label{eq:fcriterion}.}
Then, one can arrange for an $m$ so that the $q_{1,k}$, for $k = 1,\ldots,m$, defined by \eqref{eq:qonekdef}, are increasing and
$$q_{1,m} \geq x_r + gap - \eta.$$
Moreover, the number of steps $m$ is at most $2/(gap - \eta)$.
\end{lem}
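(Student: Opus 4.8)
The plan is to track the recursion $q_{1,k} = q_{1,k}^* - \eta = g_L(x_{k-1}) - \eta$ where $x_{k-1} = q_{k-1}^{adj,tot}$, and show that, as long as $x_{k-1}$ stays in $[0,x_r]$, the accumulative property forces a strict per-step increase of at least $(gap-\eta)/2$ or so, which in turn caps the number of steps needed to reach $q_{1,m} \geq x_r + gap - \eta$. The first step is to relate $x_{k-1}$ (the cumulative adjusted correct-detection rate) to $q_{1,k-1}$ (the cumulative pacing level). By definition $q_k = q_{1,k} - q_{1,k-1} - 1/L_\pi - f$ and $q_k^{adj} = q_k/(1+f/q_k) \ge q_k - f$ (using $f/q_k \le$ something small, or directly $q_k/(1+f/q_k) = q_k - f q_k/(q_k+f) \ge q_k - f$), so summing over $k'$ the cumulative adjusted rate $q_{k-1}^{adj,tot} = \sum_{k'\le k-1} q_{k'}^{adj}$ telescopes to at least $q_{1,k-1} - (k-1)(1/L_\pi + 2f)$. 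I would need the number of steps times $(1/L_\pi + 2f)$ to be absorbed into the gap; this is where the hypothesis \eqref{eq:fcriterion}, $f \le (gap-\eta)^2/8 - 1/(2L_\pi)$, gets used, since it makes $1/L_\pi + 2f \lesssim (gap-\eta)^2/4$, and with $m \le 2/(gap-\eta)$ steps the total slippage is at most $(gap-\eta)/2$.

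The second step is the monotone-progress argument proper. Since $x \mapsto \mu_j(x)$ and hence $g_L$ is non-decreasing in $x$, and $g_L(x) \ge x + gap$ on $[0,x_r]$, I would show by induction that the $q_{1,k}$ are increasing and that $q_{1,k} \ge \min\{x_r + gap - \eta,\ q_{1,k-1} + (gap-\eta)/2\}$ as long as the argument $x_{k-1}$ has not yet exceeded $x_r$. The inductive step: given $q_{1,k-1}$, we have $x_{k-1} \ge q_{1,k-1} - (\text{slippage}) $; if $x_{k-1} \le x_r$ then $q_{1,k} = g_L(x_{k-1}) - \eta \ge x_{k-1} + gap - \eta \ge q_{1,k-1} + gap - \eta - (\text{slippage})$, and bounding the slippage by $(gap-\eta)/2$ gives the claimed per-step increment. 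Because each step advances $q_{1,k}$ by at least $(gap-\eta)/2$ until the target is reached, and $q_{1,k}$ is bounded by $1$, the process terminates with $q_{1,m} \ge x_r + gap - \eta$ after at most $2/(gap-\eta)$ steps; one also has to check that once $x_{k-1}$ crosses $x_r$ the value $q_{1,k}$ is already at the target level, which follows since $g_L(x_r) - \eta \ge x_r + gap - \eta$ and $g_L$ is monotone.

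The main obstacle I expect is the bookkeeping of the gap between the \emph{intended} cumulative detection level $q_{1,k-1}$ and the \emph{effective} shift argument $x_{k-1} = q_{k-1}^{adj,tot}$: these differ by accumulated false-alarm and pacing-granularity losses (the $f$ and $1/L_\pi$ terms), and one must verify that over the whole run of $m \le 2/(gap-\eta)$ steps this difference stays below $(gap-\eta)/2$, which is exactly the content of condition \eqref{eq:fcriterion}. A secondary subtlety is making the adjusted-rate inequality $q_k^{adj} \ge q_k - f$ clean, and ensuring the $q_k$ produced by \eqref{eq:qk} are themselves positive so the recursion is well-posed; this should follow from the accumulative gap being strictly larger than $\eta$ together with the smallness of $f$ and $1/L_\pi$. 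Everything else — monotonicity of $g_L$, the telescoping, and the final step count — is routine once the slippage bound is in hand.
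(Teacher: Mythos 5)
Your proposal is correct and follows essentially the same route as the paper's proof: the telescoping bound $q_{k-1}^{adj,tot} \geq q_{1,k-1} - (k-1)(2f + 1/L_\pi)$ via $q/(1+f/q) \geq q - f$, the use of condition \eqref{eq:fcriterion} to keep the accumulated slippage below $(gap-\eta)/2$, and the conclusion $q_{1,m} \geq g_L(x_r) - \eta \geq x_r + gap - \eta$ by monotonicity once the argument crosses $x_r$. The only cosmetic difference is that the paper resolves the circular dependence between the per-step increment and the step count by solving the self-consistent quadratic for the increment $\tL$ (obtaining $\tL \geq (gap-\eta)/2$ exactly when $2f + 1/L_\pi \leq (gap-\eta)^2/4$), whereas you take the lower bound $(gap-\eta)/2$ directly via induction, which amounts to the same thing.
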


The proof of Lemma \ref{progress} is given in Appendix \ref{proofprogress}.
We now proceed to describe how we demonstrate the reliability of the algorithm using the quantities chosen above.


\section{Reliability of the Decoder} \label{sec:reliabmulti}

We are interested in demonstrating that the probability of the event $\aq{1,m}\cup\af{1,m}$ is small. This ensures that for each step $k$, where $k$ ranges from 1 to $m$, the proportion of correct detections $\hat q_k$ is at least $q_k$, and the proportion of false alarms $\hat f_k$ is at most $f_k = f$. We do this by demonstrating that the probability of the set
$$E_m = \aq{1,m}\cup\af{1,m}\cup \ah{1,m}$$
is exponentially small. The following lemma will be useful in this regard.
Recall that
$$\aq{1,m} = \cup_{k =1}^m \{\hat q_k < q_k\}\quad\text{and}\quad\af{1,m} = \cup_{k =1}^m \{\hat f_k > f\}.$$
\begin{lem} Let $q_{1,k}, \, q_k$ and $f$ be as defined in subsection \ref{subsec:targetfalse}.
  \label{lem:paccritlem}
Denote $$\tilde{A}_{1,m} = \cup_{k = 1}^m \{\hat q_{1,k} < q_{1,k}\}\quad \text{and}\quad \tilde{B}_{1,m} = \cup_{k = 1}^m \{\fkup{k} > f\}.$$
Then,
  $$\aee{m} \subseteq  \tilde{A}_{1,m} \cup \tilde{B}_{1,m} \cup \ah{1,m}.$$
\end{lem}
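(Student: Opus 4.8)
The statement is a containment of events, so the plan is to prove the equivalent inclusion of complements. Writing $\tilde{G} = \tilde{A}_{1,m}^c \cap \tilde{B}_{1,m}^c \cap \ah{1,m}^c$, I will show $\tilde{G} \subseteq \aq{1,m}^c \cap \af{1,m}^c \cap \ah{1,m}^c = \aee{m}^c$. Since $\ah{1,m}^c$ appears on both sides it may be set aside, and it suffices to show that on $\tilde{G}$ one has $\hat q_k \ge q_k$ and $\hat f_k \le f$ for every $k = 1,\ldots,m$.

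The false-alarm inequality is the easy half and needs no induction: as recorded in \eqref{eq:tempsum}, the construction $dec_k \subseteq thresh_k$ together with the identity \eqref{eq:hkjotherlow} gives $\hat f_k \le \fkup{k}$ for every $k$ with no conditioning, and on $\tilde{B}_{1,m}^c$ this forces $\hat f_k \le f$, so $\af{1,m}$ does not occur on $\tilde{G}$. For the correct-detection inequality I would induct on $k$, the hypothesis being that on $\tilde{G}$, for all $k' \le k-1$, we have $\hat q_{k'} \ge q_{k'}$ and the pacing bracket $q_{1,k'} - 1/L_\pi < size_{1,k'} \le q_{1,k'}$; the base case is vacuous, with $dec_{1,0} = \emptyset$, $size_{1,0} = 0$, $q_{1,0} = 0$, and $\aee{0}^c$ the whole space.

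For the inductive step, under the hypothesis and the already-established $\hat f_{k'} \le f$, none of $\aq{1,k-1}$, $\af{1,k-1}$, $\ah{1,k-1}$ occurs on $\tilde{G}$, so $\aee{k-1}^c$ holds there; and $\ah{k}^c$ holds on $\tilde{G}$ since $\ah{k} \subseteq \ah{1,m}$. Hence \eqref{eq:hkjsentlow} applies, giving $1_{H_{k,j}} \le 1_{\{\Zcal_{k,j}^{comb} \ge \tau\}}$ for $j \in \sent$. Splitting $\sent = (\sent \cap J_k) \sqcup (\sent \cap dec_{1,k-1})$ in the definition \eqref{eq:qhatonek} of $\hat q_{1,k}$ and bounding the decoded part crudely by $\sum_{k'=1}^{k-1} \hat q_{k'}$, I obtain on $\tilde{G}$ that
$$\sum_{j \in \sent \cap J_k} \pi_j 1_{\{\Zcal_{k,j}^{comb} \ge \tau\}} \ \ge\ \sum_{j \in \sent \cap J_k} \pi_j 1_{H_{k,j}} \ \ge\ \hat q_{1,k} - \sum_{k'=1}^{k-1}\hat q_{k'} \ \ge\ q_{1,k} - \sum_{k'=1}^{k-1}\hat q_{k'}.$$
Adding $size_{1,k-1} = \sum_{k'=1}^{k-1}(\hat q_{k'} + \hat f_{k'}) \ge \sum_{k'=1}^{k-1}\hat q_{k'}$ to both sides shows the pacing feasibility condition \eqref{eq:pacingworkcond} holds at step $k$, so the pacing rule places $size_{1,k}$ in the bracket $(q_{1,k} - 1/L_\pi,\, q_{1,k}]$. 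Finally, $\hat q_k + \hat f_k = size_{1,k} - size_{1,k-1} > (q_{1,k} - 1/L_\pi) - q_{1,k-1}$, and combining with $\hat f_k \le f$ and the definition \eqref{eq:qk} of $q_k$ yields $\hat q_k > q_k$; this closes the induction, so $\aq{1,m}$ does not occur on $\tilde{G}$, which is the desired inclusion.

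The step I expect to be the main obstacle is precisely this inductive accounting: $\hat q_{1,k}$ is an idealized cumulative proxy (a sum over all of $\sent$ against the proxy events $H_{k,j}$), whereas what actually feeds the pacing at step $k$ is a sum over the surviving indices $J_k$ against $\{\Zcal_{k,j}^{comb} \ge \tau\}$, so one must strip off the contribution of already-decoded correct terms and dominate it by the earlier $\hat q_{k'}$, keeping all inequalities pointing the right way, and then convert the pacing bracket back into a per-step detection bound via $size_{1,k} - size_{1,k-1} = \hat q_k + \hat f_k$. A secondary point to dispatch is the case where the algorithm halts before step $m$ (because $L$ terms are decoded or nothing exceeds $\tau$): one checks that on $\tilde{G}$ the pacing never exhausts $thresh_k$, and the chosen $m$ and $q_{1,k}$ keep $size_{1,k}$ strictly below $1$, so early stopping cannot occur before step $m$ on $\tilde{G}$.
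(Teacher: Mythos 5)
Your proof is correct and follows essentially the same route as the paper's: the easy non-inductive bound $\hat f_k \le \fkup{k}$ handles the false-alarm events, and an induction on $k$ uses \eqref{eq:hkjsentlow} to verify the pacing feasibility condition \eqref{eq:pacingworkcond}, places $size_{1,k}$ in the bracket $(q_{1,k}-1/L_\pi,\,q_{1,k}]$, and converts the increment $size_{1,k}-size_{1,k-1}=\hat q_k+\hat f_k$ into $\hat q_k\ge q_k$. The only cosmetic difference is that the paper verifies \eqref{eq:pacingworkcond} by noting its left side directly dominates $\sum_{j\in\sent}\pi_j 1_{\{\Zcal_{k,j}^{comb}\ge\tau\}}\ge\hat q_{1,k}$, whereas you split $\sent$ into decoded and surviving parts and add back $size_{1,k-1}$ — both yield the same conclusion.
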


For ease of exposition we provide the proof of this lemma in Appendix \ref{appfpaccritlem}. The lemma above described how we control the probability of the exception set $\aee{m}$. 

%
We demonstrate that the probability of $\aee{m}$ is exponentially small by showing that the probability of
$\tilde{A}_{1,m} \cup \tilde{B}_{1,m} \cup \ah{1,m},$
which contains $\aee{m}$, is exponentially small.
 Also, notice that outside the set $\aee{m}$, the weighted fraction of failed detection and false alarms, denoted by $\hatdwght$ in \eqref{eq:hatdeltawght}, is bounded by
$$(1 - \sum_{k = 1}^m  q_k) + m f,$$
which, after recalling the definition of $q_k$ in \eqref{eq:qk}, can also be expressed as,
\eqq{1 - q_{1,m}  + 2m f + m/L_\pi.\label{eq:hatdeltawghtbound}}
Now, assume that $g_L$ is accumulative on $[0, x_r]$ with a positive $gap$. Then, from Lemma \ref{progress}, for $\eta < gap$, and $f> f^*$ satisfying \eqref{eq:fcriterion}, one has that \eqref{eq:hatdeltawghtbound} is upper bounded by
\eqq{\dwght = (1 - x_r) - (gap - \eta)/2,\label{eq:deltawghtweuse}}
using the bounds on $f$, $q_{1,m}$ and $m$ given in the lemma. Consequently, $\hat{\delta}_{mis}$ the mistake rate after $m$ steps, given by \eqref{eq:hatdeltamis}, is bounded by $\delta_{mis}$ outside of $\tilde{A}_{1,m} \cup \tilde{B}_{1,m} \cup \ah{1,m}$, where,
\eqq{\delta_{mis} = \frac{snr}{2\C}[(1 - x_r) - (gap - \eta)/2]\label{eq:deltamisweuse},}
via \eqref{eq:deltamisdeltawghtrel}.
We then have the following theorem regarding the reliability of the algorithm.


\begin{thm} \label{reliabddthm} Let the conditions of Lemma \ref{progress} hold, and let $\delta_{mis}$ be as in \eqref{eq:deltamisweuse}. Then, 
\algg{\PP(\hat{\delta}_{mis} > \delta_{mis})\,\, \leq & \,\,m e^{-2L_\pi \eta^2+m c_0} + m e^{-L_\pi f \Dcal(\rho)/\rho}\\
 &+ m e^{-(n\!-\!m\+1)h^2/2}.}
Here the quantities $\eta$ and $\rho$ are as defined in subsection \ref{subsec:targetfalse}, and $c_0$ is as given in Lemma \ref{measurebdd2}. Also $\Dcal(\rho)=\rho \log \rho - (\rho\!-\!1)$.
\end{thm}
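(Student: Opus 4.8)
The plan is to bound $\PP(\hat\delta_{mis} > \delta_{mis})$ by $\PP(\aee{m})$ and then decompose the latter using the containment from Lemma \ref{lem:paccritlem}. First I would note that, as established in the paragraph preceding the theorem (via \eqref{eq:hatdeltawghtbound}, \eqref{eq:deltawghtweuse}, \eqref{eq:deltamisweuse} and \eqref{eq:deltamisdeltawghtrel}), outside of $\aee{m}$ one has $\hat\delta_{mis}\le\delta_{mis}$, so that $\PP(\hat\delta_{mis}>\delta_{mis})\le\PP(\aee{m})$. By Lemma \ref{lem:paccritlem}, $\aee{m}\subseteq\tilde A_{1,m}\cup\tilde B_{1,m}\cup\ah{1,m}$, and a union bound over $k=1,\ldots,m$ within each of the three families reduces everything to bounding, for a single step $k$: (i) $\PP(\hat q_{1,k} < q_{1,k})$, (ii) $\PP(\fkup{k} > f)$, and (iii) $\PP(\Chi_{d_k}^2/n \le 1-h)$.

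For (i): recall $q_{1,k} = q_{1,k}^* - \eta$ where $q_{1,k}^* = \E_\QQ \hat q_{1,k}$, and $\hat q_{1,k} = \sum_{j\in\sent}\pi_j 1_{H_{k,j}}$ is a $\pi_j$-weighted sum of independent Bernoulli indicators under $\QQ$ (here the $H_{k,j}$, $j\in\sent$, are functions only of the $Z_{k,j}$, which are i.i.d.\ standard normal under $\QQ$, and $\shift_{k,j}$ is $\Fcal_{k-1}$-measurable). Since the $\pi_j$ for $j\in\sent$ sum to $1$ and each is at most $1/L_\pi$, a Hoeffding-type bound on this bounded weighted sum gives $\QQ(\hat q_{1,k} < q_{1,k}^* - \eta)\le e^{-2 L_\pi \eta^2}$; transferring to $\PP$ by Lemma \ref{measurebdd2} (noting the event is determined by the variables in \eqref{eq:randdet}) costs a factor $e^{k c_0}\le e^{m c_0}$, yielding the $m e^{-2L_\pi\eta^2 + m c_0}$ term. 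For (ii): $\fkup{k} = \sum_{j\in\others}\pi_j 1_{H_{k,j}}$ with $H_{k,j} = \{Z_{k,j}\ge\tau\}$ for $j\in\others$, again a weighted sum of i.i.d.\ Bernoulli$(\bar\Phi(\tau))$ variables under $\QQ$ with weights summing to $\M-1$ and $\E_\QQ\fkup{k} = f^* = (\M-1)\bar\Phi(\tau)$; a relative-entropy (Chernoff/Bennett) bound for the upper tail of such a sum at level $f = \rho f^*$ gives $\QQ(\fkup{k} > f)\le e^{-L_\pi f \Dcal(\rho)/\rho}$ with $\Dcal(\rho) = \rho\log\rho - (\rho-1)$, and again Lemma \ref{measurebdd2} would transfer this to $\PP$ --- though since the target bound for this term carries no $e^{m c_0}$ factor, I would check whether the $\QQ$-bound can be used directly (e.g.\ because $\fkup{k}$ is a sum of genuinely i.i.d.\ terms under $\PP$ as well for $j\in\others\subseteq J_k$, or via a cleaner argument). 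For (iii): $\Chi_{d_k}^2$ is, under $\PP$ (or $\QQ$), chi-square with $d_k = n-k+1 \ge n-m+1$ degrees of freedom, and the standard lower-tail bound $\PP(\Chi_{d}^2 \le d(1-h')) \le e^{-d h'^2/2}$, applied with the deficit $1-h$ relative to $n$ (so the effective per-degree-of-freedom deficit is at least $h$ after accounting for $d_k\ge n-m+1$), gives the $m e^{-(n-m+1)h^2/2}$ term.

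The main obstacle I anticipate is item (i): getting the exponent exactly $2L_\pi\eta^2$ from a concentration inequality for the \emph{weighted} sum $\sum_{j\in\sent}\pi_j 1_{H_{k,j}}$ rather than an unweighted average. The clean way is Hoeffding's inequality for a sum of independent bounded random variables $\pi_j 1_{H_{k,j}}\in[0,\pi_j]$, giving exponent $2\eta^2/\sum_j\pi_j^2$; one then needs $\sum_{j\in\sent}\pi_j^2 \le (\max_j\pi_j)\sum_j\pi_j = 1/L_\pi$, which holds since $\max_{j\in\sent}\pi_j = \max_\ell\pi_{(\ell)}$ --- wait, $1/L_\pi$ was \emph{defined} as $\min_\ell\pi_{(\ell)}$, so I must be careful about which extreme appears, and likely the intended statement uses that the relevant weighted Hoeffding constant is controlled by $L_\pi$ in the direction that makes the inequality $\le 1/L_\pi$ valid (for the geometric allocation \eqref{eq:powerweuse} the largest weight is $\pi_{(1)}$ of order $1/L$, comparable to $L_\pi^{-1}$ up to an $snr$-dependent constant, which is presumably absorbed). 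A secondary subtlety is the careful bookkeeping of which exception events $\ah{k}$, $\aq{k'}$, $\af{k'}$ are needed to pass from the true conditional distributions of Lemma \ref{laterstepdist} to the clean lower bound $\shift_{k,j} + Z_{k,j}$ on $\Zcal_{k,j}^{comb}$ --- but this has already been handled in Subsection \ref{subsec:alterapproach} and encoded in Lemma \ref{lem:paccritlem}, so here it is enough to cite those. Finally I would assemble the three bounds, each multiplied by $m$ from the union bound over steps, to obtain the stated sum.
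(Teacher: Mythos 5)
Your proposal is correct and follows essentially the same route as the paper: the same reduction of $\{\hat{\delta}_{mis}>\delta_{mis}\}$ to $\tilde{A}_{1,m}\cup\tilde{B}_{1,m}\cup\ah{1,m}$ via Lemma \ref{lem:paccritlem}, a union bound over the $m$ steps, a Bennett/Poisson-entropy tail bound for $\fkup{k}$ applied directly under $\PP$ (the $H_{k,j}$, $j\in\others$, are indeed i.i.d.\ Bernoulli$(p^*)$ there, so no $e^{mc_0}$ factor is needed), the standard chi-square lower-tail bound, and the transfer from $\QQ$ to $\PP$ by Lemma \ref{measurebdd2} only for the $\tilde{A}_{1,m}$ term. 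The one substantive difference is that for $\QQ(\hat q_{1,k}<q_{1,k}^*-\eta)$ the paper uses the weighted-Bernoulli relative-entropy bound of Lemma \ref{bernoullisums} followed by Pinsker's inequality rather than Hoeffding, but both yield the exponent $2L_\pi\eta^2$; and your suspicion about $L_\pi$ is well founded --- the bound (and the paper's own evaluation $L_\pi=L\nu/(2\C)$ for the allocation \eqref{eq:powerweuse}) requires $1/L_\pi=\max_\ell\pi_{(\ell)}$, matching the constant $N_\alpha=1/\max_j\alpha_j$ in Lemma \ref{bernoullisums}, so the ``$\min$'' in the definition in subsection \ref{subsec:modpacing} should be read as a ``$\max$''.
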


\begin{proof}[Proof of Theorem \ref{reliabddthm}]
From Lemma \ref{lem:paccritlem}, and the arguments above, the event $\{\hat{\delta}_{mis} > \delta_{mis}\}$ is contained in the event
$$\tilde{A}_{1,m} \cup \tilde{B}_{1,m} \cup D_{1,m}.$$
Consequently, we need to control the probability of the above three events under the $\PP$ measure.

We first control the probability of the event $D_{1,m}$, which is the union of Chi-square events $D_k = \{\Chi^2_{d_k}/n < 1 - h\}$. Now the event $D_k$ can be expressed as $\{\Chi^2_{d_k}/d_k < 1 - h_k\}$, where
$h_k = (nh - k + 1)/(n - k + 1)$. Using a standard Chernoff bound argument, one gets that
$$\PP(D_k) \leq e^{-(n - k + 1)h_k^2/2}.$$
The exponent in the above is at least $(n - k + 1)h^2/2 - kh$. Consequently, as $k \leq m$, one gets, using a union bound that
$$\PP(D_{1,m}) \leq me^{-(n - m + 1)h^2/2 + mh}.$$
Next, lets focus on the event $\tilde{B}_{1,m}$, which is the union of events $\{\fkup{k} > f\}$. Divide
$\fkup{k},\, f$, by $\M-1$ to get $\hat p_k,\, p$ respectively. Consequently, $\tilde{B}_{1,m}$ is also the union of the events $ \{\hat p_k > p \}$, for $k = 1,\ldots,m$, where
$$\hat p_{k} = \frac{1}{\M\!-\!1} \sum_{j \in other} \pi_j \,1_{H_{k,j}},$$
and $p = f/(M-1)$, with $f = \rho f^*$.

Recall, as previously discussed, for $j$ in $\others$, the event $H_{k,j}$ are i.i.d. Bernoulli($p^*$)  under the measure $\PP$, where $p^* = f^*/(M-1)$. Consequently, from
by Lemma \ref{bernoullisums} in the Appendix \ref{aptailber}, the probability of the event $\{\hat p_{k} \!\ge\! p\}$  is less than $e^{-L_\pi(\M\!-\!1)D(p\|p^*)}$.  Therefore,
$$\PP(\tilde{B}_{1,m}) \leq me^{-L_\pi(\M\!-\!1)D(p\|p^*)}.$$
To handle the exponents $(\M\!-\!1)D(p\|p^*)$ at the small values $p$ and $p^*$, we use the Poisson lower bound on the Bernoulli relative entropy, as shown in Appendix \ref{lowerboundD}. This produces the lower bound $(\M\!-\!1)[p\log p/p^* + p^*-p]$, which is equal to
$$f \log f/f^* + f^* - f.$$
We may write this as $f^* \Dcal(\rho)$, or equivalently $f \Dcal(\rho)/\rho$, where the functions $\Dcal(\rho)$ and $\Dcal(\rho)/\rho = \log \rho + 1-1/\rho$ are increasing in $\rho$.

Lastly, we control the probability of the event $\tilde{A}_{1,m}$, which the is union of the events
$\{\hat q_{1,k} < q_{1,k}\}$, where
$$\hat q_{1,k} = \sum_{j \in \sent} \pi_j H_{k,j}.$$
We first bound the probability under the $\QQ$ measure. Recall that under $\QQ$, the $H_{k,j}$, for $j \in \sent$, are independent Bernoulli, with the expectation of $\hat q_{1,k}$ being $q_{1,k}^*$. Consequently, using Lemma \ref{bernoullisums} in Appendix \ref{aptailber}, we have
$$\QQ(\hat q_{1,k} < q_{1,k}) \leq e^{- L_\pi D(q_{1,k}\| q_{1,k}^*)}.$$
Further,  by the Pinsker-Csiszar-Kulback-Kemperman inequality, specialized to Bernoulli distributions, the expressions $D(q_{1,k}\|q_{1,k}^*)$ in the above exceeds $2(q_{1,k}-q_{1,k}^*)^2$, which is $2\eta^2$, since $q_{1,k}^* - q_{1,k} = \eta$.

Correspondingly, one has
$$\QQ(\tilde{A}_{1,m}) \leq m e^{-L_\pi 2\eta^2}.$$
Now, use the fact that the event $\tilde{A}_{1,m}$ is $\Fcal_m$ measurable, along with Lemma \ref{measurebdd2}, to get that,
$$\PP(\tilde{A}_{1,m}) \leq m e^{-L_\pi 2\eta^2 + m\,c_0}.$$
This completes the proof of the lemma.
\end{proof}

\section{Computational Illustrations} \label{sec:compillus}
We illustrate in two ways the performance of our algorithm. First, for fixed values $L,\,B,\,snr$ and rates
below capacity we evaluate detection rate as well as probability of exception set $p_{e}$ using the theoretical bounds given in Theorem \ref{reliabddthm}. Plots demonstrating the progression of our algorithm are also shown. These
highlight the crucial role of the function $g_L$ in achieving high reliability.

\begin{figure*}
\centerline{
\mbox{
\includegraphics[width=2.7in]{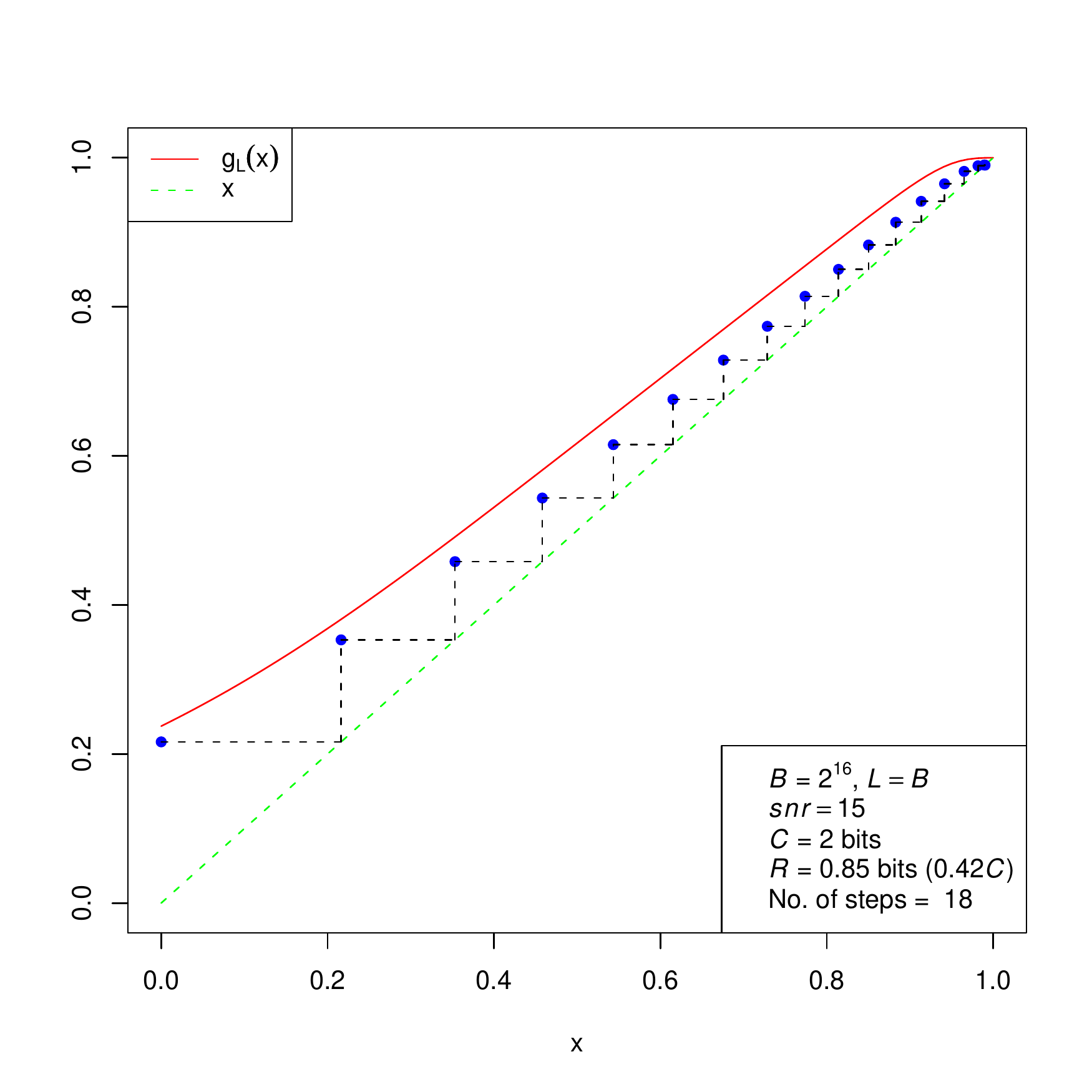}
}
\mbox{
\includegraphics[width=2.7in]{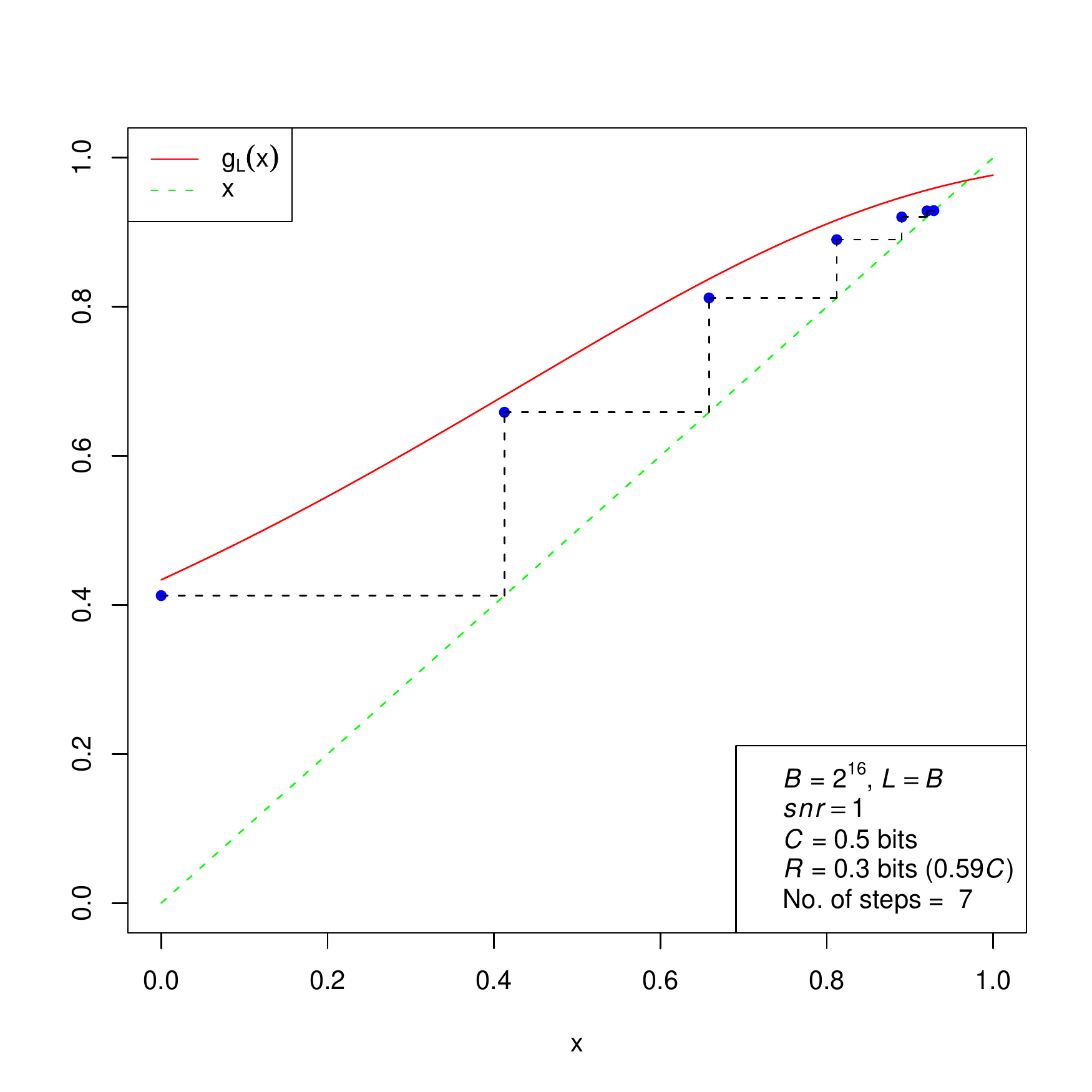}
}
}
\caption[Demonstration of progression of the iterative algorithm]{ Plots demonstrating progression of our algorithm. (Plot on left) $snr = 15$. 
The weighted (unweighted) detection rate is $0.995$ ($0.985$) for a failed detection rate of $0.014$ and the
false alarm rate is $0.005$. (Plot on right) $snr = 1$.
The detection rate (both weighted and un-weighted) is $0.944$ and the
false alarm and failed detection rates are $0.016$ and  $0.055$ respectively.
}
\label{fig:progress2}
\end{figure*}

Figure \ref{fig:progress2}  presents the results of computation using the reliability bounds of Theorem \ref{reliabddthm} for fixed $L$ and $B$ and various choices of $snr$ and rates below capacity. The dots in these figures denotes $q_{1,k}$, for each $k$.

For illustrative purposes we take $B = 2^{16}$, $L = B$ and $snr$ values of $1,\, 7$ and $15$. The probability of error $p_e$ is set to be near $10^{-3}$. For each $snr$ value the maximum rate, over a grid of values, for which the error probability is less than $p_e$
is determined. With $snr = 1$ (Fig \ref{fig:progress2}), this rate $R$ is $0.3$ bits which is $59\%$ of capacity. When $snr$ is $7$ (Fig \ref{fig:prog7}) and $15$ (Fig \ref{fig:progress2}) , these rates correspond to $49\%$ and $42\%$ of their corresponding capacities.

For the above computations we chose power allocations of the form $$P_{(\ell)} \propto \max\{e^{-2\gamma l/L}, u\},$$ with $0 \leq \gamma \leq \mathcal{C}$, and $u > 0$.   Here the choices of $a,\, u$ and $\gamma$ are made, by computational search, to minimize the resulting sum of false alarms and failed detections, as per our bounds. In the  $snr = 1$ case the optimum $\gamma$ is 0, so we have constant power allocation in this case. In the other two cases, there is variable power across most of the sections. The role of a positive $u$ being to increase the relative power allocation for sections with low weights. 

\begin{figure*}
\centerline{
\mbox{
\includegraphics[width=2.25in]{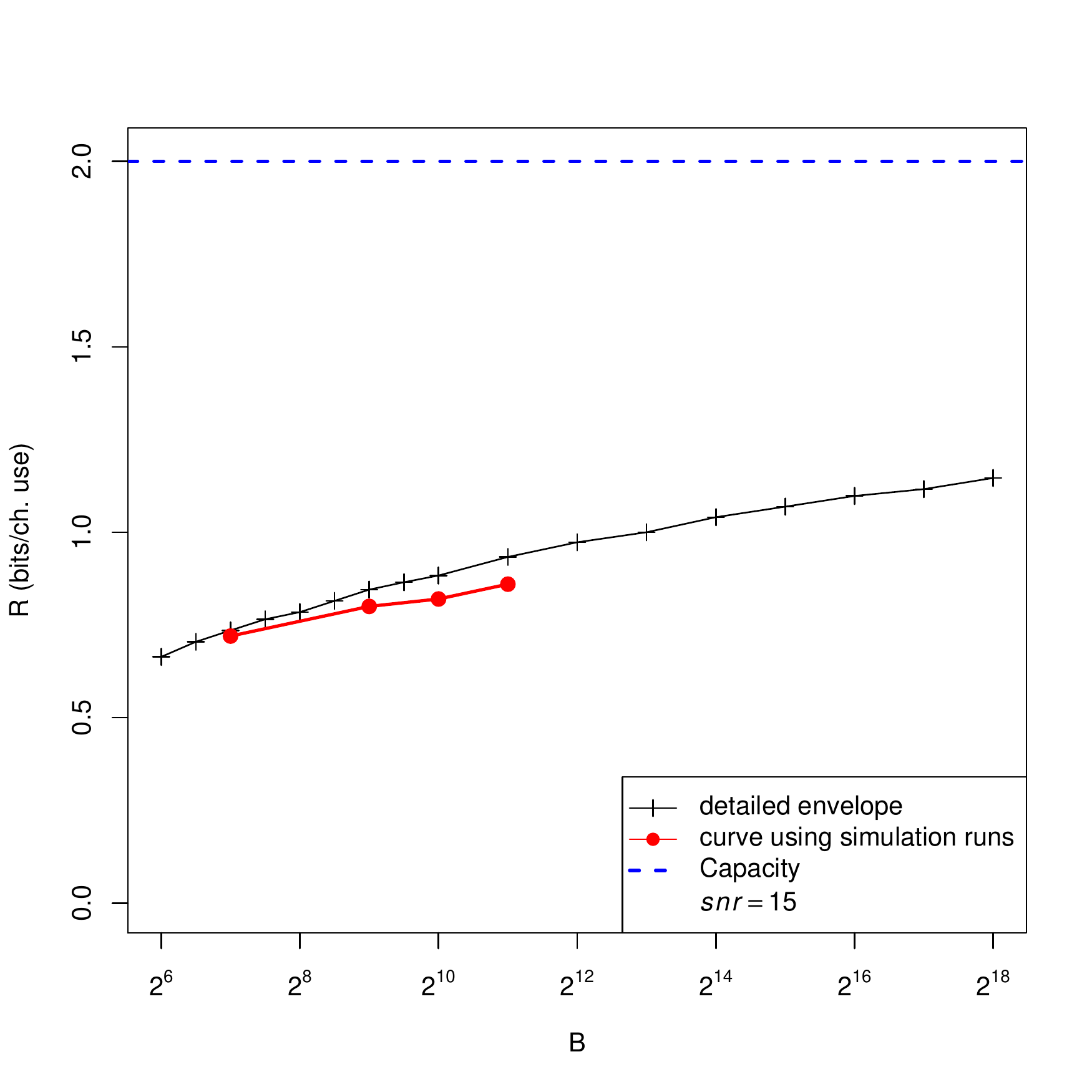}
}
\mbox{
\includegraphics[width=2.25in]{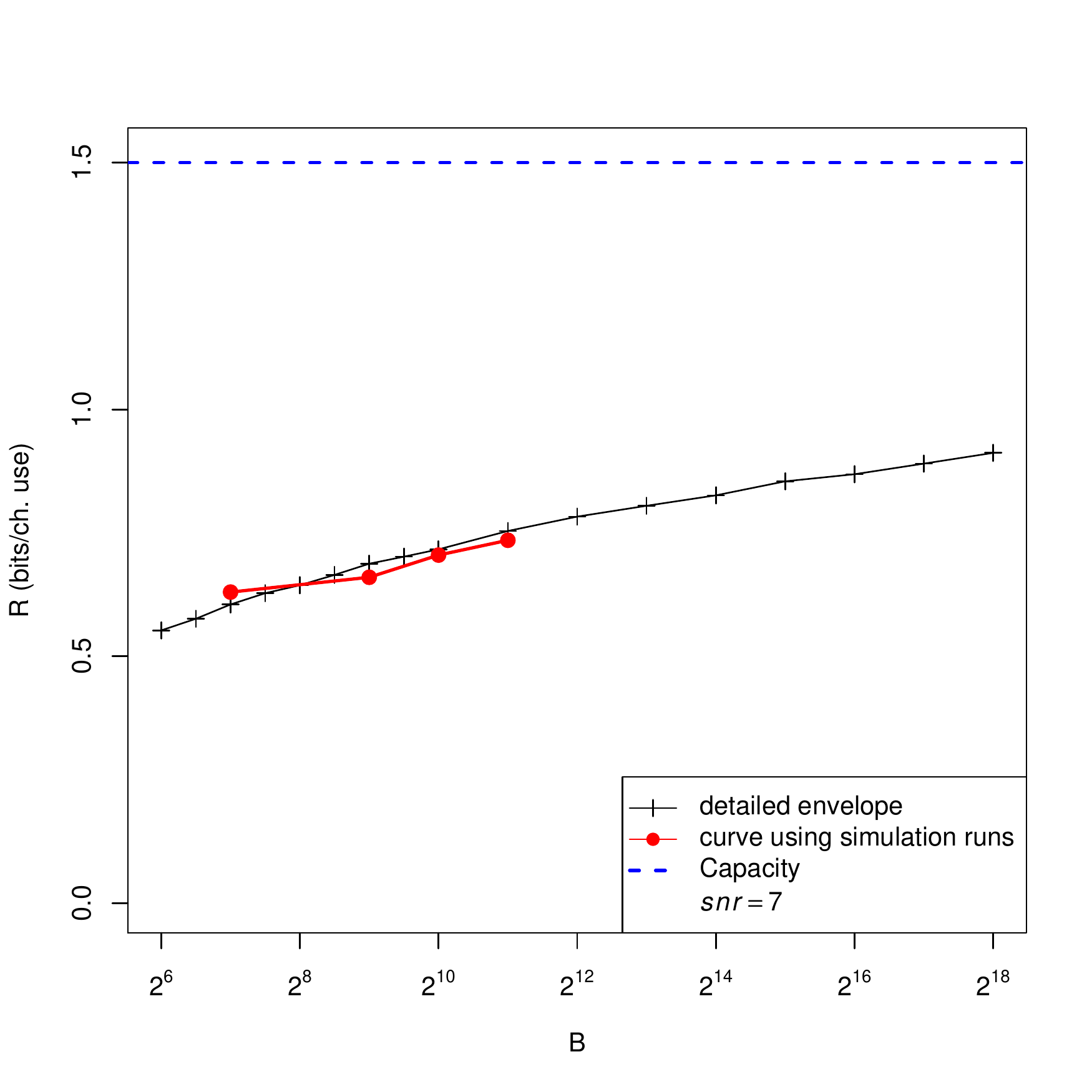}
}
\mbox{
\includegraphics[width=2.25in]{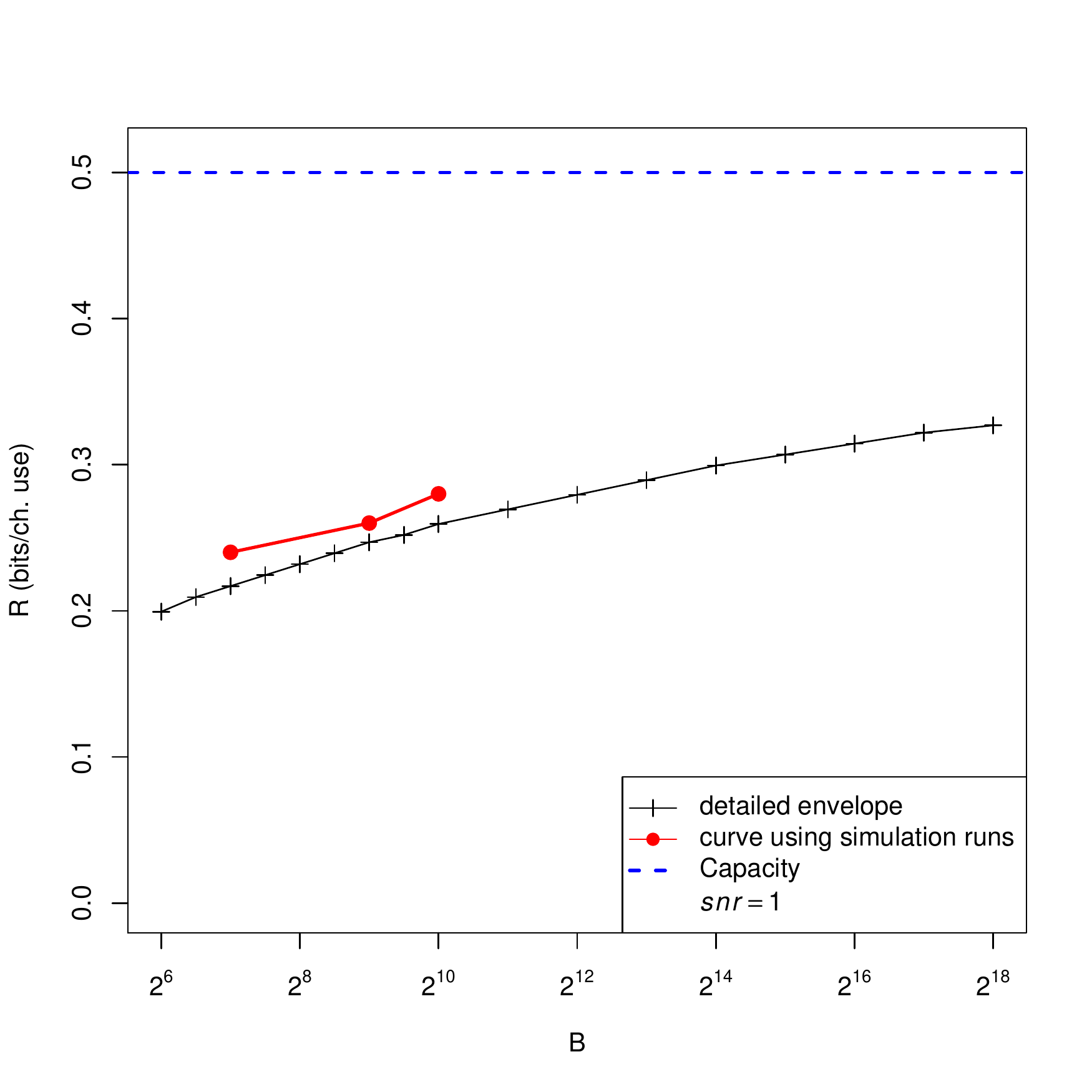}
}
}
\caption[Plots of achievable rates using the iterative algorithm]{ Plots of achievable rates as a function of $B$ for $snr$ values of $15,\, 7$ and $1$. Section error rate is controlled to be between 9 and $10\%$. For  the curve using simulation runs the error probability of making more than 10\% section mistakes is taken to be $10^{-3}$.
}
\label{fig:aymp1715}
\end{figure*}

Figure \ref{fig:aymp1715} gives plots of achievable rates  as a function of $B$. For each $B$, the points on the detailed envelope correspond to the numerically evaluated maximum inner code rate for which the section error is between 9 and 10\%.  Here we assume $L$  to be large, so that the $q_{1,k}$ and $f_k$ are replaced by the expected values $q_{1,k}^*$ and $f^*$, respectively. We also take $h = 0$. This gives an idea about the best possible rates for a given $snr$ and section error rate.

For the simulation curve, $L$ was fixed at 100 and for given $snr$, $B$, and rate values, $10^4$ runs of our algorithm were performed. The maximum rate over the grid of values satisfying section error rate of less than 10\% except in 10 replicates,
(corresponding to an estimated $p_e$ of $10^{-3}$) is shown in the plots. Interestingly, even for such small values of $L$, the curve is
is quite close to the detailed envelope curve, showing that our theoretical bounds are quite conservative. 

\section{Achievable Rates approaching Capacity}  \label{commcapac}

We demonstrate analytically that rates $R$ moderately close to $\Capacity$ are attainable by showing that the function $g_L(x)$ providing the updates for the 
fraction of correctly detected terms is indeed accumulative for suitable $x_r$ and $gap$.  Then the reliability of the decoder can be established via Theorem \ref{reliabddthm}. In particular, the matter of normalization of the weights $\pi_{(\ell)}$ is developed in subsection \ref{varpower}. An integral approximation $g(x)$ to the sum $g_L(x)$ is provided in subsection \ref{integapprox}, and in subsection \ref{ggx} we show that it is accumulative. Subsection
\ref{choicepara} addresses the issue of control of parameters that arise in specifying the code. In subsection \ref{proofmainthm}, we give the proof of Proposition \ref{mainthm}.

\subsection{Variable power allocations} \label{varpower}

As mentioned earlier, we consider power allocations $P_{(\ell)}$ proportional to $e^{-2\Capacity \ell/L}$. The function $g_L(x)$, given by \eqref{eq:gdef}, may also be expressed as
$$g_L(x)=\sum_{\ell = 1}^L \pi_{(\ell)} \, \Phi(\mu(x, u_{\ell}\,\C'/R)),$$
where $\pi_{(\ell)} = P_{(\ell)}/P$, and,
$$\mu(x,u) = (\sqrt{u/(1 - x\nu)} - 1)\sqrt{2\log\M} - a$$
and
$$u_\ell= e^{-2\Capacity(\ell-1)/L}\quad\text{and}\quad \C' = \tilde{\C}(1 - h),$$
with $\tilde \C$ as in \eqref{eq:ctilda}. Here we use the fact that $\cjrb$, for the above power allocation, is given by $u_\ell\, \tilde \C/R$ if $j$ is in section $\ell$, as demonstrated in \eqref{eq:cjrbexpo}.

Further, notice  that  $\pi_{(\ell)}= u_\ell/ sum$, with $sum= \sum_{\ell=1}^L u_\ell.$
One sees that $sum = L\nu/(2\tilde \Capacity)$, with $\nu = P/(P + \sigma^2)$.
Using this one gets that
\eqq{g_L(x) = \frac{2 \Capacity}{\nu L} \,\sum_{\ell=1}^L \, u_\ell \, \Phi\big(\mu(x,u_\ell\Capacity'/R)\big). \label{eq:gdefalternate}}

\subsection{Formulation and evaluation of the integral $g(x)$} \label{integapprox}

Recognize that the sum in \eqref{eq:gdefalternate} corresponds closely to an integral.  In each interval $\frac{\ell-1}{L}\le t < \frac{\ell}{L}$ for $\ell$ from $1$ to $L$, we have $e^{-2\Capacity \frac{\ell-1}{L}}$ at least $e^{-2 \Capacity t}$.  Consequently,
$g_L(x)$ is greater than $g(x)$
where 
\eqq{g(x) = \frac{2 \Capacity}{\nu} \int_0^1 \!\!{e^{-2\Capacity t}} 
\Phi\big(\mu(x,e^{-2\Capacity t}\Capacity'/R)\big)  dt.\label{eq:gnumx}}
The $g_L(x)$ and $g(x)$ are increasing functions of $x$ on $[0,1]$.

Let's provide further characterization and evaluation of the integral $g(x)$.
Let
$$z_x^{low} = \mu(x,(1 - \nu)\,\Capacity'/R)\quad\text{and}\quad z_x^{max}= \mu(x,\Capacity'/R).$$
Further, let $\delta_a= a/\sqrt{2\log \M}.$
For emphasis we write out that $z_x=z_x^{low}$ takes the form
\eqq{z_x = \left[ \frac{\sqrt{(1 - \nu)\Capacity'/R}}{\sqrt{1\,-\,x\nu}} - (1+\delta_a) \right]\sqrt{2\log \M}.\label{eq:zxlow}}


Change the variable of integration in \eqref{eq:gnumx} from $t$ to $u=e^{-2\Capacity t}$. Observing that $e^{-2\C} = 1 - \nu$, one sees that 
$$g(x) = \frac{1}{\nu} \int_{1 - \nu}^1 \!\Phi\big(\mu(x,u\,\Capacity'/R)\big) du.$$
Now since $$\Phi(\mu)= \int 1_{\{z\le \mu\}}\,\phi(z)\,dz,$$
it follows that
\eqq{g(x) = \int\int 1_{\big\{u_{cut} \le u\le 1\big\}}1_{\big\{z\le \mu(x,u\,\Capacity'/R)\big\}}\,\phi(z)\,dz\,du/\nu\label{eq:gnumoneinteg}.}


In \eqref{eq:gnumoneinteg}, the inequality
$$z \le \mu(x,u\,\Capacity'/R)$$
is the same as
$$\sqrt{u} \; \ge \; \sqrt{u_x R/\Capacity'} \, \big(1+ (z+a)/\sqrt{2\log \M} \, \big),$$
provided $z_x^{low} \leq z \leq z_x^{max}$. Here $u_x = 1-x\nu$. 
Thereby, for all $z$, the length of this interval of values of $u$ can be written as
$$\left[1- \max\left\{u_x \frac{R}{\Capacity'}  \Bigl(1+\!\frac{z\+a}{\sqrt{2\log \M}}\Bigr)_+^{\,2} , 1 - \nu\right\} \right]_+.$$

Thus $g(x)$ is equal to,
\eqq{ \frac{1}{\nu} \int \left[1- \max\left\{u_x \frac{R}{\Capacity'}  \Bigl(1+\!\frac{z\+a}{\sqrt{2\log \M}}\Bigr)_+^{\,2} , 1 - \nu\right\} \right]_+\phi(z)dz\label{eq:gnumintegch}.}

\begin{lem}\label{dereval}  {\em Derivative evaluation.}  The derivative $g'(x)$ may be expressed as
\eqq{ \frac{R}{\Capacity'} \, \int_{z_x^{low}}^{z_x^{max}} \! \big(1\+\delta_a+\delta_z\big)^{2}\phi(z)dz.\label{eq:gderiv}}
Further, if
\eqq{R = \frac{\Capacity'}{[(1+\delta_a)^2 (1\+ \rr/\log \M)]},\label{eq:rateform}}
with $\rr \geq \rhalf$, where
\eqq{\rhalf = \frac{1}{2(1 + \delta_a)^2}\label{eq:rhalf},}
then the difference $g(x)-x$ is a decreasing function of $x$.
\end{lem}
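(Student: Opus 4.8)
The plan is to differentiate the expression \eqref{eq:gnumintegch} for $g(x)$ under the integral sign, and then analyze the monotonicity of $g(x) - x$ via a second differentiation (or, more precisely, by showing $g'(x) \le 1$ under the stated rate condition, but with a refinement: showing $g'(x)$ itself is decreasing so that $g(x)-x$ has a unique qualitative behavior). First I would fix $x$ and look at the integrand of \eqref{eq:gnumintegch}, namely $h(x,z) = \bigl[1 - \max\{u_x (R/\Capacity') (1 + (z+a)/\sqrt{2\log\M})_+^2,\; 1-\nu\}\bigr]_+$, where $u_x = 1 - x\nu$. The outer $[\cdot]_+$ is active (i.e. the bracket is positive) exactly when $u_x(R/\Capacity')(1+(z+a)/\sqrt{2\log\M})^2 < 1$, which rearranges to $z < z_x^{max}$; the inner $\max$ picks the quadratic term (rather than $1-\nu$) exactly when that quadratic exceeds $1-\nu$, i.e. when $z > z_x^{low}$. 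So on the interval $z_x^{low} \le z \le z_x^{max}$ the integrand equals $1 - u_x(R/\Capacity')(1+\delta_a+\delta_z)^2$ where $\delta_z = z/\sqrt{2\log\M}$ and $\delta_a = a/\sqrt{2\log\M}$; outside that interval it is constant in $x$ (either $\nu$ or $0$). Differentiating in $x$, only the middle region contributes, and $\partial_x[\,1 - u_x(R/\Capacity')(1+\delta_a+\delta_z)^2\,] = \nu (R/\Capacity')(1+\delta_a+\delta_z)^2$ since $\partial_x u_x = -\nu$. Dividing by $\nu$ from the prefactor of \eqref{eq:gnumintegch} gives exactly \eqref{eq:gderiv}. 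I would note that the endpoint contributions from $z_x^{low}, z_x^{max}$ moving with $x$ vanish because the integrand is continuous across those endpoints (the three pieces match up), so Leibniz differentiation has no boundary terms.

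For the second claim, I would show that $g''(x) \le 0$ whenever $R$ has the form \eqref{eq:rateform} with $\rr \ge \rhalf$, since then $g'(x) - 1$ is decreasing, hence $g(x) - x$ has decreasing derivative, i.e. is concave; but actually what is claimed is that $g(x)-x$ is decreasing, so I should instead show $g'(x) \le 1$ for all $x \in [0,1]$. The cleanest route: bound $g'(x)$ by dropping the constraint $z \ge z_x^{low}$ and extending to $z \le z_x^{max}$, or better, observe that $(1+\delta_a+\delta_z)^2 \le (1+\delta_a+\delta_{z_x^{max}})^2$ is false in general, so instead evaluate more carefully. At $z = z_x^{max}$ we have $1 + \delta_a + \delta_z = \sqrt{\Capacity'/(u_x R)}$ by the definition of $z_x^{max} = \mu(x, \Capacity'/R)$. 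So on the integration range $(1+\delta_a+\delta_z)^2 \le \Capacity'/(u_x R)$, giving $g'(x) \le (R/\Capacity')\cdot(\Capacity'/(u_x R))\cdot \bar\Phi(z_x^{low}) = \bar\Phi(z_x^{low})/u_x \le 1/u_x$, which is not quite $\le 1$. This crude bound is insufficient, so the real argument must retain the Gaussian weight: one writes $g'(x) = (R/\Capacity')\,\E[(1+\delta_a+\delta_Z)^2 \mathbf{1}_{z_x^{low}\le Z \le z_x^{max}}]$ and uses that for $x$ ranging over $[0,1]$, $z_x^{low}$ increases (as $x$ increases, $u_x$ decreases, so $\sqrt{(1-\nu)\Capacity'/R}/\sqrt{u_x}$ increases, so $z_x^{low}$ increases). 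The expectation is largest at $x=0$; I would show $g'(0) \le 1$ under \eqref{eq:rateform}–\eqref{eq:rhalf}, then argue monotonicity of $g'$.

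The main obstacle — and the step I expect to require the most care — is establishing that $g'(x)$ is monotone decreasing in $x$, since $g'(x)$ is an integral over a moving window $[z_x^{low}, z_x^{max}]$ of a quadratic-times-Gaussian, and both endpoints shift with $x$ while the integrand also changes (it does not, actually — the integrand $(R/\Capacity')(1+\delta_a+\delta_z)^2\phi(z)$ is free of $x$, only the limits move). So $g'(x) = (R/\Capacity')\int_{z_x^{low}}^{z_x^{max}}(1+\delta_a+\delta_z)^2\phi(z)\,dz$ and $g''(x) = (R/\Capacity')[\,(1+\delta_a+\delta_{z_x^{max}})^2\phi(z_x^{max})\,\frac{d z_x^{max}}{dx} - (1+\delta_a+\delta_{z_x^{low}})^2\phi(z_x^{low})\,\frac{d z_x^{low}}{dx}\,]$. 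Both $z_x^{low}$ and $z_x^{max}$ are increasing in $x$, so the two terms have opposite signs and one must compare magnitudes. I would substitute the explicit forms $1+\delta_a+\delta_{z_x^{low}} = \sqrt{(1-\nu)\Capacity'/(u_x R)}$ and $1+\delta_a+\delta_{z_x^{max}} = \sqrt{\Capacity'/(u_x R)}$, compute $\frac{d}{dx}z_x^{low} = \frac{\nu}{2}\sqrt{(1-\nu)\Capacity'/R}\,u_x^{-3/2}\sqrt{2\log\M}$ and similarly for $z_x^{max}$, and reduce the inequality $g''(x)\le 0$ to a comparison of $\phi(z_x^{max})/\phi(z_x^{low})$ against an explicit algebraic ratio; the condition $\rr \ge \rhalf$ with $\rhalf = 1/(2(1+\delta_a)^2)$ should be precisely what makes this comparison hold, presumably after using $\phi(z_x^{max})/\phi(z_x^{low}) = \exp\{-\tfrac12((z_x^{max})^2 - (z_x^{low})^2)\}$ and the relation between $z_x^{max}$ and $z_x^{low}$ coming from the definition \eqref{eq:rateform} of $R$. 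I would carry out this reduction carefully and verify that $\rr \ge \rhalf$ suffices; this algebra, while routine in spirit, is where the stated hypotheses get used and where an error would be easy to make.
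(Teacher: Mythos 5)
Your treatment of the first claim is correct and matches the paper's argument: differentiate \eqref{eq:gnumintegch} under the integral sign, observe that only the middle region (where the inner max selects the quadratic and the outer positive part is active, i.e. $z_x^{low}\le z\le z_x^{max}$) depends on $x$, and note that the three pieces of the integrand join continuously so no boundary terms arise.

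For the second claim there is a genuine gap, and the route you sketch would not go through. The intended argument is a one-line bound that you wrote down and then passed over: in the representation $g'(x) = (R/\Capacity')\,\E\big[(1+\delta_a+\delta_Z)^2\, 1_{\{z_x^{low}\le Z\le z_x^{max}\}}\big]$ with $Z$ standard normal, simply drop the indicator. Since the cross term vanishes,
$$g'(x) \;\le\; \frac{R}{\Capacity'}\Big[(1+\delta_a)^2 + \frac{1}{2\log \M}\Big] \;=\; \frac{1 + \rhalf/\log\M}{1+\rr/\log\M},$$
using the form \eqref{eq:rateform} of $R$ and the definition \eqref{eq:rhalf} of $\rhalf$; this is at most $1$ precisely when $\rr\ge\rhalf$, and it holds uniformly in $x$, so $g(x)-x$ is decreasing. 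That is where the hypothesis $\rr\ge\rhalf$ enters --- not through any comparison of $\phi(z_x^{max})$ with $\phi(z_x^{low})$. Your fallback plan (prove $g''\le 0$, then check $g'(0)\le 1$) is both unexecuted and unsound: $g$ need not be concave on $[0,1]$. Indeed, writing $1+\delta_a+\delta_{z_x^{low}} = \sqrt{(1-\nu)\Capacity'/(u_xR)}$ and $1+\delta_a+\delta_{z_x^{max}}=\sqrt{\Capacity'/(u_xR)}$ and computing the endpoint derivatives as you propose, the two terms in $g''(x)$ come out proportional to $\phi(z_x^{max})$ and $(1-\nu)^{3/2}\phi(z_x^{low})$ with a common positive factor, so $g''(x)>0$ whenever $\phi(z_x^{max}) > (1-\nu)^{3/2}\phi(z_x^{low})$; this happens for small $x$, where $z_x^{low}$ is far more negative than $z_x^{max}$ is positive and $\phi(z_x^{low})$ is tiny. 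So the monotonicity of $g'$ that your plan relies on can fail, and the uniform bound above is the argument that is actually needed.
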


\begin{proof} The integrand in  \eqref{eq:gnumintegch} is continuous and piecewise differentiable in $x$, and its derivative is the integrand in \eqref{eq:gderiv}.

Further, \eqref{eq:gderiv} is less than,
 $$\frac{R}{\Capacity'} \int_{-\infty}^{\infty} \big(1\+\delta_a+\delta_z\big)^{2} \phi(z)dz = \frac{R}{\Capacity'}\left[(1\+\delta_a)^2 + 1/(2\log \M)\right],$$ which is less than $1$ for $\rr \geq \rhalf$.  Consequently, $g(x)-x$ is decreasing as it has a negative derivative.
\end{proof}

\begin{cor}\label{lowerbound} {\em A lower bound.} The function $g(x)$ is at least 
$$g_{low}(x) =
\quad\quad\quad\quad\quad\quad\quad\quad
\quad\quad\quad\quad\quad\quad\quad$$
$$\frac{1}{\nu} \!\int_{z_x^{low}}^{\infty}\!\!\left[1\!- (R/\Capacity')u_x \, \big(1+\!(z\+a)/\sqrt{2\log \M} \, \big)^2\right] \phi(z)dz.$$
This $g_{low}(x)$ is equal to
$$\Phi(z_x)\,+ \, \left[x  + \delta_R \frac{u_x}{\nu}\right] \big[1-\Phi(z_x)\big]$$
\eqq{- \;2(1\+\delta_a)\frac{R}{\Capacity'} \frac{u_x}{\nu} \, \frac{\phi(z_x)}{\sqrt{2\, \log \M}}\,- \, \frac{R}{\Capacity'} \frac{u_x}{\nu} \, \frac{z_x \phi(z_x)}{2\log \M}\label{eq:glow}.}
where
$$\delta_R = \frac{\rr -\rhalf}{\log \M +  r}.$$
Moreover, this $g_{low}(x)$  has derivative $g_{low}'(x)$ given by
$$ \frac{R}{\Capacity'} \, \int_{z_x}^{\infty} \! \big(1\+\delta_a+\delta_z\big)^{2}\phi(z)dz.$$
\end{cor}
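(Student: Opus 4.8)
\noindent\textbf{Proof idea for Corollary~\ref{lowerbound}.}
The plan is to evaluate the integrand in the exact formula \eqref{eq:gnumintegch} for $g(x)$ pointwise in $z$, discard a nonpositive tail piece to obtain $g_{low}$, and then reduce the surviving integral to closed form using standard Gaussian moment identities.

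First I would fix the shorthand $s=\sqrt{2\log\M}$, $\delta_z=z/s$, $u_x=1-x\nu$, and examine the map $z\mapsto u_x(R/\Capacity')\big(1+(z+a)/s\big)_+^{2}$ inside the $\max$ in \eqref{eq:gnumintegch}. It vanishes for $z\le-(1+\delta_a)s$ and is strictly increasing above that point; by the defining relation \eqref{eq:zxlow} of $z_x=z_x^{low}$ it equals exactly $1-\nu$ at $z=z_x^{low}$, and by the definition of $z_x^{max}$ it equals $1$ at $z=z_x^{max}$, with $z_x^{low}<z_x^{max}$ since $\mu(x,\cdot)$ is increasing in its second argument and $1-\nu<1$. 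Hence the integrand $\big[1-\max\{\,u_x(R/\Capacity')(1+(z+a)/s)_+^{2},\,1-\nu\,\}\big]_+$ equals $\nu$ for $z\le z_x^{low}$, equals $1-u_x(R/\Capacity')(1+(z+a)/s)^{2}$ for $z_x^{low}\le z\le z_x^{max}$, and equals $0$ for $z\ge z_x^{max}$. Integrating the first piece contributes $\Phi(z_x^{low})$, so
$$g(x)=\Phi(z_x^{low})+\frac1\nu\int_{z_x^{low}}^{z_x^{max}}\!\Big[1-u_x\tfrac{R}{\Capacity'}\big(1+(z+a)/s\big)^{2}\Big]\phi(z)\,dz.$$
Since the bracket is $\le 0$ on $[z_x^{max},\infty)$, pushing the upper limit to $\infty$ can only decrease the value, which gives $g(x)\ge g_{low}(x)$, with $g_{low}$ the displayed integral (carrying the leading $\Phi(z_x^{low})$ contribution from the region $z\le z_x^{low}$).

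Next I would put $g_{low}$ in closed form. Expanding $\big(1+(z+a)/s\big)^{2}=(1+\delta_a)^{2}+2(1+\delta_a)\delta_z+\delta_z^{2}$ and integrating against $\phi$ over $[z_x,\infty)$ with $\int_t^\infty\phi=1-\Phi(t)$, $\int_t^\infty z\,\phi=\phi(t)$ and $\int_t^\infty z^{2}\phi=t\,\phi(t)+1-\Phi(t)$, the $\phi(z_x)$-terms assemble directly into the last two displayed terms of the corollary. The leftover coefficient of $1-\Phi(z_x)$ is $\tfrac1\nu\big[1-u_x(R/\Capacity')(1+\delta_a)^{2}-u_x(R/\Capacity')/s^{2}\big]$; here I would substitute the rate form \eqref{eq:rateform}, so that $(R/\Capacity')(1+\delta_a)^{2}=1/(1+\rr/\log\M)$ and, via $\rhalf=1/[2(1+\delta_a)^{2}]$, $(R/\Capacity')/s^{2}=\rhalf/(\log\M+\rr)$, and then plug in $u_x=1-x\nu$; the algebra collapses to exactly $x+\delta_R u_x/\nu$ with $\delta_R=(\rr-\rhalf)/(\log\M+\rr)$. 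Adding back the $\Phi(z_x^{low})$ term yields the asserted expression for $g_{low}(x)$.

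Finally, for the derivative I would differentiate $g_{low}(x)=\Phi(z_x)+\tfrac1\nu\int_{z_x}^{\infty}\big[1-u_x(R/\Capacity')(1+(z+a)/s)^{2}\big]\phi\,dz$ by Leibniz's rule. The quantity $dz_x/dx$ enters twice, once through $\Phi(z_x)$ and once through the lower limit of the integral, and these contributions cancel, because the bracketed integrand evaluated at $z=z_x$ equals exactly $1-(1-\nu)=\nu$, which absorbs the $1/\nu$. What survives is the $x$-dependence of $u_x$ inside the integrand; since $du_x/dx=-\nu$, this leaves $g_{low}'(x)=(R/\Capacity')\int_{z_x}^{\infty}\big(1+\delta_a+\delta_z\big)^{2}\phi(z)\,dz$, i.e. \eqref{eq:gderiv} with the upper limit extended to $\infty$. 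The only mildly delicate step is the bookkeeping in simplifying the coefficient of $1-\Phi(z_x)$; the pointwise analysis of the integrand and the Gaussian moment computations are otherwise routine.
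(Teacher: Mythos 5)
Your proof is correct and follows essentially the same route as the paper: drop the nonpositive part of the integrand beyond $z_x^{max}$ to get the lower bound, evaluate the remaining integral with the standard Gaussian identities $\int_t^\infty z\,\phi=\phi(t)$ and $\int_t^\infty z^2\phi=t\phi(t)+1-\Phi(t)$, and obtain the derivative as in Lemma~\ref{dereval}. You supply details the paper leaves implicit — in particular the observation that the $\Phi(z_x^{low})$ term comes from the region $z\le z_x^{low}$ where the integrand equals $\nu$, and the cancellation of the boundary terms in the Leibniz differentiation because the integrand at $z=z_x$ equals exactly $\nu$ — and these are all accurate.
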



\begin{proof} The integral expressions for $g_{low}(x)$ are the same as for $g(x)$ except that the upper end point of the integration extends beyond $z_x^{max}$, where the integrand is negative. The lower bound conclusion follows from this negativity of the integrand above $z_x^{max}$.  The evaluation of $g_{low}(x)$ is fairly straightforward after using $z\phi(z) = -\phi'(z)$ and
$z^2\phi(z) = \phi(z) -(z\phi(z))'$.  Also use that $\Phi(z)$ tends to 1, while $\phi(z)$ and $z\phi(z)$ tend to $0$ as $z \rightarrow \infty$. This completes the proof of Corollary \ref{lowerbound}.
\end{proof}

\vspace{0.1cm}
\noindent
{\bf {Remark:}} What we gain with this lower bound is simplification because the result depends on $x$ only through $z_x = z_x^{low}$.


\subsection{Showing $g(x)$ is greater than $x$} \label{ggx}


The preceding subsection established that $g_L(x) - x$ is at least $g_{low}(x) - x$.
We now show that  $$h_{low}(x) = g_{low}(x)-x,$$ is at least a positive value, which we denote as $gap$, on an interval  $[0, x_r]$, with $x_r$ suitably chosen.


Recall that $z_x=z_x^{low}$, given by \eqref{eq:zxlow}, is a strictly increasing function of $x$, with values in the interval $I_0=[z_0,z_1]$ for $0\!\le\! x\!\le\! 1$. For values $z$ in $I_0$, let $x=x(z)$ be the choice for which $z_x=z$. With the rate $R$ of the form \eqref{eq:rateform}, let $x_r$ be the value of $x$ for which $z_x$ is 0. One finds that $x_r$ satisfies,
\eqq{1-x_r\,=\, \frac{1}{snr} \, \left[ \frac{\rr}{\log \M}\right].\label{eq:xrform}}
We now show that $h_{low}(x)$ is positive on $[0,x_r]$, for $r$ at least a certain value, which we call $r_1$.


\begin{lem}\label{posgdiff} {\em Positivity of $h_{low}(x)$ on $[0,x_r]$.}
Let rate $R$ be of the form \eqref{eq:rateform}, with
$\rr > r_1$,  where
\eqq{ r_1 = \rhalf/2 + \frac{\sqrt{\log\M}}{\sqrt{\pi}(1 + \delta_a)}\label{eq:rone}.}
Then, for $0 \le x \le x_r$ the difference $h_{low}(x)$ is greater than or equal to
\eqq{gap = \frac{1}{snr} \, \left[\frac {r - r_1}{\log \M} \right].\label{eq:gap}}
\end{lem}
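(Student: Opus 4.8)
The plan is to work entirely with the lower bound $g_{low}(x)$ of Corollary \ref{lowerbound}, so that it suffices to show that $h_{low}(x):=g_{low}(x)-x$ is at least $gap$ on $[0,x_r]$. Since $z_x=z_x^{low}$ of \eqref{eq:zxlow} is strictly increasing in $x$ and $x_r$ is by construction the point where $z_{x_r}=0$, the idea is to first argue that $h_{low}$ is \emph{decreasing}, so its minimum over $[0,x_r]$ occurs at the right endpoint $x_r$, and then to evaluate $h_{low}(x_r)$ and check that it equals $gap$ exactly.

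For the monotonicity step I would repeat the argument of Lemma \ref{dereval}, now applied to $g_{low}$ in place of $g$. From the derivative formula in Corollary \ref{lowerbound},
\[
g_{low}'(x)=\frac{R}{\C'}\int_{z_x}^{\infty}(1+\delta_a+\delta_z)^2\phi(z)\,dz\ \le\ \frac{R}{\C'}\int_{-\infty}^{\infty}(1+\delta_a+\delta_z)^2\phi(z)\,dz=\frac{R}{\C'}\Big[(1+\delta_a)^2+\tfrac{1}{2\log\M}\Big],
\]
using $\delta_z=z/\sqrt{2\log\M}$ so that $\int\delta_z\phi=0$ and $\int\delta_z^2\phi=1/(2\log\M)$. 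Substituting the rate form \eqref{eq:rateform} and the value of $r_0$ from \eqref{eq:rhalf}, the right-hand side equals $(\log\M+r_0)/(\log\M+r)$, which is strictly below $1$ as soon as $r>r_0$. Since $r_1$ of \eqref{eq:rone} exceeds $r_0$ for $\M$ not too small (the $\sqrt{\log\M}$ term dominates the bounded constant $r_0/2$), the hypothesis $r>r_1$ gives $h_{low}'(x)=g_{low}'(x)-1<0$ uniformly on $[0,1]$, hence $\min_{[0,x_r]}h_{low}=h_{low}(x_r)$.

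For the evaluation I would subtract $x=x\Phi(z_x)+x(1-\Phi(z_x))$ from the closed form for $g_{low}$ in Corollary \ref{lowerbound}, obtaining
\[
h_{low}(x)=\Phi(z_x)(1-x)+\delta_R\frac{u_x}{\nu}\big(1-\Phi(z_x)\big)-2(1+\delta_a)\frac{R}{\C'}\frac{u_x}{\nu}\frac{\phi(z_x)}{\sqrt{2\log\M}}-\frac{R}{\C'}\frac{u_x}{\nu}\frac{z_x\phi(z_x)}{2\log\M}.
\]
At $x=x_r$ one has $z_{x_r}=0$, so $\Phi(z_{x_r})=1/2$, $\phi(z_{x_r})=1/\sqrt{2\pi}$, and the last term vanishes, leaving $h_{low}(x_r)=\tfrac12(1-x_r)+\tfrac12\delta_R\frac{u_{x_r}}{\nu}-2(1+\delta_a)\frac{R}{\C'}\frac{u_{x_r}}{\nu}\frac{1}{\sqrt{2\pi}\sqrt{2\log\M}}$. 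The rest is routine algebra: \eqref{eq:xrform} gives $1-x_r=(1/snr)(r/\log\M)$; writing $u_{x_r}=1-x_r\nu=(1-\nu)+\nu(1-x_r)$ and using $1-\nu=1/(1+snr)$, $\nu=snr/(1+snr)$ gives $u_{x_r}/\nu=(1/snr)(1+r/\log\M)$; and \eqref{eq:rateform} gives $R/\C'=1/[(1+\delta_a)^2(1+r/\log\M)]$. Hence $\frac{R}{\C'}\frac{u_{x_r}}{\nu}=\frac{1}{(1+\delta_a)^2 snr}$, and with $\delta_R=(r-r_0)/(\log\M+r)$ one gets $\delta_R\frac{u_{x_r}}{\nu}=\frac{r-r_0}{snr\log\M}$. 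Collecting the terms, and using $2/(\sqrt{2\pi}\sqrt{2\log\M})=1/\sqrt{\pi\log\M}$, everything assembles into
\[
h_{low}(x_r)=\frac{1}{snr\log\M}\Big[\frac r2+\frac{r-r_0}{2}-\frac{\sqrt{\log\M}}{(1+\delta_a)\sqrt{\pi}}\Big]=\frac{1}{snr}\,\frac{r-r_1}{\log\M}=gap,
\]
with $r_1$ precisely as in \eqref{eq:rone}. This is positive since $r>r_1$, and combined with the monotonicity it yields $h_{low}(x)\ge gap$ on $[0,x_r]$.

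The only delicate point I expect is the bookkeeping in the last step: getting the cancellation $\frac{R}{\C'}\frac{u_{x_r}}{\nu}=\frac{1}{(1+\delta_a)^2 snr}$ exactly right, and tracking the $\sqrt{\log\M}$ versus $\log\M$ scalings carefully so that the bounded constant $r_0/2$ and the $\sqrt{\log\M}/(\sqrt{\pi}(1+\delta_a))$ contribution reassemble precisely into the threshold $r_1$ of \eqref{eq:rone}. Everything else is a direct appeal to Corollary \ref{lowerbound} together with the same monotonicity argument used for $g$ in Lemma \ref{dereval}.
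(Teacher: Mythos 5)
Your proposal is correct and follows essentially the same route as the paper: bound $g_{low}'$ by $(1+\rhalf/\log\M)/(1+\rr/\log\M)<1$ so that $h_{low}$ is decreasing, then evaluate $h_{low}$ at $x_r$ (where $z_{x_r}=0$) and verify it equals $gap$. The only difference is that you write out explicitly the algebra (the cancellation $\tfrac{R}{\C'}\tfrac{u_{x_r}}{\nu}=\tfrac{1}{(1+\delta_a)^2 snr}$ and the reassembly into $r_1$) that the paper compresses into the phrase ``it is seen that,'' and your bookkeeping checks out.
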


\begin{proof}[Proof of Lemma \ref{posgdiff}]
The function $g(x)$ has lower bound $g_{low}(x)$. By Corollary \ref{lowerbound}, $g_{low}(x)$ has derivative bounded by
$$\int_{-\infty}^{\infty} \frac{  \big(1\+\delta_a+\delta_z\big)^{2}}{(1+\delta_a)^2(1 + r/\log\M)}\phi(z)dz = \frac{(1 + \rhalf/\log\M)}{(1 + \rr/\log\M)},$$
which is less than $1$ for $r \ge \rhalf$.  Thus $g_{low}(x)-x$ is decreasing as it has a negative derivative.  

To complete the proof, evaluate $g_{low}(x)-x$ at the point $x=x_r$.  The point $x_r$ is the choice where $z_x=0$.  After using \eqref{eq:glow}, it is seen that the value  $g_{low}(x_r)-x_r$ is equal to $gap$, where $gap$ is given by \eqref{eq:gap}.
\end{proof}


\subsection{Choices of $a$ and $r$ that control the overall rate drop} \label{choicepara}

Here we focus on the evaluation of $a$ and $r$ that optimize our summary expressions for the rate drop, based on the lower bounds on $g_L(x)\!-\!x$.  
Recall that the rate of our inner code is
$$R= \Capacity \frac{1 - h}{(1\+\delta_a)^2 (1+\rr/\log \M)}.$$
Now, for $r > r_1$, the function $g_L(x)$ is accumulative on $[0,x_r]$, with positive $gap$ given by \eqref{eq:gap}.
Notice that $r_1$, given by \eqref{eq:rone}, satisfies,
\eqq{r_1 \leq 1/4 + \frac{\sqrt{\log \M}}{\sqrt{\pi}}\label{eq:ronebdd}.}




Consequently, from Theorem \ref{reliabddthm}, with high reliability, the total fraction of mistakes $\hat{\delta}_{mis}$ is bounded by
$$\delta_{mis} = \frac{snr}{2\Capacity}\left[(1-x_r) \,- \, (gap-\eta)/2\right].$$
If the outer Reed-Solomon code has distance designed to be at least $\delta_{mis}$ then any occurrences of a fraction of mistakes less than $\delta_{mis}$ are corrected.  The overall rate of the code is $R_{total}$, which is at least $(1-\delta_{mis})R$.

Sensible values of the parameters $a$ and $r$ can be obtained by optimizing the above overall rate under a presumption of small error probability, using simplifying approximations of our expressions.  Reference values (corresponding to large $L$) are obtained by considering what the parameters become with $\eta = 0$, $f = f^*$, and $h = 0$.

Notice that $a$ is related to $f^*$ via the bound \eqref{eq:fstarbound}. Set $a$ so that \eqq{a \sqrt{2\log \M} = \log \left[1/\big(f^* \sqrt{2\pi} \sqrt{2\log \M}\big)\right].\label{eq:aexpr}}
We take $f^*$ as $gap ^2/8$ as per Lemma \ref{progress}. Consequently $a$ will depend on $r$ via the expression of $gap$ given by \eqref{eq:gap}.

Next, using the expressions for $1-x_r$ and $gap$, along with $\eta = 0$, yields a simplified approximate expression for the mistake rate given by
$$\delta_{mis} = \frac{r +  r_1}{4\Capacity \,\log \M} .$$


Accordingly, the overall communication rate may be expressed as,
$$R_{total} = \left(1\!-\!\frac{r +  r_1}{4\Capacity \,\log \M}\right) \frac{\Capacity }{(1\+\delta_a)^2 (1\+ \rr/\log \M)}.$$

As per these calculations (see \cite{josephPhD} for details) we find it appropriate to take
$r$ to be $r^*$, where
$$\rr^* = r_1 + 2/(1 + 1/\C).$$
Also, the corresponding $a$ is seen to be $$a = (3/2)\log(\log(\M))/\sqrt{2\log(\M)} + \tilde a,$$ where,
$$\tilde a = \frac{2\log\left[snr\, (1 + 1/\C) /((\pi)^{.25})\right]}{\sqrt{2\log(\M)}}.$$

Express $R_{total}$ in the form $\C/(1 + drop)$.
Then with the above choices of $r$ and $a$, and the bound on $r_1$ given in \eqref{eq:ronebdd}, one sees that $drop$  can be approximated by
$$\frac{ 3\log\log \M + 4\log(\omega_1\,snr)  + 1/(4\C) + 3.35}{2\log \M} + \frac{1 + 1/(2\C)}{\sqrt{\pi\log \M}},$$
where $\omega_1 = 1 + 1/\C$.

We remark that the above explicit expressions are given to highlight the nature of dependence of the rate drop
on $snr$ and $\M$. These are quite conservative. For more accurate numerical evaluation see section \ref{sec:compillus} on computational illustrations.

\subsection{Definition of $\Capacity^*$ and proof of Proposition \ref{mainthm}} \label{proofmainthm}

In the previous subsection we gave the value of $r$ and $a$ that maximized, in an approximate sense, the outer code rate for given $snr$ and $\M$ values and for large $L$. This led to explicit expressions for the maximal achievable outer code rate as a function of $snr$ and $\M$. We define $\Capacity^*$ to be the inner code rate corresponding to this maximum achievable outer code rate. Thus,
 $$\Capacity^*= \frac{\Capacity}{(1\+\delta_a)^2\,[1+ r^*/\log \M]}.$$
Similar to above, $\Capacity^*$ can be written a $\Capacity/(1 + drop^*)$ where $drop^*$ can be approximated by
$$\frac{ 3\log\log \M + 4\log(\omega_1 snr)  + 4/\omega_1  -2}{2\log \M} + \frac{1}{\sqrt{\pi\log \M}},$$
with $\omega_1 = 1 + 1/\C$. We now give a proof of our main result.

\vspace{.3cm}

\noindent\begin{proof}[ {\bf Proof of Proposition \ref{mainthm}}] Take $r = r^* + \kappa$. Using $$(1 + \kappa/\log \M)(1 + r^*/\log \M) \geq (1 + r/\log \M),$$ we find that for the rate $R$ as in Proposition \ref{mainthm}, $gap$ is at least $(r - r_1)/(snr\log \M)$ for $x \leq x_r$, with $x_r = r/(snr \log \M)$.

Take $f^* = (1/8)(r^* - r_1)^2/ (snr \log \M)^2$, so that $a$ is the same as given in the previous subsection. Now, we need to select $\rho > 1 $ and $\eta >0$,
so that $$f = \rho f^* \leq (gap - \eta)^2/8 - 1/(2L_\pi).$$
Take $\omega = (1 + 1/\C)/2$, so that $r^* = r_1 + 1/\omega$. One sees that we can satisfy the above requirement by taking  $\eta$ as  $(1/2)\kappa/(snr \log \M)$ and $\rho  =  (1 +  \kappa\omega/ 2)^2 - \epsilon_L$
$$\epsilon_L  =  \frac{(2\omega snr \log\M)^2}{L_\pi},$$
 is of order $(\log M)^2/L$, and hence is negligible compared to the first term in $\rho$. Since it has little effect on the error exponent, for ease of exposition, we ignore this term. We also assume that
 $f = (gap - \eta)^2/8$, ignoring the $1/(2L_\pi)$ term.

We select $$h = \frac{\kappa}{(2\log \M)^{3/2}}.$$

The fraction of mistakes, $$\delta_{mis} = \frac{snr}{2\Capacity}\left[ \frac{r}{snr \log \M} - (gap - \eta)/2\right]$$
is calculated as in the previous subsection, except here we have to account for the positive $\eta$. Substituting the
expression for $gap$ and $\eta$ gives the expression for $\delta_{mis}$ as in the proposition.

Next, let's look at the error probability. The error probability is given by
$$m e^{-2L_\pi \eta^2+m c_0} + m e^{-L_\pi f D(\rho)/\rho} + m e^{m h} e^{ - n h^2/2}.$$
Notice that $n h^2/2$ is at least $(L_\pi \log \M)h^2/(2\Capacity^*)$, where we use that $L \geq L_\pi$ and $R \leq \Capacity^*$.
 Thus the above probability is less than
$$\kappa_1  \exp\{-L_\pi  \min\{2\eta^2,\, f^* D(\rho),\,  h^2\log \M/(2\Capacity^*)\}\} $$
with $$\kappa_1 = 3m\,e^{m \max\{c_0, 1/2\}},$$
where for the above we use $h < 1$.

Substituting, we see that $2\eta^2$ is $(1/2)\kappa^2/(snr \log \M)^2$ and $h^2\log \M/(2\Capacity^*)$ is $$\frac{1}{16 \Capacity^*}\frac{\kappa^2}{(\log \M)^2}.$$
Also, one sees that $ D(\rho)$ is at least $2(\sqrt{\rho} - 1)^2/\rho$. Thus the term $f^* D(\rho)$ is at least
$$\frac{\kappa^2}{(4 snr \log \M)^2 (1 + \kappa\omega/2) }.$$
We bound from below the above quantity  by considering two cases viz. $\kappa \leq 2/\omega$ and $\kappa > 2/\omega$.  For the first case we have $1 + \kappa\omega/2 \leq 2$, so this quantity is bounded from below by
$(1/2)\kappa^2/(4 snr\log \M)^2.$ 
For the second case use $\kappa/(1 + \kappa \omega/2)$ is bounded from below by
$1/\omega$, to get that this term is at least $(1/\omega)\kappa/(4 snr\log \M)^2.$

Now we bound from below the quantity $\min\{2\eta^2, f^* D(\rho),\,  h^2\log \M/(2\Capacity^*)\}$ appearing in the exponent. For $\kappa \leq 2/\omega$ this quantity is bounded from below by $$\kappa_3\frac{\kappa^2}{(\log \M)^2},$$
where
$$\kappa_3 = \min\left\{1/(32snr^2),\, 1/(16\Capacity^*)\right\}.$$
For $\kappa > 2/\omega$ this is quantity is at least $$ \min\left \{\kappa_3\frac{\kappa^2}{(\log \M)^2}, \kappa_4\frac{\kappa}{\log \M} \right\},$$
with
$$\kappa_4 = \frac{1}{8(1 + 1/\C)snr^2 \log\M}.$$

Also notice that $\Capacity^* - R$ is at most $\Capacity^*\kappa/\log \M$.  Thus we have that
$$\min\{2\eta^2, f^* D(\rho),\,  h^2\log \M/(2\Capacity^*)\}$$ is at least
$$ \min\left \{\kappa_3(\Delta^*)^2,\, \kappa_4\Delta^* \right\}.$$
Further, recalling that $L_{\pi} = L\nu /(2\Capacity)$, we get that
$\kappa_2  = \nu /(2\Capacity),$ which is near $\nu/(2\Capacity)$.

Regarding the value of $m$, recall that $m$ is at most $2/(gap - \eta)$. Using the above we get that
$m$ is at most $(2\omega snr)\log \M$. Thus ignoring the $3m$, term $\kappa_1$ is polynomial in $\M$ with power
$2\omega snr \max\{c_0, 1/2\}$.

Part II is exactly similar to the use of Reed-Solomon codes in section  VI of our companion paper \cite{barron2010joseph}.
\end{proof}

\vspace{.3cm}
In the proof of Corollary \ref{cor:cor}, we let $\zeta_i$, for integer $i$, be constants that do not depend on $L,\, \M$ or $n$.
\begin{proof}[{\bf Proof of Corollary \ref{cor:cor}}] 
Recall $R_{tot} = (1 - \delta_{mis})R$.
 Using the form of $\delta_{mis}$ and $\C^*$ for Proposition \ref{mainthm}, one sees that $R_{tot}$ may be expressed as,
\eqq{R_{tot} = \left(1 - \zeta_1 \delta_M - \zeta_2 \frac{\kappa}{\log \M}\right)\C.\label{eq:rcor}}
Notice that $\M$ needs to be at least $\exp\{\zeta_3/\Delta^2\}$, where $\Delta = (\C - R_{tot})/\C$, for above to be satisfied. For a given section size $\M$, the size of $\kappa$ would be larger for a larger $\C - R_{tot}$. Choose $\kappa$ so that $\zeta_2\kappa/\log\M$ is at least $\zeta_1 \delta_M$, so that by \eqref{eq:rcor}, one has,
\eqq{\Delta \geq 2\zeta_2 \frac{\kappa}{\log\M}.\label{eq:deltalowbdd}}
Now following the proof of Proposition \ref{mainthm}, since the error exponent is of the form $const \min\{\kappa/\log\M, (\kappa/\log\M)^2\}$, one sees that it is at least $const \min\{\Delta, \Delta^2\}$ from \eqref{eq:deltalowbdd}.
\end{proof}

\section{Discussion}

The paper demonstrated that the sparse superposition coding scheme, with the adaptive successive decoder and outer Reed-Solomon code, allows one to communicate at any rate below capacity, with block error probability that is exponentially small in $L$. It is shown in \cite{barron2010ajoseph} that this exponent can be improved  by a factor of $\sqrt{\log\M}$ from using a Bernstein bound on the probability of the large deviation events analyzed here.

For fixed section size $\M$, the power allocation \eqref{eq:powerweuse} analyzed in the paper, allows one to achieve any $R$ that is at least a drop of $1/\sqrt{\log \M}$ of $\C$. In contrast, constant power allocation allows us to achieve rates up to a threshold rate $R_0 = .5 snr/(1 + snr)$, which is bounded by $1/2$, but is near $\C$ for small $snr$. In \cite{josephPhD}, \cite{barron2010ajoseph} the alternative power allocation \eqref{eq:modpower} is shown to allow for rates that is of order $\log \log \M/\log\M$ from capacity. Our experience shows that it is advantageous to use different power allocation schemes depending on the  regime for $snr$. When $snr$ is small, constant power allocation works better. The power allocation with leveling \eqref{eq:modpower} works better for moderately large $snr$, whereas \eqref{eq:powerweuse} is appropriate for larger $snr$ values.

One of the requirements of the algorithm, as seen in the proof of Corollary \ref{cor:cor}, is that for fixed rate $R_{tot}$, the section size $\M$ is needed to be exponential in $1/\Delta$, using power allocation \eqref{eq:powerweuse}. Here $\Delta$ is the rate drop from capacity. Similar results hold for the other power allocations as well. However, this was not the case for the optimal ML--decoder, as seen in \cite{barron2010joseph}. Consequently, it is still an open question whether there are practical decoders for the sparse superposition coding scheme which do not have this requirement on the dictionary size.


\appendices
\section{Proof of Lemma \ref{laterstepdist}} \label{sec:disanalzcalk}

For each $k \geq 2$, express $X$ as,
$$X = \frac{G_1}{\|G_1\|} \Zcal_1^{\trn} \,+\, \ldots \,+\, \frac{G_{k-1}}{\|G_{k-1}\|} \Zcal_{k-1}^{\trn} \,+\, \xi_k V_k,$$
where $\xi_k = [\xi_{k,k} : \ldots : \xi_{k,n}]$ is an $n \times (n - k + 1)$ orthonormal matrix, with columns $\xi_{k,i}$, for $i = k, \ldots, n$, being orthogonal to $G_1, \ldots, G_{k-1}$. There is flexibility in the choice of the $\xi_{k,i}$'s, the only requirement being that they depend on only $G_1,\ldots, G_{k-1}$ and no other random quantities. For convenience, we take these $\xi_{k,i}$'s to come from the Grahm-Schmidt orthogonalization of $G_1, \ldots, G_{k-1}$ and the columns of the identity matrix.

The matrix $V_k$, which is $(n - k+1)\times N$ dimensional, is also denoted as,
$$V_k = [V_{k,1} : V_{k,2} : \ldots : V_{k,N}].$$
The columns $V_{k,j}$, where $j = 1,\ldots, N$ gives the coefficients of the expansion of the column $X_j$ in the basis $\xi_{k,k}, \xi_{k, k +1}, \ldots, \xi_{k, n}$. We also denote the entries of $V_k$ as $V_{k,i,j}$, where $i = k, \ldots, n$ and $j = 1,\ldots, N$.


We prove that conditional on $\Fcal_{k -1}$, 
the distribution of $(V_{k,i,j}: j \in J_{k-1})$, for $i = k,\ldots,n$, is i.i.d. Normal $N(0,\Sigma_{k-1})$.  The proof is by induction.

The stated property is true initially, at $k\!=\!2$, from Lemma \ref{lem:firststepdist}. Recall that the rows of the matrix $U$ in Lemma \ref{lem:firststepdist} are i.i.d. $N(0,\Sigma_1)$. Correspondingly, since $V_2 = \xi_2^{\trn} U$, and since the columns of $\xi_2$ are orthonormal, and independent of $U$, one gets that the rows of $V_2$ are i.i.d $N(0,\Sigma_1)$ as well.


Presuming the stated conditional distribution property to be true at $k$, we conduct analysis, from which its validity will be demonstrated at $k + 1$.  Along the way the conditional distribution properties of $G_k$, $Z_{k,j}$, and $\Zcal_{k,j}$ are obtained as consequences.  As for $\hat w_k$ and $\delta_k$ we first obtain them by explicit recursions and then verify the stated form.

Denote as
\eqq{\gcoef_{k,i} =  -\sum_{j \in dec_{k-1}} \sqrt{P_j} \, V_{k,i,j} \quad\text{for $i = k,\ldots,n$ }.\label{eq:gcoeff}}
Also denote as,
$$\gcoef_k = (\gcoef_{k,k} ,\, \gcoef_{k,k+1}, \ldots, \gcoef_{k,n})^{\trn}.$$
The vector $\gcoef_k$ gives the representation of $G_k$ in the basis consisting of columns vectors of $\xi_k$. In other words, $G_k = \xi_k \gcoef_k$.

 Notice that,
\eqq{\Zcal_{k,j} =  V_{k,j}^{\trn} \gcoef_k \, /\|\gcoef_k\|.\label{eq:repzcalbasisone}}


Further, since $V_{k,j}$ and $\gcoef_k$ are jointly normal conditional on $\Fcal_{k-1}$, one gets, through conditioning on $\gcoef_k$ that,
$$V_{k,j} = b_{k- 1,j} \, \gcoef_k/\sigma_k + U_{k,j}.$$
Denote as $U_k = [U_{k,1} : U_{k,2} : \ldots : U_{k,N}]$, which is an $(n - k + 1)\times N$ dimensional matrix like $V_k$.  The entries of $U_k$ are denoted as $U_{k,i,j}$, where $i = k, \ldots, n$ and $j = 1,\ldots, N$. The matrix  $U_{k}$ is independent of $\gcoef_k$,  conditioned on $\Fcal_{k-1}$. Further, from the representation \eqref{eq:repzcalbasisone}, one gets that
\eqq{\Zcal_{k,j} \,=\,  b_{k-1,j} \, \|\gcoef_k\|/\sigma_k \,+ \, Z_{k,j},\label{eq:repzcalbasistwo}}
with,
$$Z_{k,j} = U_{k,j}^{\trn}\gcoef_k/\|\gcoef_k\|.$$

For the conditional distribution of $\gcoef_{k,i}$ given $\Fcal_{k-1}$, independence across $i$, conditional normality and conditional mean $0$ are properties inherited from the corresponding properties of the $V_{k,i,j}$.
To obtain the conditional variance of $\gcoef_{k,i}$, given by \eqref{eq:gcoeff}, use the conditional covariance $$\Sigma_{k-1}\!=\! I\!-\! \delta_{k-1}\delta_{k-1}^T$$ of $(V_{k,i,j}: j \in J_{k-1})$.  The identity part contributes $\sum_{j \in dec_{k-1}} P_j$ which is $(\hat q_{k-1} + \hat f_{k-1})P$; whereas, the $\delta_{k-1}\delta_{k-1}^T$ part, using the presumed form of $\delta_{k-1}$, contributes
an amount seen to equal $\nu_{k-1} [\sum_{j \in sent \intersect dec_{k-1}} P_j/P]^2 \, P$ which is $\nu_{k-1} \hat q_{k-1}^2 P$.
It follows that the conditional expected square for $\gcoef_{k,i}$, for $i = k, \ldots, n$ is
$$\sigma_k^2\,=\, \big[\,\hat q_{k-1} +\hat f_{k-1} - \hat q_{k-1}^2 \, \nu_{k-1}\,] \, P.$$

Conditional on $\Fcal_{k-1}$, the distribution of $$\|\gcoef_k\|^2= \sum_{i=k}^n (\gcoef_{k,i})^2$$ is that of $\sigma_k^2 \, \Chi_{n-k+1}^2$, a multiple of a Chi-square with $n\!-\!k+1$ degrees of freedom.

Next we compute $b_{k-1,j}$ in \eqref{eq:repzcalbasistwo}, which is the value of $$\E[V_{k,i,j} \gcoef_{k,i} |\Fcal_{k-1}]/\sigma_k$$ for any of the coordinates $i=k,\ldots,n$.  Consider the product $V_{k,i,j} \, \gcoef_{k,i}$ in the numerator. Using the representation of $\gcoef_{k,i}$ in \eqref{eq:gcoeff}, one has
$\E[V_{k,i,j} \gcoef_{k,i} |\Fcal_{k-1}]$ is
$$-\sum_{j' \in dec_{k-1}} \sqrt{P_{j'}} \big[1_{j'=j} - \delta_{k-1,j} \delta_{k-1,j'}\big],$$ which simplifies to $-\sqrt{P_j}\, \big[ 1_{j \in dec_{k-1}} - \nu_{k-1} \hat q_{k-1} 1_{j \: sent}\big]$.
So for $j$ in $J_k = J_{k-1}-dec_{k-1}$, we have the simplification
\eqq{b_{k-1,j} =  \,\frac{\hat q_{k-1} \, \nu_{k-1} \beta_j}{\sigma_k}.\label{eq:bkm1j}}
Also, for $j,j'$ in $J_k$, the product takes the form
$$b_{k-1,j} b_{k-1,j'}= \delta_{k-1,j} \delta_{k-1,j'} \frac{ \hat q_{k-1} \nu_{k-1}}{1+ \hat f_{k-1}/\hat q_{k-1} - \hat q_{k-1} \nu_{k-1}}.$$
Here the ratio simplifies to $\hat q_{k-1}^{adj} \nu_{k-1}/(1- \hat q_{k-1}^{adj} \nu_{k-1})$.

Now determine the features of the joint normal distribution of the $$U_{k,i,j}= V_{k,i,j} - b_{k-1,j} \, \gcoef_{k,i}/\sigma_k,$$ for $j \in J_k$, given $\Fcal_{k-1}$.  Given $\Fcal_{k-1}$, the $(U_{k,i,j}: j \in J_{k})$ are i.i.d across choices of $i$, but there is covariance across choices of $j$ for fixed $i$.  This conditional covariance $\E[U_{k,i,j} U_{k,i,j'} |\Fcal_{k-1}]$, by the choice of $b_{k-1,j}$, reduces to $\E[V_{k,i,j}V_{k,i,j'} |\Fcal_{k-1}] - b_{k-1,j} b_{k-1,j'}$ which, for $j \in J_k$, is $$1_{j=j'}-\delta_{k-1,j}\delta_{k-1,j'} - b_{k-1,j} b_{k-1,j'}.$$
That is, for each $i$, the $(U_{k,i,j} : j \in J_k)$ have the joint $N_{J_k}(0,\Sigma_{k})$ distribution, conditional on $\Fcal_{k-1}$, where $\Sigma_k$ again takes the form $1_{j,j'}-\delta_{k,j} \delta_{k,j'}$ where
$$\delta_{k,j} \delta_{k,j'}\, = \, \delta_{k-1,j} \delta_{k-1,j'} \left\{ 1\,+\, \frac{\hat q_{k-1}^{adj} \,\nu_{k-1}}{1 - \hat q_{k-1}^{adj} \nu_{k-1}} \right\},$$
for $j,j'$ now restricted to $J_{k}$. The quantity in braces simplifies to $1/(1-\hat q_{k-1}^{adj} \nu_{k-1})$.
Correspondingly, the recursive update rule for $\nu_k$ 
is
$$\nu_k \, = \, \frac{\nu_{k-1}}{1\,-\, \hat q_{k-1}^{adj} \, \nu_{k-1}}.$$

Consequently, the joint distribution for $(Z_{k,j} : j \in J_k)$ is determined, conditional on $\Fcal_{k-1}$.  It is also the normal $N(0,\Sigma_{k})$ distribution and $(Z_{k,j} : j \in J_k)$ is conditionally independent of the coefficients of $\gcoef_k$, given $\Fcal_{k-1}$. After all, the $$Z_{k,j} = U_{k,j}^{\trn} \gcoef_k \,/\|\gcoef_k\|$$ have this $N_{J_k}(0,\Sigma_{k})$ distribution, conditional on $\gcoef_k$ and $\Fcal_{k-1}$, but since this distribution does not depend on $\gcoef_k$ we have the stated conditional independence.

This makes the conditional distribution of the $\Zcal_{k,j}$, given $\Fcal_{k-1}$, as given in \eqref{eq:repzcalbasistwo}, a location mixture of normals with distribution of the shift of location determined by the Chi-square distribution of $\Chi_{n-k+1}^2=\|\gcoef_k\|^2/\sigma_k^2$.
Using the form of $b_{k-1,j}$, for $j$ in $J_k$, the location shift $b_{k-1,j} \, \Chi_{n-k+1}$ may be written $$ \sqrt {\hat w_{k} \, \cjrb} \, \big[\, \Chi_{n-k+1}/\sqrt{n} \, \big]\, 1_{j \: sent},$$ where $$\hat w_{k} = \frac{n\, b_{k,j}^2}{\cjrb}.$$ The numerator and denominator has dependence on $j$ through $P_j$, so canceling the $P_j$ produces a value for $\hat w_{k}$.  Indeed, $\cjrb=(P_j /P)\nu (L/R)\log \M$ equals $n(P_j/P)\nu$ 
and $b_{k-1,j}^2 = P_j \hat q_{k-1}^{adj} \, \nu_{k-1}^2/[1-\hat q_{k-1}^{adj} \nu_{k-1}]$. 
So this $\hat w_{k}$ may be expressed as 
$$\hat w_k \, = \,
\frac{\nu_{k-1}}{\nu} \,\, \frac{\hat q_{k-1}^{adj} \,\nu_{k-1}}{1 - \hat q_{k-1}^{adj} \nu_{k-1}},$$
which, using the update rule for $\nu_{k-1}$, is seen to equal
$$\hat w_k = 
\frac{ \nu_{k-1} - \nu_k}{\nu}.$$
Further, repeatedly apply $\nu_{k'}/\nu_{k'-1} \!=\! 1/(1\!-\!\hat q_{k'-1}^{adj} \, \nu_{k'-1})$, for $k'$ from $k$ to $2$, each time substituting the required expression on the right and simplifying to obtain
$$\frac{\nu_{k}}{\nu_{k-1}} \,\, = \,\, \frac{1 \,- \, (\hat q_1^{adj} + \ldots + \hat q_{k-2}^{adj} ) \,\, \nu \:\:\quad}{1 - (\hat q_1^{adj} + \ldots + \hat q_{k-2}^{adj} + \hat q_{k-1}^{adj} ) \, \nu}.$$
This yields $\nu_k = \nu \hat s_k$, which, when plugged into the expressions for $\hat w_k$, 
establishes the form of $\hat w_k$ as given in the lemma. 

We need to prove that conditional on $\Fcal_k$ that the rows of $V_{k+1}$, for $j \in J_k$, are i.i.d. $N_{J_k}(0, \Sigma_k)$. Recall that $V_{k+1} = \xi_{k+1}^{\trn} X$. Since the column span of $\xi_{k+1}$ is contained in that of $\xi_k$, one may also write $V_{k+1}$ as $\xi_{k+1}^{\trn}\xi_k V_k$. Similar to the representation $G_k = \xi_k \gcoef_k$, express the columns of $\xi_{k+1}$ in terms of the columns of $\xi_k$ as $\xi_{k+1} = \xi_k \xi_{k}^{coef}$, where $\xi_k^{coeff}$ is an $(n - k + 1)\times (n - k)$ dimensional matrix. Using this representation one gets that $V_{k+1} = (\xi_k^{coef})^{\trn} V_k$.

Notice that $\xi_k$ is $\Fcal_{k-1}$ measurable and that $\xi_{k+1}$ is $\sigma\{\Fcal_{k-1} , G_k\}$ measurable. Correspondingly, $\xi_k^{coef}$ is also $\sigma\{\Fcal_{k-1} , G_k\}$ measurable. Further, because of the orthonormality of $\xi_k$ and $\xi_{k+1}$, one gets that the columns of $\xi_k^{coef}$ are also orthonormal. Further, as $G_k$ is orthonormal to $\xi_{k+1}$, one has the $\gcoef_k$ is orthogonal to the columns of $\xi_k^{coef}$ as well.

Accordingly, one has that $V_{k+1} = (\xi_k^{coef})^{\trn} U_k$. Consequently, using the independence of $U_k$ and $\gcoef_k$, and the above, one gets that conditional on $\sigma\{\Fcal_{k-1} , G_k\}$, for $j \in J_k$, the rows of $V_{k+1}$ are i.i.d. $N_{J_k}(0, \Sigma_k)$.

We need to prove that conditional on $\Fcal_k$, the distribution of $V_{k+1}$ is as above, where recall that $\Fcal_k = \sigma\{\Fcal_{k-1}, G_k,  \Zcal_k\}$  or equivalently, $\sigma\{ \Fcal_{k-1}, G_k, Z_k\}$. This claim follows from the conclusion of the previous paragraph by noting that $V_{k+1}$ is independent of $Z_k = (\gcoef_k)^{\trn}U_k$, conditional on $\sigma\{\Fcal_{k-1}, G_k\}$ as $\gcoef_k$ is orthogonal to $\xi_k^{coeff}$.

This completes the proof of the Lemma \ref{laterstepdist}.


\section{The Method of Nearby Measures}\label{apnearby}

Let $b \in \Rbb^n$, be such that $\|b\|^2 = \nu <1$. Further, let $\PP$ be the probability measure of a
$N(0, \Sigma)$ random variable, where $\Sigma = I - bb^T$, and let $\QQ$ be the measure of a $N_n(0, I)$ random variable.  Then we have,
\begin{lem}\label{renyi} 
$$\PP[A] \, \le \, \QQ[A] e^{c_0},
$$
where $c_0 = -(1/2)\log(1 - \nu)$.
\end{lem}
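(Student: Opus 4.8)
The plan is to bound the likelihood ratio $d\PP/d\QQ$ uniformly by the constant $e^{c_0}$; once that is in hand, for any measurable $A$ one has $\PP[A] = \int_A (d\PP/d\QQ)\,d\QQ \le e^{c_0}\,\QQ[A]$, which is exactly the claimed inequality. Note $\QQ$ dominates $\PP$ since both are non-degenerate Gaussians on $\Rbb^n$ with densities against Lebesgue measure and the $\QQ$-density is strictly positive everywhere.

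First I would record that $\Sigma = I - bb^{\trn}$ is positive definite: acting on the orthogonal complement of $b$ it is the identity, and $\Sigma b = (1-\|b\|^2)b = (1-\nu)b$, so its eigenvalues are $1$ with multiplicity $n-1$ and $1-\nu>0$. Hence $\Sigma$ is invertible and $\PP$ has the usual Gaussian density, so for every $z \in \Rbb^n$,
$$\frac{d\PP}{d\QQ}(z) \;=\; \frac{1}{|\Sigma|^{1/2}}\,\exp\!\Big(-\tfrac12\, z^{\trn}(\Sigma^{-1}-I)\,z\Big).$$
Next I would evaluate the two pieces using the rank-one structure of $\Sigma$. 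By the matrix determinant lemma, $|\Sigma| = |I - bb^{\trn}| = 1 - b^{\trn}b = 1-\nu$, so $|\Sigma|^{-1/2} = (1-\nu)^{-1/2} = e^{c_0}$. By the Sherman--Morrison formula, $\Sigma^{-1} = I + bb^{\trn}/(1-\nu)$, so that $\Sigma^{-1}-I = bb^{\trn}/(1-\nu)$ is positive semidefinite and $z^{\trn}(\Sigma^{-1}-I)z = (b^{\trn}z)^2/(1-\nu) \ge 0$. Therefore the exponential factor is at most $1$, giving $d\PP/d\QQ(z) \le e^{c_0}$ for all $z$, and integrating this bound over $A$ against $\QQ$ completes the proof.

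There is essentially no obstacle in this particular lemma: the whole point is the sign observation that the ``extra'' quadratic form appearing in the exponent of the $N(0,\Sigma)$ density relative to the $N(0,I)$ density is nonnegative, so only the determinant normalization $(1-\nu)^{-1/2}$ survives in the worst case. The one place to be careful is to use the hypothesis $\nu<1$ genuinely---both to guarantee $\Sigma$ is invertible and to make the Sherman--Morrison and determinant identities valid---and, when this lemma is invoked inside Lemma~\ref{measurebdd2} step by step (each step contributing a factor $e^{c_0}$, for the total $e^{k c_0}$), to check that the rank-one perturbation used at step $k$ has squared norm at most $\nu$ so that the same constant $c_0$ applies uniformly across steps.
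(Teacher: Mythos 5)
Your proof is correct and follows essentially the same route as the paper: both arguments bound the likelihood ratio $p(z)/q(z)$ uniformly by the ratio of normalizing constants $(1-\nu)^{-1/2}=e^{c_0}$ and then integrate over $A$. The only difference is mechanical---the paper diagonalizes via an orthogonal transformation so that the measures differ in a single coordinate, while you compute $|\Sigma|=1-\nu$ and $\Sigma^{-1}-I=bb^{\trn}/(1-\nu)\succeq 0$ directly via the determinant lemma and Sherman--Morrison---and both correctly identify that the maximum ratio occurs where the extra quadratic form vanishes.
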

\begin{proof} If $p(z),\, q(z)$, denote the densities of the random variables with measures $\PP$ and $\QQ$ respectively,  then $\max_z p(z)/q(z)$ equals $1/(1 - \nu)^{1/2}$, which is also $e^{c_0}$. From the densities $N(0,I\!-\!b b^T)$ and $N(0,I)$ this claim can be established from noting that after an orthogonal transformation these measures are only different in one variable, which is either $N(0,1\!-\!\nu)$ or $N(0,1)$, for which the maximum ratio of the densities occurs at the origin and is simply the ratio of the normalizing constants.

Correspondingly,
\algg{\PP(A) &= \int_A p(z)\, dz\\
             &\leq e^{c_0} \int_A q(z) \, dz =  \QQ(A) e^{c_0}.
}
This completes the proof of the lemma.
\end{proof}


\vspace{0.3cm}
\noindent{\bf Proof of Lemma \ref{measurebdd2}:}
We are to show that for events $A$ determined by the random variables \eqref{eq:randdet}, the probability $\PP[A]$ is not more than $\QQ[A] e^{k c_0}$. Write the probability as an iterated expectation conditioning on $\Fcal_{k-1}$.
That is, $\PP[A] = \E \left[ \PP[A|\Fcal_{k-1}]\right]$. To determine membership in $A$, conditional on $\Fcal_{k-1}$, we only need $Z_{k,J_k} = (Z_{k,j} : j \in J_k)$ where $J_k$ is determined by $\Fcal_{k-1}$. Thus
$$\PP[A] = \E_{\PP} \left[ \PP_{\Chi_{d_k}^2,Z_{k,J_k}|\Fcal_{k-1}}\big[A]\right],$$
where we use the subscript on the outer expectation to denote that it is with respect to $\PP$ and the subscripts on the inner conditional probability to indicate the relevant variables. For this inner probability switch to the nearby measure $\QQ_{\Chi_{d_k},Z_{k,J_k}|\Fcal_{k-1}}$.  These conditional measures agree concerning the distribution of the independent $\Chi_{d_k}^2$, 
so what matters is the ratio of the densities corresponding to $\PP_{Z_{k,J_k}|\Fcal_{k-1}}$ and $\QQ_{Z_{k,J_k}|\Fcal_{k-1}}$.

We claim that the ratio of these densities in bounded by $e^{c_0}$.
To see this, recall that from Lemma \ref{laterstepdist} that $\PP_{Z_{k, J_k}| \Fcal_{k-1}}$ is $N_{J_k}(0 , \Sigma_{k}) $, with $\Sigma_k = I - \delta_k\delta_k^T$.  Now $$||\delta_k||^2 = \nu_k\sum_{j \in sent \cap J_k} P_j/P $$
 which is $  (1 - (\hat q_1 + \ldots + \hat q_{k-1}))\nu_k$. Noting that $\nu_k = \hat s_k \nu$ and
$\hat s_k (1 - (\hat q_1 + \ldots + \hat q_{k-1})) $ is at most 1, we get that $||\delta_k||^2 \leq \nu$.

So with the switch of conditional distribution, we obtain a bound with a multiplicative factor of $e^{c_0}$.  The bound on the inner expectation is then a function of $\Fcal_{k-1}$, so the conclusion follows by induction.  This completes the proof of Lemma \ref{measurebdd2}.

\section{Proof of Lemma \ref{progress}}\label{proofprogress}

For $k\!=\!1$, the $q_{1,1} = g(0)-\eta$ is at least $gap-\eta$.
Consider $q_{1,k} =g_L(q_{k-1}^{adj,tot}) -\eta$, for $k > 1$. Notice that
$$q_{k-1}^{adj,tot} \geq \sum_{k' = 1}^{k-1} q_{k'} - (k-1)f,$$
using $q/(1 + f/q) \geq q - f$. Now, from the definition of $q_k$ in \eqref{eq:qk}, one has
$$\sum_{k' = 1}^{k-1} q_{k'} = q_{1,k-1} - (k-1)(f +1/L_\pi).$$
Consequently,
\eqq{q_{k-1}^{adj,tot} \geq q_{1,k-1} - (k-1)(2f +1/L_\pi)\label{eq:lowerbddqkadj}.}
Denote $m$ as the first $k$ for which $q_{k-1}^{adj,tot}$ exceeds $x_r$. For any $k < m$, as
$q_{k-1}^{adj,tot} \leq x_r$, using the fact that $g_L$ is accumulative till $x_r$, one gets that
 $$q_{1,k} \geq q_{k-1}^{adj,tot} + gap - \eta.$$
Accordingly, using \eqref{eq:lowerbddqkadj}, one gets that
\eqq{ q_{1,k} \geq q_{1,k-1} - (k-1)(2f +1/L_\pi) + gap - \eta,
}
or in other words, for $k < m$, one has
$$q_{1,k} - q_{1,k-1} \geq -m(2f +1/L_\pi) + gap -\eta.$$
We want to arrange the difference $q_{1,k} - q_{1,k-1}$ to be at least a positive quantity which we denote by $\tL$. Notice that this gives $m \leq 1/\tL$, since the $q_{1,k}$'s are bounded by 1. Correspondingly, we solve for $\tL$ in,
$$\tL = -(1/\tL)(2f +1/L_\pi) + gap -\eta,$$
and see that the solution is
\eqq{\tL = \frac{(gap - \eta)}{2}\left[1 + \left(1 - 4\frac{(2f + 1/L_\pi)}{(gap - \eta)^2}\right)^{1/2}\right],\label{eq:tL}}
which is well defined since $f$ satisfies \eqref{eq:fcriterion}. Also notice that from \eqref{eq:tL} that
$\tL \geq (gap - \eta)/2$, making $m \leq 2/(gap -\eta)$.
Also $q_{1,m} = g_L(q_{m-1}^{adj,tot}) - \eta$, which is at least $g_L(x_r) - \eta$ since $g_L$ is increasing. The latter quantity is at least $x_r + gap - \eta$.

\section{Proof of Lemma \ref{lem:paccritlem}} \label{appfpaccritlem}

We prove the lemma by first showing that  \eqq{{A}_{1,m}  \subseteq  \tilde{A}_{1,m}\cup B_{1,m} \cup \ah{1,m}\label{eq:firstshow}.}
Next, we prove that $B_{1,m}$ is contained in $\tilde{B}_{1,m}$. This will prove the lemma.

We start by showing \eqref{eq:firstshow}. We first show that on the set
 \eqq{\{\hat q_{1,k} \geq q_{1,k}\}\cap \aee{k-1}^c\cap\ah{k}^c,\label{eq:setundercond}}
 condition \eqref{eq:pacingworkcond}, that is,
 \eqq{\sum_{j \in dec_{1,k-1}} \pi_j + \sum_{j \in J - dec_{1,k-1}} \pi_j 1_{\{\Zcal_{k,j}^{comb} \geq \tau\}} \geq q_{1,k},\label{eq:pacingworkcondp}}
 is satisfied. Following the arguments of subsection \ref{subsec:modpacing} regarding pacing the steps, this will ensure that the size of the decoded set after $k$ steps, that is $size_{1,k}$, is near $q_{1,k}$, or more precisely
\eqq{q_{1,k} -1/L_\pi < size_{1,k} \leq q_{1,k},\label{eq:paccriterionp}}
as given in \eqref{eq:paccriterion}.

Notice that the left side of \eqref{eq:pacingworkcondp} is at least
$$\sum_{j \in \sent} \pi_j 1_{\{\Zcal_{k,j}^{comb} \geq \tau\}},$$
since the sum in \eqref{eq:pacingworkcondp} is over all terms in $j$, including those in $\sent$, and further, for each term $j$, the contribution to the sum is at least $\pi_j 1_{\{\Zcal_{k,j}^{comb} \geq \tau\}}$.

Further, using the fact that
 $$H_{k,j} \subseteq \{\Zcal_{k,j}^{comb} \geq \tau\} \quad \text{on $\aee{k-1}^c\cap\ah{k}^c$}$$
   from \eqref{eq:hkjsentlow}, one gets that,
   $$\sum_{j \in \sent} \pi_j 1_{\{\Zcal_{k,j}^{comb} \geq \tau\}} \geq \hat q_{1,k}\quad\quad\text{on  $\aee{k-1}^c\cap\ah{k}^c$ }.$$
Correspondingly, on the set \eqref{eq:setundercond} the inequality \eqref{eq:pacingworkcondp}, and consequently the relation \eqref{eq:paccriterionp} also holds.

Next, for each $k$, denote
\eqq{\aeet{k} = \tilde{A}_{1,k}\cup\af{1,k} \cup\ah{1,k}\label{eq:tildee}.}
We claim that for each $k = 1,\ldots, m$, one has$$\aeet{k}^c\, \subseteq \,\aq{1,k}^c.$$ We prove the claim through induction on $k$.
Notice that the claim for $k = m$ is precisely statement \eqref{eq:firstshow}. Also, the claim implies that
$\aeet{k}^c\subseteq\aee{k}^c$, for each $k$, where recall that $\aee{k} = \aq{1,k}\cup\af{1,k}\cup\ah{1,k}$.

We first prove the claim for $k= 1$. We see that,
$$\aeet{1}^c = \{\hat q_{1,1} \geq q_{1,1}\} \cap\{\hat f_1 \leq f\}\cap \ah{1}^c.$$
Using the arguments above, we see that on $\{\hat q_{1,1} \geq q_{1,1}\}\cap\ah{1}^c$, the relation $q_{1,1} - 1/L_{\pi} <  size_{1,1}$ holds. Now, since $size_{1,1} = \hat q_1 + \hat f_1$, one gets that
$$\hat q_1 \geq q_{1,1} - \hat f_1 - 1/L_{\pi}\quad\text{on $\{\hat q_{1,1} \geq q_{1,1}\}\cap\ah{1}^c$}.$$
The right side of the aforementioned inequality is at least $q_1$ on $\aeet{1}^c$, using $\hat f_1 \leq f$. Consequently, the claim is proved for $k = 1$.

Assume that the claim holds till $k -1$, that is, assume that $\aeet{k-1}^c \subseteq \aq{1,k-1}^c$. We now prove that $\aeet{k}^c \subseteq \aq{1,k}^c$ as well. Notice that
$$\aeet{k}^c = \{\hat q_{1,k} \geq q_{1,k}\}\cap \aeet{k-1}^c\cap\ah{k}^c\cap\{\hat f_k \leq f\},$$
which, using $\aeet{k-1}^c \subseteq \aee{k-1}^c$ from the induction hypothesis, one gets that
$$q_{1,k} - 1/L_{\pi} <  size_{1,k} \quad\text{ on $\{\hat q_{1,k} \geq q_{1,k}\}\cap \aeet{k-1}^c\cap\ah{k}^c$} .$$
Accordingly,
$$q_{1,k} - 1/L_{\pi} <  size_{1,k} \leq q_{1,k} \quad\text{ on $\aeet{k}^c$} .$$
Further, as $\aeet{k}^c$ is contained in $\aeet{k-1}^c$, one gets that
$$q_{1,k-1} - 1/L_\pi <  size_{1,k-1} \leq q_{1,k-1} \quad\text{on $\aeet{k}^c$}.$$
Consequently, combining the above, one has
\algg{size_{1,k} - size_{1,k-1} &= \hat q_k + \hat f_k\\
          &\geq q_{1,k} - q_{1,k-1} - f - 1/L_{\pi}\quad\text{on $\aeet{k}^c$}.
}
Consequently, on $\aeet{k}^c$, we have $\hat q_k \geq q_k$,
using the expression for $q_k$ given in \eqref{eq:qk}, and the fact that $f_k \leq f$ on $\aeet{k}^c$. Combining this with the fact that $\aeet{k}^c$ is contained in $A_{1,k-1}^c$, since $\aeet{k}^c \subseteq \aeet{k-1}^c$ and $\aeet{k-1}^c \subseteq A_{1,k-1}^c$ from the induction hypothesis, one gets that
$\aeet{k}^c \subseteq A_{1,k}^c$.

This proves the induction hypothesis. In particular, it holds for $k = m$, which, taking complements, proves the statement \eqref{eq:firstshow}.

Next, we show that $B_{1,m} \subseteq \tilde{B}_{1,m}$. This is straightforward since recall that from subsection \ref{subsec:targetfalse} that one has $\hat f_k \leq \fkup{k}$, for each $k$. Correspondingly, $B_{1,m}$ is contained in $\tilde{B}_{1,m}$.

Consequently, from \eqref{eq:firstshow} and the fact that $B_{1,m} \subseteq \tilde{B}_{1,m}$, one gets that
$\aee{m}$ is contained in $\tilde{A}_{1,m} \cup \tilde{B}_{1,m} \cup \ah{1,m}$. This proves the lemma.

\section{Tails for weighted Bernoulli sums} \label{aptailber}

\vspace{0.3cm}
\begin{lem}\label{bernoullisums} Let $W_j$, $1 \leq j \leq N$ be $N$  independent $\mbox{Bernoulli}(r_j)$ random variables. Furthermore, let $\alpha_j,\, 1\leq j \leq N$ be non-negative weights that sum to 1 and let $ N_\alpha = 1/\max_{j}\alpha_j$.
Then the weighted sum $\hat{r} = \sum_{j}\alpha_j W_j$ which has mean given by $r^* = \sum_{j} \alpha_j r_j$, satisfies the following large deviation inequalities.
For any $r$ with $0< r < r^*$,
\eqs{P(\hat{r} < r ) \leq \exp\left\{-N_\alpha D(r\|r^*)\right\}
}
and for any $\tilde{r}$ with  $r^* < \tilde{r} < 1$,
\eqs{P(\hat{r} > \tilde{r} ) \leq \exp\left\{-N_\alpha D(\tilde{r}\|r^*)\right\}
}
where $D(r\|r^*)$ denotes the relative entropy between Bernoulli random variables of success parameters $r$ and $r^*$.
\end{lem}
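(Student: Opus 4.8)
The plan is to run a standard Chernoff/exponential-moment argument, but with the twist that the weights $\alpha_j$ are unequal, so the exponent one naturally gets is a \emph{weighted} sum of individual relative entropies rather than $N_\alpha D(r\|r^*)$; the work is to pass from the former to the latter by a convexity argument. I will treat the lower-tail bound $P(\hat r < r)$ in detail and remark that the upper-tail bound is entirely symmetric (replace $W_j$ by $1-W_j$, $r_j$ by $1-r_j$, $r$ by $1-r$, which swaps the two inequalities and leaves $D$ and $N_\alpha$ unchanged).

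First I would introduce, for $s>0$, the exponential moment bound: since $\hat r = \sum_j \alpha_j W_j$ with the $W_j$ independent, $P(\hat r < r) = P(e^{-s\hat r} > e^{-sr}) \le e^{sr}\,\E e^{-s\hat r} = e^{sr}\prod_j \E e^{-s\alpha_j W_j} = e^{sr}\prod_j\bigl(1 - r_j + r_j e^{-s\alpha_j}\bigr)$ by Markov's inequality and independence. Taking logs, $\log P(\hat r<r) \le sr + \sum_j \log\bigl(1 - r_j(1-e^{-s\alpha_j})\bigr) \le sr - \sum_j r_j(1-e^{-s\alpha_j})$, using $\log(1-t)\le -t$. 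To handle the nonlinearity in $\alpha_j$, I would substitute $s = N_\alpha t$ for a free parameter $t>0$ and use convexity of $u\mapsto e^{-u}$: writing $s\alpha_j = t\,(N_\alpha\alpha_j)$ with $N_\alpha\alpha_j \in [0,1]$ and $1-e^{-tx}$ concave in $x\in[0,1]$, we get $1 - e^{-s\alpha_j} = 1-e^{-t(N_\alpha\alpha_j)} \ge (N_\alpha\alpha_j)(1-e^{-t})$. Hence $\sum_j r_j(1-e^{-s\alpha_j}) \ge N_\alpha(1-e^{-t})\sum_j\alpha_j r_j = N_\alpha(1-e^{-t})\,r^*$, and also $sr = N_\alpha t r$. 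This yields $\log P(\hat r<r) \le N_\alpha\bigl[tr - (1-e^{-t})r^*\bigr]$ for every $t>0$.

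The last step is to optimize the bracket over $t>0$. Setting the derivative $r - e^{-t}r^*$ to zero gives $e^{-t} = r/r^*$, i.e. $t = \log(r^*/r) > 0$ since $r<r^*$; substituting back, $tr - (1-e^{-t})r^* = r\log(r/r^*) - r^* + r = -\bigl[r\log\frac{r}{r^*} + (1-r)\log\frac{1-r}{1-r^*}\bigr] + \bigl[(1-r)\log\frac{1-r}{1-r^*} - r^* + r\bigr]$; I would instead just note directly that $r\log(r/r^*) + r^* - r$ is the Poisson relative entropy, which is a lower bound for the Bernoulli relative entropy $D(r\|r^*) = r\log\frac{r}{r^*} + (1-r)\log\frac{1-r}{1-r^*}$ (the inequality $(1-r)\log\frac{1-r}{1-r^*} \ge r^*-r$ follows from $\log t \ge 1 - 1/t$ applied at $t=(1-r)/(1-r^*)$). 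Thus $tr - (1-e^{-t})r^* \le -D(r\|r^*)$ at the optimal $t$, giving $P(\hat r < r)\le e^{-N_\alpha D(r\|r^*)}$ as claimed.

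The main obstacle is purely bookkeeping: making sure the concavity inequality $1-e^{-t N_\alpha\alpha_j} \ge (N_\alpha\alpha_j)(1-e^{-t})$ is applied in the right direction (it needs $N_\alpha\alpha_j\le 1$, which is exactly the definition $N_\alpha = 1/\max_j\alpha_j$) and that the final pass from the Poisson exponent to the Bernoulli relative entropy $D(r\|r^*)$ is a genuine strengthening rather than a weakening. Everything else is the textbook Chernoff computation. For the upper tail one repeats verbatim with $s<0$ replaced by the substitution to $1-W_j$, or equivalently runs the same argument with $e^{+s\hat r}$; the optimal $t$ becomes $\log\frac{\tilde r(1-r^*)}{r^*(1-\tilde r)}$ and the bracket evaluates to $-D(\tilde r\|r^*)$ by the same Poisson-to-Bernoulli comparison.
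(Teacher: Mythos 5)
There is a genuine gap, and it is a directional one, in the last step of your lower-tail argument. After the substitution $s=N_\alpha t$ and the optimization $e^{-t}=r/r^*$, your bracket $tr-(1-e^{-t})r^*$ evaluates to $-\bigl(r\log(r/r^*)+r^*-r\bigr)=-D_{Poi}(r\|r^*)$, the negative \emph{Poisson} relative entropy. To conclude $P(\hat r<r)\le e^{-N_\alpha D(r\|r^*)}$ you would need $D_{Poi}(r\|r^*)\ge D(r\|r^*)$, but the inequality you yourself verify, $(1-r)\log\frac{1-r}{1-r^*}\ge r^*-r$, shows exactly the opposite: $D(r\|r^*)=D_{Poi}(r\|r^*)+\bigl[(1-r)\log\frac{1-r}{1-r^*}-(r^*-r)\bigr]\ge D_{Poi}(r\|r^*)$, with strict inequality unless $r=r^*$. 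So your computation only yields the strictly weaker bound $e^{-N_\alpha D_{Poi}(r\|r^*)}$, not the stated one. The loss is traceable to the linearization $\log(1-t)\le -t$, which discards precisely the $(1-r)\log\frac{1-r}{1-r^*}$ contribution to the Bernoulli divergence. (The same issue afflicts your upper tail; note also that the ``optimal $t$'' you quote there, $\log\frac{\tilde r(1-r^*)}{r^*(1-\tilde r)}$, is the optimizer of the full Bernoulli log-MGF, not of your linearized bound, whose optimizer is $\log(\tilde r/r^*)$ --- a sign that two different computations have been conflated.)

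The Chernoff route can be repaired, but only by keeping the full log-moment-generating function. Write $s\alpha_j=\theta x_j$ with $\theta=s/N_\alpha$ and $x_j=N_\alpha\alpha_j\in[0,1]$; the map $x\mapsto\log\bigl(1-r_j+r_je^{-\theta x}\bigr)$ is convex and vanishes at $x=0$, hence is at most $x_j\log\bigl(1-r_j+r_je^{-\theta}\bigr)$ on $[0,1]$; then concavity of $\rho\mapsto\log\bigl(1-\rho(1-e^{-\theta})\bigr)$ and Jensen with weights $\alpha_j$ give $\log\E e^{-s\hat r}\le N_\alpha\log\bigl(1-r^*(1-e^{-\theta})\bigr)$, and minimizing $\theta r+\log\bigl(1-r^*(1-e^{-\theta})\bigr)$ over $\theta>0$ produces the full $-D(r\|r^*)$. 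This is in any case quite different from the paper's proof, which avoids moment generating functions altogether: it uses Csisz\'ar's identity $-\log P(\mathcal{A})=D\bigl(P_{\underline W|\mathcal{A}}\|P_{\underline W}\bigr)$, tensorization of relative entropy over the independent coordinates, the bound $\sum_j D_j\ge N_\alpha\sum_j\alpha_j D_j$, and convexity of relative entropy to reduce to two Bernoulli mixtures with parameters $r_e\le r$ and $r^*$. As a side remark, your weaker Poisson-exponent bound would actually suffice for the false-alarm events in Theorem \ref{reliabddthm}, where the paper immediately passes to $D_{Poi}$ anyway, but not verbatim for the correct-detection events, where Pinsker's inequality for the Bernoulli divergence is invoked.
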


\vspace{0.3cm}
\noindent {\bf Proof of Lemma \ref{bernoullisums}:} Let's prove the first part. The proof of the second part is similar.

Denote the event $$\mathcal{A} = \{\underline{W} : \sum_{j}\alpha_j W_j \leq r \}$$ with $\underline{W}$ denoting the
$N$-vector of $W_j$'s. Proceeding as in \citet{csiszar1984sanov} we have that
$$P\left( \mathcal{A} \right) \,=\, \exp\{-D\big(P_{\underline{W}|\mathcal{A}}\|P_{\underline{W}}\big)\}\quad\quad\quad\quad$$
$$\quad\quad \leq \, \exp\big\{-\sum_j D\big(P_{W_j|\mathcal{A}}||P_{W_j}\big)\big\}$$
Here $P_{\underline{W}|\mathcal{A}}$ denotes the conditional distribution of the vector $\underline{W}$ conditional on the event $\mathcal{A}$ and  $P_{W_j|\mathcal{A}}$ denotes the associated marginal distribution of $W_j$ conditioned on $\mathcal{A}$. Now
\eqs{\sum_j D\big(P_{W_j|\mathcal{A}}\|P_{W_j}\big) \geq N_\alpha \sum_j \alpha_{j}D\big(P_{W_j|\mathcal{A}}\|P_{W_j}\big).
}
Furthermore, the convexity of the relative entropy implies that
\eqs{\sum_{j} \alpha_j D(P_{W_j|\mathcal{A}}\parallel P_{W_j}) \geq D\left(\sum_j \alpha_j P_{W_j|\mathcal{A}}\parallel\sum_j \alpha_j P_{W_j}\right) .
}
The sums on the right denote $\alpha$ mixtures of distributions $P_{W_j|\mathcal{A}}$ and $P_{W_j}$, respectively, which are distributions on $\{0,1\}$, and hence these mixtures are also distributions on $\{0,1\}$.  In particular, $\sum_j \alpha_j P_{W_j}$ is the Bernoulli($r^*$) distribution and
$\sum_{j}\alpha_j P_{W_j|\mathcal{A}}$ is the
$\mbox{Bernoulli}(r_e)$ distribution where $$r_e =  \mbox{E}\big[\,\sum_j\alpha_j W_j \,\big| \,\mathcal{A}\,\big]=
\mbox{E}\big[\,\hat r \,\big| \,\mathcal{A}\,\big] .$$
But in the event $\mathcal{A}$ we have $\hat r \le r$ so it follows that $r_e \leq r$. As $r < r^*$ this yields $D(r_e \parallel  r^*) \geq
D(r\parallel r^*)$.  This completes the proof of Lemma \ref{bernoullisums}.

\section{Lower Bounds on $D$} \label{lowerboundD}

\vspace{0.3cm}
\begin{lem}\label{lowerbounds} For $p\ge p^*$, the relative entropy between Bernoulli$(p)$ and Bernoulli$(p^*)$ distributions has the succession of lower bounds
$$D_{Ber}(p\|p^*) \ge D_{Poi}(p\|p^*)\ge 2\big(\sqrt p - \!\sqrt {p^*}\,\big)^2 \ge \frac{(p-p^*)^2}{2p}$$
where $D_{Poi}(p\|p^*) = p\log p/p^* + p^*-p$ is also recognizable as the relative entropy between Poisson distributions of mean $p$ and $p^*$ respectively.
\end{lem}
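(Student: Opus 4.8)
The plan is to establish the chain of four inequalities one link at a time, each link reducing to the elementary convexity fact that $\psi(t) := t\log t - t + 1 \ge 0$ for all $t>0$, with equality only at $t=1$ (equivalently $\log x \le x-1$). Throughout I will write $D_{Ber}(p\|p^*) = p\log(p/p^*) + (1-p)\log\frac{1-p}{1-p^*}$ and $D_{Poi}(p\|p^*) = p\log(p/p^*) + p^* - p$, and record the identity $D_{Poi}(\lambda_1\|\lambda_2) = \lambda_2\,\psi(\lambda_1/\lambda_2) \ge 0$ for positive $\lambda_1,\lambda_2$.

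First I would handle $D_{Ber}(p\|p^*) \ge D_{Poi}(p\|p^*)$. Subtracting, the terms $p\log(p/p^*)$ cancel, and using $p^* - p = (1-p)-(1-p^*)$ the remainder is
\[
D_{Ber}(p\|p^*) - D_{Poi}(p\|p^*) \;=\; (1-p)\log\frac{1-p}{1-p^*} - (1-p) + (1-p^*) \;=\; D_{Poi}(1-p\,\|\,1-p^*) \;\ge\; 0,
\]
i.e.\ the leftover is again a Poisson-type divergence, now in the complementary parameters; note this link does not even need $p\ge p^*$.

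Next, for $D_{Poi}(p\|p^*) \ge 2(\sqrt p - \sqrt{p^*})^2$, I would substitute $t = \sqrt{p/p^*}\ge 1$, which gives $D_{Poi}(p\|p^*) = p^*\big(2t^2\log t + 1 - t^2\big)$ and $2(\sqrt p - \sqrt{p^*})^2 = 2p^*(t-1)^2$, so it suffices to show $\phi(t) := 2t^2\log t + 1 - t^2 - 2(t-1)^2 \ge 0$ on $[1,\infty)$. Here $\phi(1)=0$ and a direct differentiation gives $\phi'(t) = 4\big(t\log t - t + 1\big) = 4\psi(t) \ge 0$, so $\phi$ is nondecreasing, hence nonnegative, on $[1,\infty)$; this is the one place the hypothesis $p\ge p^*$ enters. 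Finally, $2(\sqrt p - \sqrt{p^*})^2 \ge (p-p^*)^2/(2p)$ follows by writing $p - p^* = (\sqrt p - \sqrt{p^*})(\sqrt p + \sqrt{p^*})$: the inequality is then equivalent to $(\sqrt p + \sqrt{p^*})^2 \le 4p$, i.e.\ $\sqrt{p^*} \le \sqrt p$. The ``Poisson'' label on $D_{Poi}$ I would justify by the one-line computation $\sum_{k\ge 0} e^{-p}\frac{p^k}{k!}\log\frac{e^{-p}p^k/k!}{e^{-p^*}(p^*)^k/k!} = (p^* - p) + \log(p/p^*)\,\E_{\mathrm{Poi}(p)}[K] = p\log(p/p^*) + p^* - p$, using $\E_{\mathrm{Poi}(p)}[K]=p$.

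None of the links is a serious obstacle; the only spot needing a moment's care is the monotonicity argument for $\phi$ in the middle link, and the observation in the first link that the residual $D_{Ber}-D_{Poi}$ is itself a divergence of exactly the same shape in $1-p$ and $1-p^*$, so that its nonnegativity is immediate rather than requiring a separate estimate.
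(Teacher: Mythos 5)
Your proof is correct and follows essentially the same route as the paper: the same decomposition of $D_{Ber}-D_{Poi}$ as a Poisson-type divergence in the complementary parameters, the same substitution $s^2=p/p^*$ with $F(s)=2s^2\log s+1-s^2$ for the middle link (you verify $F(s)-2(s-1)^2$ is nondecreasing via its first derivative where the paper uses a second-order Taylor bound $F''\ge 4$ — an immaterial difference), and the same elementary algebra for the final link. All steps check out.
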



\begin{proof}
   The Bernoulli relative entropy may be expressed as the sum of two positive terms, one of which is $p \log p/p^* +p^*-p$, and the other is the corresponding term with $1\!-\!p$ and $1\!-\!p^*$ in place of $p$ and $p^*$, so this demonstrates the first inequality.  Now suppose $p>p^*$. Write $p\log p/p^* +p^*-p$ as $p^* F(s)$ where $F(s)=2 s^2 \log s + 1-s^2$ with $s^2=p/p^*$ which is at least $1$.  This function $F$ and its first derivative $F'(s)=4s\log s$ have value equal to $0$ at $s=1$, and its second derivative $F''(s)=4+4\log s$ is at least $4$ for $s\ge 1$.  So by second order Taylor expansion $F(s)\ge 2(s-1)^2$ for $s\ge 1$.  Thus $p\log p/p^* +p^*-p$ is at least $2\big(\sqrt{p}-\sqrt{p^*}\,\big)^2$.  Furthermore $2(s-1)^2 \ge (s^2-1)^2/(2s^2)$ as, taking the square root of both sides, it is seen to be equivalent to $2(s-1) \ge s^2-1$, which, factoring out $s-1$ from both sides, is seen to hold for $s\ge 1$.
From this we have the final lower bound $(p-p^*)^2/(2p).$
 \end{proof}

\section*{Acknowledgment}

We thank Dan Spielman, Edmund Yeh, Mokshay Madiman and Imre Teletar for helpful conversations.  We thank David Smalling who completed a number of simulations of earlier incarnations of the decoding algorithm for his Yale applied math senior project 
in spring term of 2009 and Yale statistics masters student Creighton Hauikulani who took the simulations further in 2009 and 2010. 


\end{document}